\DeclareMathOperator{\poly}{poly}
\newcommand{\be}{\begin{equation}}
\newcommand{\ee}{\end{equation}}
\newcommand{\ba}{\begin{aligned}}
\newcommand{\ea}{\end{aligned}}
\newcommand{\bc}{\begin{center}}
\newcommand{\ec}{\end{center}}
\newcommand{\beq}{\begin{equation}}
\newcommand{\eeq}{\end{equation}}
\newcommand{\beqq}{\begin{equation*}}
\newcommand{\eeqq}{\end{equation*}}
\newcommand{\beqa}{\begin{align}}
\newcommand{\eeqa}{\end{align}}
\newcommand{\barr}{\begin{array}}
\newcommand{\earr}{\end{array}}
\newcommand{\bi}{\begin{itemize}}
\newcommand{\ei}{\end{itemize}}
\newcommand{\Tr}{\ensuremath{\,\mathrm{Tr}}}
\newtheorem{lem}{Lemma}
\newtheorem{theo}{Theorem}
\newtheorem{prot}{Protocol}
\newtheorem{protocol}{Protocol}
\begin{document}

\title{Efficient verification of Boson Sampling}

\author{Ulysse Chabaud}
\affiliation{Sorbonne Universit\'e, CNRS, LIP6, F-75005 Paris, France}
\email{ulysse.chabaud@gmail.com}
\orcid{0000-0003-0135-9819}
\author{Fr\'ed\'eric Grosshans}
\affiliation{Sorbonne Universit\'e, CNRS, LIP6, F-75005 Paris, France}
\orcid{0000-0001-8170-9668}
\author{Elham Kashefi}
\affiliation{Sorbonne Universit\'e, CNRS, LIP6, F-75005 Paris, France}
\affiliation{School of Informatics, University of Edinburgh, 10 Crichton Street, Edinburgh, EH8 9AB}
\author{Damian Markham}
\affiliation{Sorbonne Universit\'e, CNRS, LIP6, F-75005 Paris, France}
\affiliation{JFLI, CNRS, National Institute of Informatics, University of Tokyo, Tokyo, Japan}


\begin{abstract}

The demonstration of quantum speedup, also known as quantum computational supremacy, that is the ability of quantum computers to outperform dramatically their classical counterparts, is an important milestone in the field of quantum computing. While quantum speedup experiments are gradually escaping the regime of classical simulation, they still lack efficient verification protocols and rely on partial validation. Here we derive an efficient protocol for verifying with single-mode Gaussian measurements the output states of a large class of continuous-variable quantum circuits demonstrating quantum speedup, including Boson Sampling experiments, thus enabling a convincing demonstration of quantum speedup with photonic computing. 
Beyond the quantum speedup milestone, our results also enable the efficient and reliable certification of a large class of intractable continuous-variable multimode quantum states.
 
\end{abstract}


\maketitle


\section{Introduction}

\noindent Quantum information promises many technological applications beyond classical information~\cite{wiesner1983conjugate,bennett1992communication,bennett1984quantum,harrow2009quantum}. Shor's famous factoring algorithm~\cite{Shor1999} has highlighted the possibility of spectacular quantum advantages over classical computing. The ability of quantum computers to greatly outperform their classical counterparts is referred to as ``quantum computational supremacy''~\cite{preskill2012quantum,harrow2017quantum}, or quantum speedup~\cite{ronnow2014defining}. We use the latter terminology in this article.

While a universal quantum computer factoring large products of primes would provide a compelling evidence of quantum speedup, the experimental requirements associated with such a demonstration preclude its realisation in the near future~\cite{haner2016factoring}. 
For that reason, various sub-universal models of quantum computing, such as Boson Sampling~\cite{Aaronson2013}, have been introduced~\cite{terhal2002adaptive,shepherd2009temporally,bremner2011classical,boixo2018characterizing,mezher2019efficient}. Beyond their conceptual relevance, these models, although not believed to possess the full computational power of a universal quantum computer, enable the possibility of a demonstration of quantum speedup in the near term, under certain complexity-theoretic assumptions. 
Each of these sub-universal models is associated with a computational problem that the model solves efficiently, but which is hard to solve for classical computers. These computational problems are sampling problems, for which the task is to draw random samples from a target probability distribution fixed by the model. 
In practice, any experimental implementation would suffer from noise and imperfections and would sample from an imperfect probability distribution which approximates the target one. Hence, the fact that even an approximate version of the sampling problem associated with a sub-universal model is hard for classical computers is crucial for an experimental demonstration of quantum speedup. For all models, such an approximate sampling hardness, which corresponds to the fact that a probability distribution which has a small constant total variation distance with the target probability distribution is still hard to sample classically, comes at the cost of assuming some additional reasonable---though unproven---conjectures. 

The experimental demonstration of quantum speedup thus involves: (i) a quantum device solving efficiently a sampling task which is provably hard to solve for classical computers under reasonable theoretical assumptions, together with (ii) a verification that the quantum device indeed solved the hard task~\cite{harrow2017quantum}. While the former has been recently achieved with random superconducting circuits~\cite{arute2019quantum} and Gaussian Boson Sampling~\cite{zhong2020quantum}, the latter is still incomplete and relying on various questionable assumptions~\cite{aaronson2016complexity,dalzell2020many,barak2020spoofing}, and verification is a crucial missing point for a convincing demonstration of quantum speedup.

The setting of verification involves two parties, which in the language of interactive proof systems are referred to as the prover and the verifier. The prover is conceived as a powerful, untrusted party, while the verifier is a trusted party with restricted computational power. The verifier asks the prover to perform a computational task and verifies its correct behaviour, with possibly multiple rounds of interaction. While this setting corresponds to a cryptographic scenario between two parties, e.g., in the case of delegated computing, it also embeds the notion of certification of a physical experiment, where the experiment behaves as the prover and the experimenter as the verifier. 
Assumptions can be made on the prover, such as restricting its computational capabilities or assuming that it follows a specific behaviour. In the physical picture, this corresponds for example to choosing a noise model for an experiment. A common assumption is the so-called independently and identically distributed (i.i.d.) assumption, i.e., assuming that the output of the quantum experiment is the same at each run. Assuming i.i.d.\@ behaviour for the prover leads to more efficient verification protocols, and this assumption can usually be removed using cryptographic techniques at the cost of an increased number of runs in the protocol~\cite{renner2007symmetry,gheorghiu2017verification}.
We use the language of interactive proof systems in what follows, but we emphasise that our results are not restricted to that particular context and are relevant especially in experimental scenarios. There has been much work done on verification of universal quantum computation~\cite{broadbent2009universal,gheorghiu2017verification,mahadev2018classical}, much less when the verification is limited to sub-universal models. We now briefly review the state of affairs in this direction.

A verification protocol for a quantum speedup experiment should guarantee the closeness in total variation distance between the experimental probability distribution and the target probability distribution~\cite{harrow2017quantum}. Such a verification is especially difficult because of the nature of the computational task at hand, i.e., sampling from an anti-concentrating probability distribution over an exponentially large sample space.
In particular, any efficient non-interactive verification of current quantum speedup experiments with a verifier restricted to classical computations requires additional cryptographic assumptions~\cite{hangleiter2019sample}. Existing verification protocols with a classical verifier based on total variation distance either rely on little-studied assumptions~\cite{shepherd2009temporally,kahanamoku2019forging,yung2020anti}, or induce an overhead for the prover~\cite{mahadev2018classical,brakerski2018cryptographic,brakerski2020simpler} which prevents a near-term use for an experimental demonstration of quantum speedup. 

Weaker but more resource-efficient methods of verification with a classical verifier, which are sometimes referred to as validation~\cite{spagnolo2014experimental}, consist in performing a partial verification where only specific properties of the experimental probability distribution are tested, rather than closeness in total variation distance to the ideal distribution. For example, the verifier may perform statistical tests using experimental samples~\cite{boixo2018characterizing,arute2019quantum,drummond2021simulating}, compare the experimental probability distribution with specific mock-up distributions~\cite{aaronson2013bosonsampling}, or benchmark the individual components of the experimental setup. Ultimately, validation relies on making additional assumptions about the inner functioning of the quantum device~\cite{ferracin2019accrediting}.

In order to avoid relying on such undesirable assumptions, while keeping verification within near-term experimental reach, another way for performing verification is to allow the verifier to have minimal quantum capabilities. In that context, the purpose of the verifier is to obtain a reliable bound on the trace distance between the output state produced by the prover and the ideal target state, which in turn implies a bound on the distance between the tested and target probability distributions of the samples~\cite{NielsenChuang}. This can be done by obtaining, e.g., an estimate of the fidelity with the target state, or a fidelity witness, i.e., a tight lower bound on the fidelity~\cite{fuchs1999cryptographic}.

In the context of quantum computing in finite-dimensional Hilbert spaces~\cite{NielsenChuang}, this minimal quantum capability corresponds to being only able to prepare single-qubit or qudit states or to perform local measurements. Protocols for verification of Instantaneous Quantum Polynomial time circuits~\cite{shepherd2009temporally} with these minimal requirements have been derived under the i.i.d.\@ assumption with single-qubit states~\cite{mills2017information} or with local measurements~\cite{Hangleiter2016} and more recently without the i.i.d.\@ assumption with single-qubit states~\cite{kapourniotis2019nonadaptive} or with local measurements~\cite{takeuchi2018verification}.

Similarly in the context of infinite-dimensional Hilbert spaces~\cite{Braunstein2005}, which includes Boson Sampling and related sub-universal models~\cite{Aaronson2013,Lund2014,olson2015sampling,hamilton2017gaussian,chabaud2017continuous,chakhmakhchyan2017boson,lund2017exact}, this minimal quantum capability corresponds to being only able to prepare single-mode Gaussian states, or to perform single-mode Gaussian measurements. Gaussian state preparations and measurements are at the same time well understood theoretically~\cite{adesso2014continuous}, classically simulable~\cite{Bartlett2002}, and routinely implemented at large scales experimentally~\cite{ferraro2005gaussian,yokoyama2013ultra,yoshikawa2016invited}. However, there is no efficient verification protocol using single-mode Gaussian state preparation nor single-mode Gaussian measurements for Boson Sampling with input single photons: current methods used for the validation of Boson sampling are either not scalable or only provide partial certificates on the tested probability distribution~\cite{spagnolo2014experimental,opanchuk2018simulating,flamini2018photonic,agresti2019pattern,brod2019photonic,wang2019boson,walschaers2020signatures}.


\section{Results}

\noindent Our contribution is three-fold: we introduce three noninteractive protocols (detailed below) for the reliable certification of continuous-variable quantum states, where the verifier only needs to perform classical computations and single-mode Gaussian measurements, namely heterodyne detection~\cite{yuen1980optical}, in order to verify an unknown state sent by an untrusted prover. The three protocols share the same structure, depicted in Fig.~\ref{fig:setup}. 

For each of our three protocols, we derive one version assuming i.i.d.\@ behaviour for the prover and another version making no assumptions whatsoever on the prover. Other than that, all protocols rely solely on the verifier having access to single-mode heterodyne detection and on the validity of quantum mechanics. Moreover, all protocols are robust and efficient, as they provide analytical confidence intervals and require a polynomial number of measurements in the number of modes and in the size of the confidence interval.

\begin{figure}[t]
	\begin{center}
		\includegraphics[width=\columnwidth]{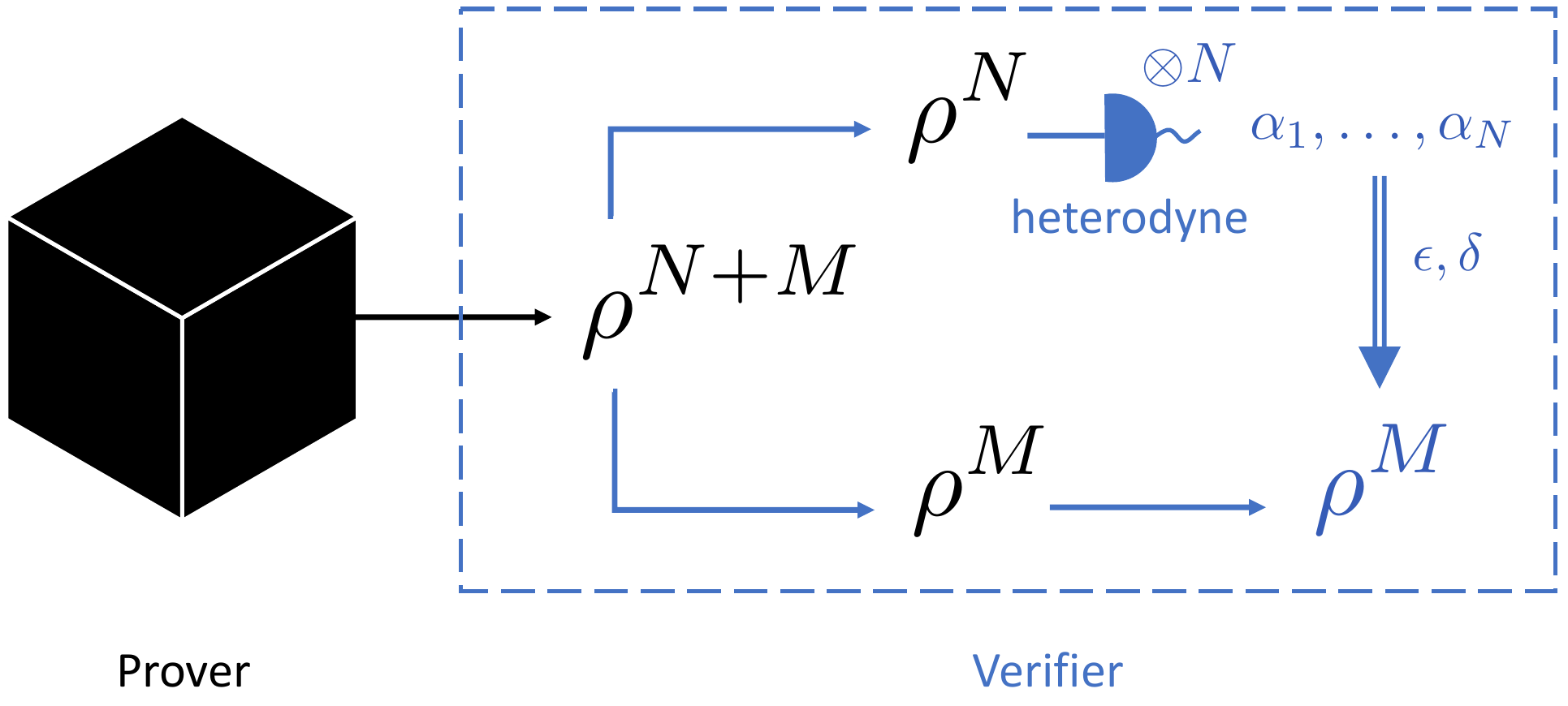}
		\caption{A pictorial representation of the structure of our protocols. The verifier (within the blue dashed rectangle) asks the prover (black box) for $N+M$ copies of some target continuous-variable pure quantum state. The prover sends a (mixed) quantum state $\rho^{N+M}$ over $N+M$ subsystems (under the i.i.d.\ assumption, this state is of the form $\rho^{\otimes N+M}$). The verifier measures $N$ subsystems of $\rho^{N+M}$ at random with heterodyne detection and analyses the obtained samples to check if the remaining state $\rho^M$ over $M$ subsystems is close to $M$ copies of the target state, with a precision parameter $\epsilon>0$ and a confidence parameter $\delta>0$. If the answer is positive, the verifier may then decide to use the verified state $\rho^M$ for any quantum information processing task. In an experimental scenario, the blackbox prover corresponds to the experiment while the experimenter plays the role of the verifier; in that case, the state $\rho^{N+M}$ is the output quantum state over $N+M$ runs of the experiment.}
		\label{fig:setup}
	\end{center}
\end{figure}

\textit{Firstly}, generalising several results from~\cite{paris1996quantum,chabaud2019building}, we derive a protocol for reliably estimating the fidelity of an unknown continuous-variable quantum state with any single-mode continuous-variable quantum state of bounded support over the Fock basis using heterodyne detection (Protocol~\ref{prot:fe}).

\textit{Secondly}, using this single-mode fidelity estimation protocol as a subroutine, we obtain a protocol for reliably estimating a fidelity witness, i.e., a tight lower bound on the fidelity, for a large class of multimode continuous-variable quantum states using heterodyne detection of an unknown state (Protocol~\ref{prot:we}). For $m$ modes, this class of certifiable states corresponds to the $m$-mode states of the form:
\be
\left(\bigotimes_{i=1}^m\hat G_i\right)\hat U\left(\bigotimes_{i=1}^m\ket{C_i}\right)\!,
\label{class}
\ee
where $\hat U$ is an $m$-mode passive linear transformation (a unitary transformation of the creation and annihilation operators of the modes), where each state $\ket{C_i}$ is a single-mode pure state with bounded support over the Fock basis, and where each operation $\hat G_i$ is a single-mode Gaussian unitary, for all $i\in\{1,\dots,m\}$. 

\textit{Thirdly}, using this multimode fidelity witness estimation protocol as a subroutine, we obtain a protocol for verifying the output states of Boson Sampling experiments (Protocol~\ref{prot:bs}). Our protocol provides a certificate on the total variation distance between the experimental and target probability distributions for any observable, efficiently in the number of modes and input photons, thus enabling a convincing demonstration of quantum speedup with photonic quantum computing. Moreover, the verification protocol is implemented within the same experimental setup, simply by replacing the output detectors by balanced heterodyne detection. We also optimise the efficiency of the protocol to the specific setting of Boson Sampling.

Note that other fidelity witnesses for multimode photonic state preparations have been introduced in~\cite{aolita2015reliable}. However, the number of measurements needed to estimate with constant precision the output state of a Boson Sampling interferometer with $n$ input photons over $m$ modes with their witnesses scales as $\Omega(m^{n+4})$, under the i.i.d.\ assumption, while we show that our protocol provides the same precision with $O(m^2\log m)$ measurements. Moreover, we are able to remove the i.i.d.\@ state preparation assumption, at the cost of an increased---though still polynomial---number of measurements needed for the same estimate precision and confidence interval.

The rest of the paper is structured as follows: we recall heterodyne detection in section~\ref{sec:het}, we present our single-mode fidelity estimation protocol in section~\ref{sec:feprotocol}, we introduce our multimode fidelity witness estimation protocol in section~\ref{sec:weprotocol} and we discuss the verification of Boson Sampling in section~\ref{sec:BS}. Finally, we conclude in section~\ref{sec:conclusion}.


\section{Heterodyne detection}
\label{sec:het}

\noindent We write $\mathbb N^*$ the set of positive integers. Hereafter, $m\in\mathbb N^*$ denotes the number of modes and $\{\ket n\}_{n\in\mathbb N}$ is the single-mode Fock basis. Let us define the single-mode displacement operator $\hat D(\beta)=\exp[\beta\hat a^\dag-\beta^*\hat a]$, for all $\beta\in\mathbb C$, and the single-mode squeezing operator $\hat S(\xi)=\exp[\frac12(\xi^*\hat a^2-\xi\hat a^{\dag2})]$, for all $\xi\in\mathbb C$, where $\hat a^\dag$ and $\hat a$ are the creation and annihilation operators of the mode~\cite{Lloyd2012}, respectively. We use bold math for multimode notations. For all $\bm\beta,\bm\xi\in\mathbb C^m$, we write $\hat D(\bm\beta)=\bigotimes_{i=1}^m\hat D(\beta_i)$ and $\hat S(\bm\xi)=\bigotimes_{i=1}^m\hat S(\xi_i)$.

Heterodyne detection, also called double homodyne, dual-homodyne or eight-port homodyne detection~\cite{ferraro2005gaussian}, is a single-mode Gaussian measurement, projecting onto (unnormalised) coherent states. Mathematically, measuring a state $\rho$ with heterodyne detection amounts to sampling from its Husimi $Q$ phase space quasiprobability function~\cite{yuen1980optical}:
\be
Q_\rho(\alpha)=\frac1\pi\braket{\alpha|\rho|\alpha},
\ee
where $\ket\alpha=e^{-\frac{|\alpha|^2}2}\sum_{n\ge0}{\frac{\alpha^n}{\sqrt{n!}}\ket n}$ is the coherent state of amplitude $\alpha\in\mathbb C$. Heterodyne detection corresponds to a simultaneous noisy measurement of conjugate quadratures and can be implemented experimentally by mixing the state to be measured with the vacuum on a balanced beam splitter and detecting both output branches with homodyne detection. 

\begin{figure}
	\begin{center}
		\includegraphics[width=0.7\columnwidth]{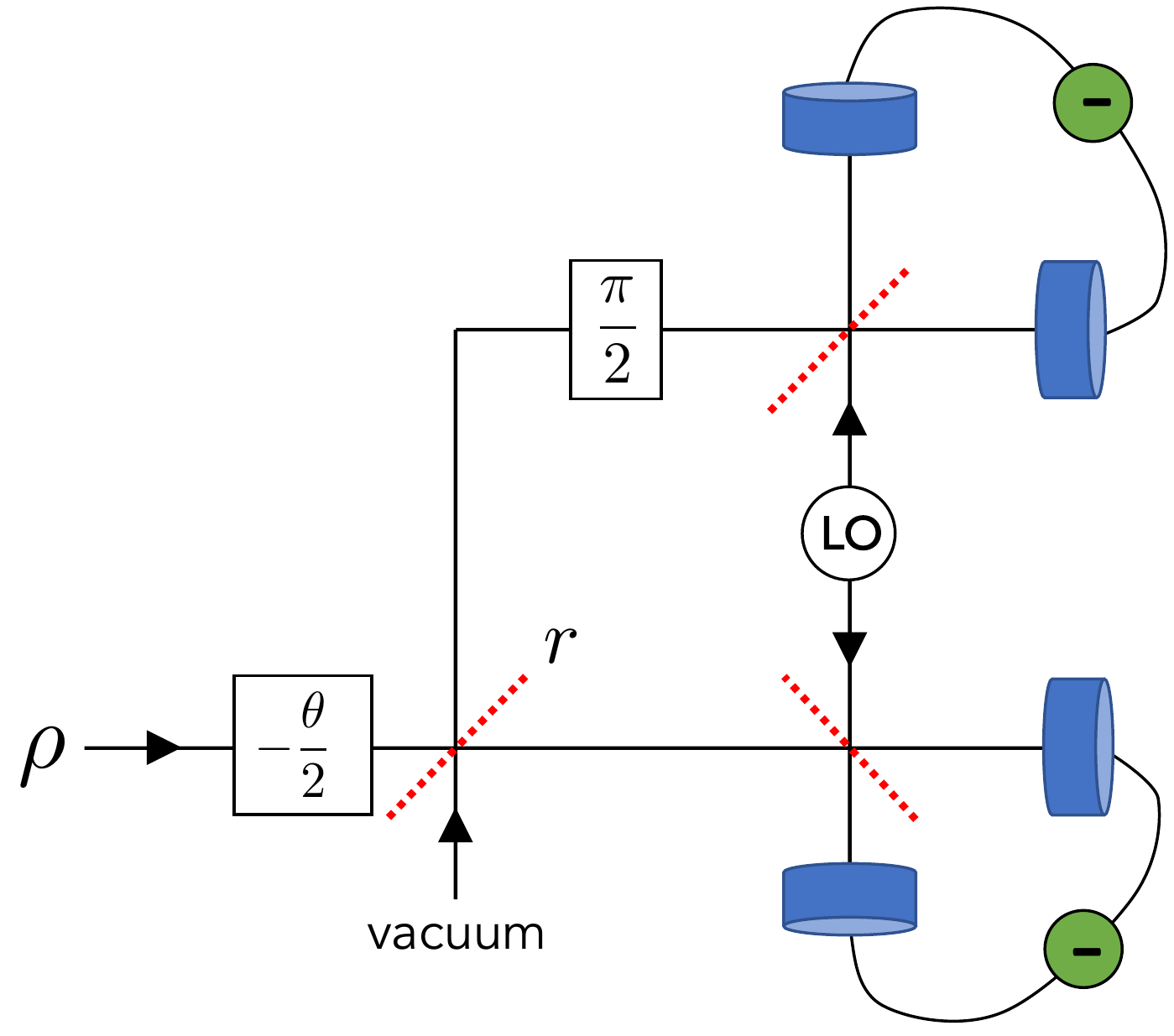}
		\caption{A schematic representation of optical single-mode unblanced heterodyne detection of a state $\rho$, with unbalancing parameter $\xi=re^{i\theta}$. The dashed red lines represent beam splitters. LO stands for local oscillator, i.e., a strong coherent state. The blue cylinders are photodiode detectors.}
		\label{fig:2homodyne}
	\end{center}
\end{figure}

Unbalancing the beam splitter and changing the phase of the local oscillator in the homodyne detection also allows the measurement of squeezed quadratures rotated in phase space (Fig.~\ref{fig:2homodyne}). Note that the phase of the local oscillator can be modulated with a classical post-processing step. We refer to this Gaussian measurement as unbalanced heterodyne detection with unbalancing parameter $\xi\in\mathbb C$.
The positive operator-valued measure (POVM) elements of a tensor product of single-mode unbalanced heterodyne detection over $m$ modes with unbalancing parameters $\bm\xi=(\xi_1,\dots,\xi_m)\in\mathbb C^m$ are given by
\be
\Pi_{\bm\alpha}^{\bm\xi}=\frac1{\pi^m}\ket{\bm\alpha,\bm\xi}\!\bra{\bm\alpha,\bm\xi},
\ee
for all $\bm\alpha=(\alpha_1,\dots,\alpha_m)\in\mathbb C^m$, where $\ket{\bm\alpha,\bm\xi}=\bigotimes_{i=1}^m{\ket{\alpha_i,\xi_i}}$ is a tensor product of squeezed coherent states $\hat S(\xi_i)\hat D(\alpha_i)\ket0$. Writing $\xi=re^{i\theta}$, the unbalalancing parameter is related to the optical setup by $r=\log\left(\frac TR\right)$, where $T$ and $R$ are the reflectance and transmittance of the unbalanced beam splitter, respectively, with the phase of the local oscillator being $-\frac\theta2$ (see Appendix~\ref{app:hetPOVM}).

With these properties of heterodyne detection laid out, we detail in the following section our single-mode fidelity estimation protocol based on heterodyne detection.


\section{Single-mode fidelity estimation}
\label{sec:feprotocol}

\noindent Normalised single-mode pure quantum states with bounded support over the Fock basis are referred to as \textit{core states}~\cite{menzies2009gaussian,chabaud2020stellar}. In what follows, we introduce a protocol for estimating the fidelity of any single-mode continuous-variable (mixed) quantum state with any target core state using heterodyne detection, by generalising heterodyne fidelity estimates from~\cite{paris1996quantum,chabaud2019building}. We give two versions of our protocol, with and without the i.i.d.\ assumption.

We denote by $\underset{\alpha\leftarrow D}{\mathbb E}[f(\alpha)]$ the expected value of a function $f$ for samples drawn from a distribution $D$.
Let us introduce for $k,l\ge0$ the polynomials
\be
\ba
\mathcal{L}_{k,l}(z)&=e^{zz^*}\frac{(-1)^{k+l}}{\sqrt{k!}\sqrt{l!}}\frac{\partial^{k+l}}{\partial z^k\partial z^{*l}}e^{-zz^*}\\
&=\sum_{p=0}^{\min{(k,l)}}{\frac{\sqrt{k!}\sqrt{l!}(-1)^p}{p!(k-p)!(l-p)!}z^{l-p}z^{*k-p}},
\ea
\label{2DL}
\ee
for $z\in\mathbb C$, which are, up to a normalisation, the Laguerre $2$D polynomials appearing in the expression of quasiprobability distribution of Fock states~\cite{wunsche1998laguerre}. Following~\cite{chabaud2019building}, for all $k,l\in\mathbb N$, we introduce with these polynomials the functions
\be
\ba
f_{k,l}(z,\eta)=\frac1{\eta^{1+\frac{k+l}2}} e^{\left(1-\frac{1}{\eta}\right)zz^*}\mathcal{L}_{l,k}\left(\frac z{\sqrt{\eta}}\right),
\label{fkl}
\ea
\ee
for all $z\in\mathbb C$ and all $0<\eta<1$. Now let us define, for all $p\in\mathbb N^*$, all $z\in\mathbb C$ and all $0<\eta<1$, the generalised functions:
\be
\ba
g_{k,l}^{(p)}(z,\eta):=\sum_{j=0}^{p-1}&(-1)^j\eta^jf_{k+j,l+j}(z,\eta)\\
&\times\sqrt{\binom{k+j}k\binom{l+j}l}.
\ea
\label{g}
\ee
The functions $z\mapsto g_{k,l}^{(p)}(z,\eta)$ are polynomials multiplied by a converging Gaussian function and are thus bounded over $\mathbb C$. 
Using these functions, we derive the following result:

\begin{lem} \label{lem:main}
Let $p\in\mathbb N^*$, let $k,l\in\mathbb N$ and let $0<\eta\le\frac{p+1}{p\sqrt{(k+p+1)(l+p+1)}}$. Let $\rho=\sum_{i,j=0}^{+\infty}{\rho_{ij}\ket i\!\bra j}$ be a single-mode density operator.
Then,
\be
\left|\rho_{kl}-\!\!\underset{\alpha\leftarrow Q_{\mathrlap\rho}}{\mathbb E}[g_{k,l}^{(p)}(\alpha,\eta)]\right|\le\eta^p\sqrt{\binom{k+p}k\binom{l+p}l},
\ee
where the function $g_{k,l}^{(p)}$ is defined in Eq.~(\ref{g}).
\end{lem}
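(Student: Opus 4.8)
The plan is to reduce the claim to the exact expectation of a single function $f_{k,l}$, then use the alternating structure of $g_{k,l}^{(p)}$ to cancel the leading-order bias, and finally control the remaining tail using positivity of $\rho$ together with $\Tr\rho=1$. First I would compute $\underset{\alpha\leftarrow Q_\rho}{\mathbb E}[f_{k,l}(\alpha,\eta)]$ exactly. Expanding $Q_\rho(\alpha)=\frac1\pi\sum_{i,j}\rho_{ij}\braket{\alpha|i}\braket{j|\alpha}$ with $\braket{n|\alpha}=e^{-|\alpha|^2/2}\alpha^n/\sqrt{n!}$, the two Gaussian factors combine with the prefactor of $f_{k,l}$ into $e^{-|\alpha|^2/\eta}$, and the substitution $z=\alpha/\sqrt{\eta}$ turns each integral into a Gaussian moment $\frac1\pi\int e^{-|z|^2}z^az^{*b}\,\mathrm{d}^2z=a!\,\delta_{ab}$ paired against the monomials of $\mathcal L_{l,k}$. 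Only the terms with $i-k=j-l=:s\ge0$ survive, and simplifying the resulting finite sum via a standard binomial identity (as in~\cite{chabaud2019building}) yields the moment relation
\begin{equation}
\underset{\alpha\leftarrow Q_\rho}{\mathbb E}[f_{k,l}(\alpha,\eta)]=\sum_{s\ge0}\eta^s\sqrt{\binom{k+s}{k}\binom{l+s}{l}}\,\rho_{k+s,l+s},
\end{equation}
which exhibits $f_{k,l}$ as an estimator of $\rho_{kl}$ whose bias starts at order $\eta$.

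Next I would substitute this into the definition~(\ref{g}) of $g_{k,l}^{(p)}$ and collect the coefficient of each $\rho_{k+n,l+n}$ by setting $n=j+s$. The identity $\binom{k+j}{k}\binom{k+n}{k+j}=\binom{k+n}{k}\binom{n}{j}$, together with its analogue in $l$, factors that coefficient as $\eta^n\sqrt{\binom{k+n}{k}\binom{l+n}{l}}\binom{n}{j}$, so the inner sum over $j$ reduces to $\sum_{j=0}^{\min(n,p-1)}(-1)^j\binom{n}{j}$. By the partial alternating-sum identity this vanishes for $1\le n\le p-1$, equals $1$ at $n=0$, and equals $(-1)^{p-1}\binom{n-1}{p-1}$ for $n\ge p$, precisely the cancellation that $g_{k,l}^{(p)}$ is designed to produce. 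Hence
\begin{equation}
\rho_{kl}-\underset{\alpha\leftarrow Q_\rho}{\mathbb E}[g_{k,l}^{(p)}(\alpha,\eta)]=(-1)^p\sum_{n\ge p}\eta^n\binom{n-1}{p-1}\sqrt{\binom{k+n}{k}\binom{l+n}{l}}\,\rho_{k+n,l+n}.
\end{equation}

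The crux is then the tail estimate. Taking absolute values, I would bound $|\rho_{k+n,l+n}|\le\sqrt{\rho_{k+n,k+n}\rho_{l+n,l+n}}\le\tfrac12(\rho_{k+n,k+n}+\rho_{l+n,l+n})$, using positivity of $\rho$ and the arithmetic--geometric mean inequality. Writing $w_n:=\eta^n\binom{n-1}{p-1}\sqrt{\binom{k+n}{k}\binom{l+n}{l}}$ and invoking $\sum_n\rho_{k+n,k+n}\le\Tr\rho=1$ and its $l$-analogue collapses the whole tail to $\max_{n\ge p}w_n$. It remains to show this maximum is attained at $n=p$, where $w_p=\eta^p\sqrt{\binom{k+p}{k}\binom{l+p}{l}}$ is exactly the claimed bound. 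This is where the hypothesis on $\eta$ enters: since $w_{n+1}/w_n=\eta\,\frac{n}{n-p+1}\,\frac{\sqrt{(k+n+1)(l+n+1)}}{n+1}$ is a product of factors each decreasing in $n$ for $n\ge p$, the ratio is largest at $n=p$, where it equals $\eta\,p\sqrt{(k+p+1)(l+p+1)}/(p+1)$; the assumption $\eta\le\frac{p+1}{p\sqrt{(k+p+1)(l+p+1)}}$ forces it to be $\le1$, so $(w_n)_{n\ge p}$ is non-increasing and its maximum is $w_p$.

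I expect the tail estimate to be the main obstacle. The naive termwise bound $|\rho_{k+n,l+n}|\le1$ fails, because the $n=p$ term alone already equals the target; one genuinely needs the $\tfrac12(\rho_{k+n,k+n}+\rho_{l+n,l+n})$ splitting together with the trace constraint to reduce the infinite tail to a single term, followed by the monotonicity argument that pins the peak of $(w_n)$ at $n=p$ exactly under the stated threshold on $\eta$. The moment computation of the first step is the other technically delicate point, but it is essentially the Gaussian-integral and Laguerre-orthogonality calculation already available from~\cite{chabaud2019building}; I would also note in passing that boundedness of $g_{k,l}^{(p)}$ justifies exchanging sum and integral throughout.
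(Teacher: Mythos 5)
Your proposal is correct, and it follows the same overall skeleton as the paper's proof: (i) obtain an exact expansion of $\underset{\alpha\leftarrow Q_\rho}{\mathbb E}[g_{k,l}^{(p)}(\alpha,\eta)]$ as $\rho_{kl}$ plus a signed tail $\sum_{n\ge p}\eta^n\binom{n-1}{p-1}\sqrt{\binom{k+n}k\binom{l+n}l}\,\rho_{k+n,l+n}$, (ii) collapse the tail to $\max_{n\ge p}w_n$ using positivity of $\rho$ and $\Tr\rho=1$, and (iii) show that the weights $(w_n)_{n\ge p}$ are non-increasing precisely under the stated hypothesis on $\eta$, so the maximum is $w_p$. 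Where you genuinely differ is in how the central expansion is established. The paper proves it (its Lemma~\ref{lem:inductiong}) by induction on $p$, repeatedly invoking the moment relation for $f_{k,l}$ at shifted indices $(k+p,l+p)$ and recombining binomials via Pascal's rule $\binom qp-\binom{q-1}{p-1}=\binom{q-1}p$; you instead substitute the $f_{k,l}$ moment relation once into the definition of $g_{k,l}^{(p)}$, swap the (absolutely convergent) double sum, and use the factorisation $\binom{k+j}{k}\binom{k+n}{k+j}=\binom{k+n}{k}\binom{n}{j}$ together with the partial alternating-sum identity $\sum_{j=0}^{m}(-1)^j\binom nj=(-1)^m\binom{n-1}m$. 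Your route is more direct and makes the cancellation mechanism of the alternating construction explicit in one pass, at the cost of a Fubini-type justification (which your boundedness remark covers); the paper's induction avoids manipulating double series but hides why the binomial coefficient $\binom{n-1}{p-1}$ appears. Two further cosmetic differences: you bound $\sqrt{\rho_{k+n,k+n}\rho_{l+n,l+n}}$ by the arithmetic mean and then use the trace constraint termwise, whereas the paper applies Cauchy--Schwarz to the two tail sums (both give the same bound of $1$); and the paper phrases the monotonicity step as the threshold $\frac{q-p+1}{q}\,\frac{q+1}{\sqrt{(k+q+1)(l+q+1)}}$ being increasing in $q$, which is exactly equivalent to your observation that the ratio $w_{n+1}/w_n$ is a product of factors decreasing in $n$.
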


\noindent We give a proof in Appendix~\ref{app:lemmain}. In particular, since heterodyne detection corresponds to sampling from the Husimi $Q$ function of the measured state, this result enables estimating an element $k,l$ of the density matrix from samples of heterodyne detection by computing the mean of the function $z\mapsto g_{k,l}^{(p)}(z,\eta)$ over these samples. These heterodyne estimates generalise both the ones introduced in~\cite{paris1996quantum}, which we retrieve in the limit $\eta\to1$ and for which an assumption on the size of the support of the measured state is necessary, and the ones introduced in \cite{chabaud2019building}, which we retrieve by setting $p=1$ and which yield less efficient estimates.

Additionnally, for any core state $\ket C=\sum_{n=0}^{c-1}{c_n\ket n}$ where $c\in\mathbb N^*$,  and for all $p\in\mathbb N^*$, all $0<\eta<1$ and all $z\in\mathbb C$, we define:
\be
g_C^{(p)}(z,\eta):=\sum_{0\le k,l\le c-1}{c_k^*c_l\,g_{k,l}^{(p)}(z,\eta)}.
\label{gCmain}
\ee
Computing the mean of the function $z\mapsto g_C^{(p)}(z,\eta)$ over samples from heterodyne detection of a state allows us to estimate the fidelity of the measured state with the core state $\ket C$.  The estimation error may be controlled by optimising over the choice of the free parameters $p$ and $\eta$.  In particular, picking a smaller value for $\eta$ decreases the error in Lemma~\ref{lem:main} and increased the range of the function $z\mapsto g_C^{(p)}(z,\eta)$, thus increasing the statistical error.

Based on this result, we now present the version of our single-mode fidelity estimation protocol under i.i.d.\@ assumption and discuss afterwards its version without i.i.d.\@ assumption:

\begin{prot}[Single-mode fidelity estimation]\label{prot:fe}
Let $c\in\mathbb N^*$ and let $\ket C=\sum_{n=0}^{c-1}{c_n\ket n}$ be a core state. Let also $N,M\in\mathbb N^*$, and let $p\in\mathbb N^*$ and $0<\eta<1$ be free parameters. Let $\rho^{\otimes N+M}$ be $N+M$ copies of an unknown single-mode (mixed) quantum state $\rho$.
\begin{enumerate}
\item
Measure $N$ copies of $\rho$ with heterodyne detection, obtaining the samples $\alpha_1,\dots,\alpha_N\in\mathbb C$.
\item
Compute the mean $F_C(\rho)$ of the function $z\mapsto g_C^{(p)}(z,\eta)$ (defined in Eq.~(\ref{gCmain})) over the samples $\alpha_1,\dots,\alpha_N\in\mathbb C$.
\item
Compute the fidelity estimate $F_C(\rho)^M$.
\end{enumerate}
\end{prot}

\noindent The value $F_C(\rho)^M$ obtained is an estimate of the fidelity between the $M$ remaining copies of $\rho$ and $M$ copies of the target core state $\ket C$. The efficiency of the protocol is summarised by the following result:

\begin{theo}\label{th:fe}
Let $\epsilon,\delta>0$. With the notations of Protocol~\ref{prot:fe}, the estimate $F_C(\rho)^M$ is $\epsilon$-close to the fidelity $F(\ket C^{\otimes M},\rho^{\otimes M})$ with probability $1-\delta$ whenever $N\ge N_1$, with 
\be
N_1=\mathcal O\!\left(\left(\frac M\epsilon\right)^{2+\frac{2c}p}\log\left(\frac1\delta\right)\right)\!,
\ee
where the constant prefactor depends on the choice of the free parameters $p$ and $\eta$ in Protocol~\ref{prot:fe}.
\end{theo}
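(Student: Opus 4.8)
The plan is to decompose the total error of the estimate into a deterministic bias and a statistical fluctuation, to control each in terms of the free parameters $p$ and $\eta$, and finally to optimise over $\eta$. The starting observation is that since $\ket C$ is pure, the target quantity factorises: $F(\ket C^{\otimes M},\rho^{\otimes M})=\braket{C|\rho|C}^M=x^M$, where $x:=\braket{C|\rho|C}=\sum_{0\le k,l\le c-1}c_k^*c_l\,\rho_{kl}$. The protocol estimates $x$ by the empirical mean $\hat x:=F_C(\rho)$ of $z\mapsto g_C^{(p)}(z,\eta)$ over the $N$ heterodyne samples, and outputs $\hat x^M$. Since $x\in[0,1]$, a mean value estimate gives $|\hat x^M-x^M|\le M\,|\hat x-x|$ as long as $\hat x$ stays within $O(1/M)$ of $[0,1]$, so it suffices to guarantee $|\hat x-x|\le\epsilon/M$ with probability $1-\delta$.

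First I would bound the bias $|\mathbb E_{\alpha\leftarrow Q_\rho}[g_C^{(p)}(\alpha,\eta)]-x|$. By linearity and the definition~(\ref{gCmain}), this equals $|\sum_{k,l}c_k^*c_l(\mathbb E_{\alpha\leftarrow Q_\rho}[g_{k,l}^{(p)}(\alpha,\eta)]-\rho_{kl})|$, and applying Lemma~\ref{lem:main} termwise together with the triangle inequality bounds it by $\eta^p\sum_{k,l}|c_k||c_l|\sqrt{\binom{k+p}k\binom{l+p}l}=\eta^p(\sum_{k}|c_k|\sqrt{\binom{k+p}k})^2$. Since $\sum_n|c_n|^2=1$ and the sum runs over $k\le c-1$, this last factor is a finite constant depending only on $c$ and $p$; hence the bias is $O(\eta^p)$. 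Choosing $\eta=\Theta((\epsilon/M)^{1/p})$ makes the bias at most $\epsilon/(2M)$; I would also check that this $\eta$ respects the hypothesis $\eta\le\frac{p+1}{p\sqrt{(k+p+1)(l+p+1)}}$ of Lemma~\ref{lem:main} for every $k,l\le c-1$, which holds once $\epsilon/M$ is small enough.

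Next I would bound the range of the integrand to control the statistical error by a concentration inequality. Writing out $g_C^{(p)}$ through Eqs.~(\ref{g}) and~(\ref{fkl}), the $j$-th summand of $g_{k,l}^{(p)}$ carries a prefactor $\eta^j\cdot\eta^{-(1+(k+l+2j)/2)}=\eta^{-(1+(k+l)/2)}$, so the explicit $\eta^j$ exactly cancels the $j$-dependence of the exponent; the remaining factor $e^{(\eta-1)|w|^2}\mathcal L_{l,k}(w)$, with $w=z/\sqrt\eta$, is a polynomial times a decaying Gaussian and is therefore bounded uniformly in $z$ by a constant depending on $c,p$. The dominant prefactor is obtained at $k=l=c-1$, giving a uniform bound $\sup_z|g_C^{(p)}(z,\eta)|=:B(\eta)=O(\eta^{-c})$. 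Treating the real and imaginary parts of the bounded complex random variable $g_C^{(p)}(\alpha,\eta)$ separately, Hoeffding's inequality yields $|\hat x-\mathbb E[\hat x]|=O(B(\eta)\sqrt{\log(1/\delta)/N})$ with probability $1-\delta$.

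Finally I would combine the two bounds. Requiring the statistical error to be at most $\epsilon/(2M)$ gives $N=\Omega((B(\eta)M/\epsilon)^2\log(1/\delta))$; substituting the optimised $\eta=\Theta((\epsilon/M)^{1/p})$, hence $B(\eta)=O((M/\epsilon)^{c/p})$, produces $N\ge N_1=O((M/\epsilon)^{2+2c/p}\log(1/\delta))$, with a prefactor absorbing the $c,p$-dependent constants from the bias and range estimates. The main obstacle is the range estimate above: one must track the precise $\eta$-scaling of $g_C^{(p)}$, exhibit the cancellation of the $\eta^j$ factors, and bound the polynomial-times-Gaussian part uniformly in $z$, since it is this $O(\eta^{-c})$ scaling that fixes the exponent $2+2c/p$ and thereby the whole efficiency claim.
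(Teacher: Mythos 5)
Your proposal is correct and follows essentially the same route as the paper's proof: reduction from $M$ copies to one via $F(\ket C^{\otimes M},\rho^{\otimes M})=F(\ket C,\rho)^M$ and $|a^M-b^M|\le M|a-b|$, a termwise bias bound $\eta^p\bigl(\sum_n|c_n|\sqrt{\tbinom{n+p}{n}}\bigr)^2$ from Lemma~\ref{lem:main}, a uniform range bound $O(\eta^{-c})$ on $g_C^{(p)}$ (your $\eta^j$-cancellation observation is exactly what underlies the paper's Lemma~\ref{lem:boundgkl}), Hoeffding's inequality, and the choice $\eta=\Theta((\epsilon/M)^{1/p})$. The only differences are cosmetic: the paper splits the error budget optimally as $c\epsilon/(c+p)$ versus $p\epsilon/(c+p)$ rather than $50/50$, and clips the estimate to $[0,1]$ instead of invoking the mean value theorem near the boundary, which affects only constant prefactors.
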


\noindent We give a detailed version of the theorem together with its proof in Appendix~\ref{app:thfe}, which combines Lemma~\ref{lem:main} with Hoeffding's inequality~\cite{hoeffding1963probability}. Since the parameter $p\in\mathbb N^*$ may vary freely, the scaling can be brought arbitrarily close to $\mathcal O((\frac M\epsilon)^2\log(\frac1\delta))$. In practice, since the constant prefactor increases with $p$, an optimisation yields the optimal choice for $p$ which minimises $N$ when $M$, $\epsilon$ and $\delta$ are fixed. Moreover, the choice of the other free parameter $0<\eta<1$ contributes to minimising the constant prefactor and the efficiency of Protocol~\ref{prot:fe} can also be refined by taking into account the expression of the single-mode target core state in the Fock basis (see~\cite{chabaud2021certification} for a detailed analysis).

Hence, Protocol~\ref{prot:fe} gives a reliable and efficient way of estimating the fidelity of an unknown single-mode continuous-variable quantum state with any target core state. These core states play an important role in the characterisation of single-mode non-Gaussian states~\cite{chabaud2020stellar}. Additionnally, they form a dense subset (for the trace norm) of the set of normalised single-mode pure quantum states. Hence, given any target normalised single-mode pure state $\ket\psi$, we can use our fidelity estimation protocol by targeting instead a truncation of the state $\ket\psi$ in the Fock basis in order to estimate the fidelity of any  single-mode continuous-variable quantum state with the state $\ket\psi$ using heterodyne detection.

In an experimental scenario, one may set $M=1$ in Protocol~\ref{prot:fe} and deduce information about the output state of a single run of the experiment. In a cryptographic scenario, where the state $\rho$ is provided by an untrusted prover, the verifier may want to use some copies of the state to perform fidelity estimation of the remaining $M$ copies, before using these remaining states for subsequent quantum information processing. In that context, the i.i.d\ assumption may no longer be justified: the quantum state sent by the prover $\rho^{N+M}$ over $N+M$ subsystems may no longer be of the form $\rho^{\otimes N+M}$, as all subsystems can possibly be entangled with each other (as well as with other quantum systems on the side of the prover). Hence, we extend Protocol~\ref{prot:fe} to a version which does not make the i.i.d.\ assumption, using a de Finetti reduction from~\cite{renner2009finetti,chabaud2019building}. The non-i.i.d.\ version of Protocol~\ref{prot:fe} obtained is nearly identical, up to slight differences in the classical post-processing: a small fraction of the measured subsystems have to be discarded at random and another fraction of the samples are used for an energy test. This comes at the cost of an increased number of measurements which corresponds however to a polynomial overhead for a polynomial precision and confidence. We give a detailed analysis in Appendix~\ref{app:fenotiid}.


\section{Multimode fidelity witness estimation}
\label{sec:weprotocol}

\noindent In this section, we extend our single-mode fidelity estimation protocol from the previous section to the multimode case, and we show that being able to estimate single-mode fidelities with heterodyne detection is sufficient to estimate tight fidelity witnesses, i.e., tight lower bounds on the fidelity, for the large class of multimode states in Eq.~(\ref{class}).

Note that the transformation in Eq.~(\ref{class}) can always be expressed as $\hat S(\bm\xi)\hat D(\bm\beta)\,\hat U$, where $\bm\xi,\bm\beta\in\mathbb C^m$, since single-mode Gaussian unitaries can be decomposed as a displacement, a squeezing with complex parameter, and a phase-space rotation~\cite{Lloyd2012}. Note also that an $m$-mode passive linear transformation $\hat U$ is associated with an $m\times m$ unitary matrix $U$ which describes its action on the creation and annihilation operators of the modes~\cite{ferraro2005gaussian}. 

Our extension to the multimode case is based on two observations (Lemmas~\ref{lem:product} and~\ref{lem:hetmagic} hereafter).

\textit{Firstly}, if all the single-mode subsystems $\rho_i$ of a multimode quantum state $\bm\rho$ are close enough to single-mode pure states, then $\bm\rho$ is close to the tensor product of these single-mode pure states. Formally:
 
\begin{lem}\label{lem:product}
Let $\bm\rho$ be a state over $m$ subsystems. For all $i\in\{1,\dots,m\}$, we denote by $\rho_i=\Tr_{\{1,\dots,m\}\setminus\{i\}}(\bm\rho)$ the reduced state of $\bm\rho$ over the $i^{th}$ subsystem. Let $\ket{\psi_1},\dots,\ket{\psi_m}$ be pure states. For all $i\in\{1,\dots,m\}$, we write
\be
W:=1-\sum_{i=1}^m{\left(1-F(\rho_i,\psi_i)\right)},
\ee 
where $F$ is the fidelity, and $\ket{\bm\psi}=\ket{\psi_1}\otimes\dots\otimes\ket{\psi_m}$. Then,
\be
1-m(1-F(\bm\rho,\bm\psi))\le W\le F(\bm\rho,\bm\psi).
\label{lemproductmain}
\ee
\end{lem}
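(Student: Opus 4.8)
```latex
\textbf{Proof proposal.} The statement has two inequalities: the upper bound $W \le F(\bm\rho, \bm\psi)$ and the lower bound $1 - m(1 - F(\bm\rho,\bm\psi)) \le W$. The plan is to attack each separately, relating the global fidelity $F(\bm\rho,\bm\psi)$ to the single-mode fidelities $F(\rho_i,\psi_i)$ that enter the definition of $W$. The key fact I would exploit is that for a \emph{pure} target product state $\ket{\bm\psi} = \ket{\psi_1}\otimes\cdots\otimes\ket{\psi_m}$, the fidelity simplifies to the expectation value $F(\bm\rho,\bm\psi) = \bra{\bm\psi}\bm\rho\ket{\bm\psi}$, and similarly $F(\rho_i,\psi_i) = \bra{\psi_i}\rho_i\ket{\psi_i}$. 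This purity of the target is what makes the argument tractable and should be emphasised up front.

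For the \textbf{upper bound}, I would first observe that by the definition of the reduced states and the product form of $\ket{\bm\psi}$, each single-mode fidelity can be written as $F(\rho_i,\psi_i) = \Tr[\bm\rho\,(\proj{\psi_i}\otimes \Id_{\text{rest}})]$, i.e. the expectation of the local projector $P_i := \proj{\psi_i}$ extended by identity on the other modes. The global fidelity is the expectation of the full product projector $P := P_1 \otimes \cdots \otimes P_m$. The inequality $W \le F(\bm\rho,\bm\psi)$ then amounts to the operator inequality
\be
\Id - \sum_{i=1}^m (\Id - P_i) \le P_1\otimes\cdots\otimes P_m,
\ee
where each $P_i$ is understood as acting on mode $i$ and as identity elsewhere. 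I would prove this by a union-bound-type argument: writing $Q_i := \Id - P_i$ for the orthogonal complement projectors, the left side is $\Id - \sum_i Q_i$ and the right side is $\prod_i(\Id - Q_i)$, so the claim reduces to $\prod_i(\Id - Q_i) \ge \Id - \sum_i Q_i$, which is the operator version of the inclusion--exclusion/Bonferroni inequality for commuting projectors (they commute because they act on distinct tensor factors). Taking the expectation in the state $\bm\rho$ gives the result directly.

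For the \textbf{lower bound}, I would turn the same inclusion--exclusion around. The complementary projector $\Id - P = \Id - \prod_i P_i$ satisfies the subadditivity bound $\Id - \prod_i P_i \le \sum_i (\Id - P_i) = \sum_i Q_i$, again because the $P_i$ commute. Taking the expectation in $\bm\rho$ yields $1 - F(\bm\rho,\bm\psi) \le \sum_i (1 - F(\rho_i,\psi_i)) = 1 - W$, which rearranges to $W \le F(\bm\rho,\bm\psi)$ once more -- so I must be careful about directions. The genuine lower bound $1 - m(1 - F(\bm\rho,\bm\psi)) \le W$ instead requires bounding each $1 - F(\rho_i,\psi_i)$ \emph{above} by $1 - F(\bm\rho,\bm\psi)$; this follows from monotonicity of fidelity under the partial trace (a CPTP map), since $F(\rho_i,\psi_i) \ge F(\bm\rho,\bm\psi)$ for each $i$. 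Summing these $m$ inequalities and substituting into the definition of $W$ gives $W = 1 - \sum_i(1 - F(\rho_i,\psi_i)) \ge 1 - m(1 - F(\bm\rho,\bm\psi))$, as required.

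\textbf{Main obstacle.} The only delicate point is getting the directions of the two inequalities straight and making sure the operator inclusion--exclusion step is rigorously justified. The commuting-projector inequality $\prod_i(\Id - Q_i) \ge \Id - \sum_i Q_i$ is elementary but must be proved (e.g. by induction on $m$, using $0 \le Q_i \le \Id$ and commutativity at each step), and one should verify that it genuinely needs only commutativity and positivity rather than any product structure beyond that. The lower bound, by contrast, rests entirely on monotonicity of fidelity under partial trace, which I would simply cite as a standard property; the challenge there is purely bookkeeping in summing the $m$ local estimates without double-counting.
```
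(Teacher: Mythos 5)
Your proposal is correct and takes essentially the same approach as the paper: your commuting-projector inclusion--exclusion bound is precisely the paper's telescoping-sum argument for $W\le F(\bm\rho,\bm\psi)$, and your lower bound rests on the same intermediate fact $F(\bm\rho,\bm\psi)\le F(\rho_i,\psi_i)$ for each $i$. The only cosmetic difference is that the paper derives this last fact from the elementary operator inequality $\ket{\psi_1}\!\bra{\psi_1}\otimes\dots\otimes\ket{\psi_m}\!\bra{\psi_m}\le\mathbb{1}\otimes\ket{\psi_i}\!\bra{\psi_i}\otimes\mathbb{1}$ rather than citing monotonicity of fidelity under the partial trace, which is the same fact specialised to pure product targets.
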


\noindent We give a proof in Appendix~\ref{app:lemproduct}.  This result shows that $W$ is a tight lower bound on the fidelity. Importantly, this fidelity witness is only a function of the single-mode fidelities. In particular, being able to estimate single-mode fidelities with single-mode pure states using balanced heterodyne detection is enough to estimate fidelity witnesses with pure product states using balanced heterodyne detection. 

At this point, we can estimate tight fidelity witnesses for pure product states using Protocol~\ref{prot:fe} for each of the single-mode subsystems in parallel. We make use of the properties of heterodyne detection in order to further extend the class of target states for which a tight fidelity witness can be efficiently obtained. 

\textit{Secondly}, a passive linear transformation followed by single-mode Gaussian unitary operations before unbalanced heterodyne detection is equivalent to performing balanced heterodyne detection directly, then post-processing efficiently the classical samples. Formally:

\begin{lem}\label{lem:hetmagic} Let $\bm\beta,\bm\xi\in\mathbb C^m$ and let $\hat V=\hat S(\bm\xi)\hat D(\bm\beta)\,\hat U$, where $\hat U$ is an $m$-mode passive linear transformation with $m\times m$ unitary matrix $U$. For all $\bm\gamma\in\mathbb C^m$, let $\bm\alpha=U^\dag(\bm\gamma-\bm\beta)$. Then,
\be
\Pi^{\bm\xi}_{\bm\gamma}=\hat V\Pi^{\bm0}_{\bm\alpha}\hat V^\dag.
\ee
\end{lem}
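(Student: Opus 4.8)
The plan is to prove the operator identity $\Pi^{\bm\xi}_{\bm\gamma}=\hat V\Pi^{\bm0}_{\bm\alpha}\hat V^\dag$ by showing that the squeezed coherent state $\ket{\bm\gamma,\bm\xi}$ appearing in the left-hand POVM element equals (up to a global phase, which drops out of the rank-one projector) the state $\hat V\ket{\bm\alpha,\bm0}=\hat V\ket{\bm\alpha}$. Since both sides are $\frac{1}{\pi^m}$ times a rank-one projector onto a pure state, and the normalisation factors match, it suffices to establish $\hat V\ket{\bm\alpha}=e^{i\varphi}\ket{\bm\gamma,\bm\xi}$ for some real phase $\varphi$. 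Because $\hat V=\hat S(\bm\xi)\hat D(\bm\beta)\,\hat U$ and all operators factorise as tensor products over modes except for $\hat U$, I would first handle the passive part $\hat U$ and then the local Gaussian parts $\hat S(\bm\xi)\hat D(\bm\beta)$.

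First I would recall the defining action of a passive linear transformation on coherent states. A passive $\hat U$ with unitary matrix $U$ acts on the mode operators by $\hat U^\dag\hat a_i\hat U=\sum_j U_{ij}\hat a_j$, and consequently maps the coherent state $\ket{\bm\alpha}$ to the coherent state $\ket{U\bm\alpha}$ (passive transformations preserve the vacuum and map coherent states to coherent states with the displacement vector rotated by $U$). Applying this to $\bm\alpha=U^\dag(\bm\gamma-\bm\beta)$ gives $\hat U\ket{\bm\alpha}=\ket{U\bm\alpha}=\ket{\bm\gamma-\bm\beta}$. This is the step where the particular choice $\bm\alpha=U^\dag(\bm\gamma-\bm\beta)$ is engineered precisely so that the interferometer output is the coherent state centred at $\bm\gamma-\bm\beta$.

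Next I would apply the displacement $\hat D(\bm\beta)=\bigotimes_i\hat D(\beta_i)$ to $\ket{\bm\gamma-\bm\beta}$. Using the single-mode identity $\hat D(\beta)\ket{\gamma-\beta}=e^{i\,\im(\beta(\gamma-\beta)^*)}\ket{\gamma}$ (composition of displacements, $\hat D(\beta)\hat D(\gamma-\beta)=e^{i\,\im[\beta(\gamma-\beta)^*]}\hat D(\gamma)$ acting on the vacuum), I obtain $\hat D(\bm\beta)\ket{\bm\gamma-\bm\beta}=e^{i\varphi_1}\ket{\bm\gamma}$ with $\varphi_1$ a real phase collecting the per-mode contributions. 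Finally, applying $\hat S(\bm\xi)=\bigotimes_i\hat S(\xi_i)$ gives $\hat S(\bm\xi)\ket{\bm\gamma}=\ket{\bm\gamma,\bm\xi}$ by the very definition of the squeezed coherent state $\ket{\gamma_i,\xi_i}=\hat S(\xi_i)\hat D(\gamma_i)\ket0=\hat S(\xi_i)\ket{\gamma_i}$ stated in the heterodyne section. Chaining these three steps yields $\hat V\ket{\bm\alpha}=e^{i\varphi_1}\ket{\bm\gamma,\bm\xi}$, so that $\hat V\ket{\bm\alpha}\!\bra{\bm\alpha}\hat V^\dag=\ket{\bm\gamma,\bm\xi}\!\bra{\bm\gamma,\bm\xi}$, the phase cancelling; multiplying by $\frac1{\pi^m}$ gives the claimed identity.

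The routine algebra (the explicit phase bookkeeping from displacement composition) is not the real difficulty, since those phases are irrelevant once we pass to the rank-one projector. The main point to get right is the action of the passive transformation on multimode coherent states, namely that $\hat U\ket{\bm\alpha}=\ket{U\bm\alpha}$ with the correct convention for $U$ versus $U^\dag$; I would verify this convention carefully against the definition $\hat U^\dag\hat a_i\hat U=\sum_j U_{ij}\hat a_j$, since an inconsistent convention here is exactly what would force $\bm\alpha=U^\dag(\bm\gamma-\bm\beta)$ rather than $U(\bm\gamma-\bm\beta)$, and matching this to the hypothesis of the lemma is the crux of the argument.
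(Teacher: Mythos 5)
Your proof is correct and follows essentially the same route as the paper's: both rest on the decomposition $\hat V=\hat S(\bm\xi)\hat D(\bm\beta)\hat U$ together with the facts that a passive transformation maps $\ket{\bm\alpha}$ to $\ket{U\bm\alpha}$, that displacements shift coherent amplitudes up to a global phase, and that $\hat S(\bm\xi)\ket{\bm\gamma}=\ket{\bm\gamma,\bm\xi}$ by definition. The only cosmetic difference is that you chain these identities at the state-vector level with explicit phase bookkeeping, whereas the paper states them directly as conjugation identities on the POVM elements $\Pi^{\bm 0}_{\bm\alpha}$, where the phases are absorbed automatically.
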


\noindent We give a proof in Appendix~\ref{app:lemhetmagic}. A striking consequence is that a certain class of quantum operations before a balanced heterodyne measurement can be inverted efficiently by unbalancing the detection and perfoming classical operations on the samples instead. This result generalises a technique used in continuous-variable quantum key distribution protocols based on heterodyne detection, as well as in a scheme for homomorphic encryption of linear optics quantum computation~\cite{ouyang2020homomorphic}, which allows one to apply a random unitary operation at the level of the classical samples (i.e., a multiplication of the vector of samples by a unitary matrix) rather than directly on the quantum state~\cite{leverrier2013security}.
In particular, for such a transformation $\hat V$, if a multimode pure product state $\bigotimes_{i=1}^m{\ket{\psi_i}}$ can be efficiently verified using balanced heterodyne detection, then the state $\hat V\bigotimes_{i=1}^m{\ket{\psi_i}}$ can be efficiently verified using unbalanced heterodyne detection by post-processing classically the samples obtained.

This leads us to our multimode fidelity witness estimation protocol. Like for Protocol~\ref{prot:fe}, we give the version of our protocol under i.i.d.\@ assumption, and discuss afterwards its version without i.i.d.\@ assumption:

\begin{prot}[Multimode fidelity witness estimation]\label{prot:we}
Let $c_1,\dots,c_m\in\mathbb N^*$. Let $\ket{C_i}=\sum_{n=0}^{c_i-1}{c_{i,n}\ket n}$ be a core state, for all $i\in\{1,\dots,m\}$. Let $\hat U$ be an $m$-mode passive linear transformation with $m\times m$ unitary matrix $U$, and let $\bm\beta,\bm\xi\in\mathbb C^m$. We write $\ket{\bm\psi}=\hat S(\bm\xi)\hat D(\bm\beta)\,\hat U\bigotimes_{i=1}^m\ket{C_i}$ the $m$-mode target pure state. Let $N,M\in\mathbb N^*$, and let $p_1,\dots,p_m\in\mathbb N^*$ and $0<\eta_1,\dots,\eta_m<1$ be free parameters. Let $\bm\rho^{\otimes N+M}$ be $N+M$ copies of an unknown $m$-mode (mixed) quantum state $\bm\rho$.
\begin{enumerate}
\item
Measure all $m$ subsystems of $N$ copies of $\bm\rho$ with unbalanced heterodyne detection with unbalancing parameters $\bm\xi=(\xi_1,\dots,\xi_m)$, obtaining the vectors of samples $\bm\gamma^{(1)},\dots,\bm\gamma^{(N)}\in\mathbb C^m$.
\item
For all $k\in\{1,\dots,N\}$, compute the vectors $\bm\alpha^{(k)}=U^\dag\left(\bm\gamma^{(k)}-\bm\beta\right)$. We write $\bm\alpha^{(k)}=(\alpha^{(k)}_1,\dots\alpha^{(k)}_m)\in\mathbb C^m$.
\item
For all $i\in\{1,\dots,m\}$, compute the mean $F_{C_i}(\bm\rho)$ of the function $z\mapsto g_{C_i}^{(p_i)}(z,\eta_i)$ (defined in Eq.~(\ref{gCmain})) over the samples $\alpha_i^{(1)},\dots,\alpha_i^{(N)}\in\mathbb C$.
\item
Compute the fidelity witness estimate $W_{\bm\psi}=1-\sum_{i=1}^m{(1-F_{C_i}(\bm\rho)^M)}$.
\end{enumerate}
\end{prot}

\noindent The value $W_{\bm\psi}$ obtained is an estimate of a tight lower bound on the fidelity between the $M$ remaining copies of $\bm\rho$ and $M$ copies of the $m$-mode target state $\ket{\bm\psi}=\hat S(\bm\xi)\hat D(\bm\beta)\,\hat U\bigotimes_{i=1}^m\ket{C_i}$. The efficiency and completeness of the protocol are summarised by the following result:

\begin{theo}\label{th:we}
Let $\epsilon,\delta>0$. With the notations of Protocol~\ref{prot:we}, for all $i\in\{1,\dots,m\}$, let $n_i$ be the number of indices $j\in\{1,\dots,m\}$ such that $\frac{c_j}{p_j}=\frac{c_i}{p_i}$. Then,
\be
1\!-m(1\!-\!F(\ket{\bm\psi}^{\!\otimes M}\!\!\!\!\!,\bm\rho^{\otimes M})-\epsilon\!\le\! W_{\bm\psi}\!\le\!F(\ket{\bm\psi}^{\!\otimes M}\!\!\!\!\!,\bm\rho^{\otimes M})+\epsilon,
\label{tightW}
\ee
with probability $1-\delta$, whenever $N\ge N_2$, with 
\be
N_2=\mathcal O\!\left(\max_i\left\{\left(\frac{Mm}\epsilon\right)^{2+\frac{2c_i}{p_i}}\log(n_i)\log\left(\frac1\delta\right)\right\}\right)\!,
\ee
where the constant prefactor depends on the choice of the free parameters $p_i$ and $\eta_i$ in Protocol~\ref{prot:we}. In particular, if the tested state is perfect, i.e., $\rho^{\otimes N+M}=\ket\psi\!\bra\psi^{\otimes N+M}$, then $W_{\bm\psi}\ge1-\epsilon$ with probability $1-\delta$.
\end{theo}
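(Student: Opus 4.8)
The plan is to reduce the multimode problem to $m$ independent single-mode fidelity estimations, assemble the resulting single-mode fidelities into a witness via Lemma~\ref{lem:product}, and then control the statistical error by an optimised union bound over the modes.

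First I would use Lemma~\ref{lem:hetmagic} to reinterpret the classical post-processing of step~2. Writing $\hat V=\hat S(\bm\xi)\hat D(\bm\beta)\,\hat U$ and $\tilde{\bm\rho}:=\hat V^\dag\bm\rho\hat V$, the probability density of obtaining $\bm\gamma$ from unbalanced heterodyne of $\bm\rho$ satisfies $\Tr(\bm\rho\,\Pi^{\bm\xi}_{\bm\gamma})=\Tr(\tilde{\bm\rho}\,\Pi^{\bm0}_{\bm\alpha})$ with $\bm\alpha=U^\dag(\bm\gamma-\bm\beta)$. Since $\bm\gamma\mapsto\bm\alpha$ is measure preserving (a unitary followed by a translation), the vectors $\bm\alpha^{(k)}$ are distributed exactly as balanced heterodyne samples of $\tilde{\bm\rho}$, i.e.\ according to $Q_{\tilde{\bm\rho}}$. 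Marginalising $Q_{\tilde{\bm\rho}}$ over all modes but the $i$-th, using the coherent-state resolution of identity, shows that the coordinates $\alpha_i^{(1)},\dots,\alpha_i^{(N)}$ are i.i.d.\ samples from $Q_{\tilde\rho_i}$ with $\tilde\rho_i=\Tr_{\{1,\dots,m\}\setminus\{i\}}(\tilde{\bm\rho})$. Hence step~3 is exactly single-mode fidelity estimation of $\tilde\rho_i$ against $\ket{C_i}$, and I may invoke Theorem~\ref{th:fe} on each mode, all estimations reusing the same $N$ measurement rounds.

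Next I would control the witness. Since $F(\ket{\bm\psi}^{\otimes M},\bm\rho^{\otimes M})=F(\ket{\bm C}^{\otimes M},\tilde{\bm\rho}^{\otimes M})$ with $\ket{\bm C}=\bigotimes_i\ket{C_i}$, I apply Lemma~\ref{lem:product} to the $mM$-mode state $\tilde{\bm\rho}^{\otimes M}$, grouping the $M$ copies of each mode $i$ into a single subsystem, whose reduced state is $\tilde\rho_i^{\otimes M}$ and whose pure target is $\ket{C_i}^{\otimes M}$. This yields, for the ideal witness $W^\star:=1-\sum_i\bigl(1-F(\tilde\rho_i,\ket{C_i})^M\bigr)$, the bracketing
\be
1-m\bigl(1-F(\ket{\bm\psi}^{\otimes M},\bm\rho^{\otimes M})\bigr)\le W^\star\le F(\ket{\bm\psi}^{\otimes M},\bm\rho^{\otimes M}),
\ee
which is precisely Eq.~(\ref{tightW}) with $W^\star$ in place of $W_{\bm\psi}$, and crucially keeps the prefactor $m$ rather than $mM$ because the copies are grouped. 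It then remains to show that $W_{\bm\psi}=1-\sum_i(1-F_{C_i}(\bm\rho)^M)$ is $\epsilon$-close to $W^\star$ with probability $1-\delta$, using $|W_{\bm\psi}-W^\star|\le\sum_i|F_{C_i}(\bm\rho)^M-F(\tilde\rho_i,\ket{C_i})^M|$.

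For this last point I would run Theorem~\ref{th:fe} on each mode with target precision $\epsilon/m$ and per-mode confidence $\delta_i$, so that a union bound gives total error $\le\epsilon$ and total failure $\sum_i\delta_i\le\delta$; since Theorem~\ref{th:fe} already absorbs the $M$-th power, mode $i$ needs $N=\mathcal O\bigl((Mm/\epsilon)^{2+2c_i/p_i}\log(1/\delta_i)\bigr)$ samples. The main obstacle, and the reason for the refined $\log(n_i)$ factor instead of a naive $\log(m/\delta)$, is the optimal allocation of the $\delta_i$: because the cost grows like $A_i:=(Mm/\epsilon)^{2+2c_i/p_i}$, it is dominated by the modes of largest ratio $c_i/p_i$, and equalising the per-mode counts $N_i=N$ forces $\delta_i=e^{-N/A_i}$; the constraint $\sum_i\delta_i\le\delta$ is then governed by the $n_i$ modes sharing the dominant ratio, giving $N\ge\max_i\{A_i(\log n_i+\log\tfrac1\delta)\}=\mathcal O(N_2)$. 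Finally, the perfect-state case is immediate: if $\bm\rho=\ket\psi\!\bra\psi$ then $\tilde{\bm\rho}=\ket{\bm C}\!\bra{\bm C}$, so each $\tilde\rho_i=\ket{C_i}\!\bra{C_i}$, every single-mode fidelity equals $1$, $W^\star=1$, and the $\epsilon$ concentration yields $W_{\bm\psi}\ge1-\epsilon$ with probability $1-\delta$.
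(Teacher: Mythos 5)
Your proof is correct and follows essentially the same route as the paper's: reduce to single-mode fidelity estimation via Lemma~\ref{lem:hetmagic}, apply Theorem~\ref{th:fe} mode-by-mode at precision $\epsilon/m$ with a union bound over modes, and bracket the witness using the grouped $M$-copy form of Lemma~\ref{lem:product} (formalised in the appendix as Lemma~\ref{lem:productmulti}), with the same treatment of the perfect-state case. The only differences are cosmetic: you perform the Gaussian-unitary reduction first whereas the paper proves the product-state case first and generalises afterwards, and your explicit allocation of per-mode confidences makes the origin of the $\log(n_i)$ factor slightly more transparent than the paper's direct computation of the union-bound failure probability.
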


\noindent We give a detailed version of the theorem together with its proof in Appendix~\ref{app:thwe}, which combines Theorem~\ref{th:fe} with Lemmas~\ref{lem:product} and~\ref{lem:hetmagic}.

Note that Protocol~\ref{prot:we} uses Protocol~\ref{prot:fe} as a subroutine.
Writing $p=\min_ip_i$ and $c=\max_ic_i$, with $n_i\le m$ we obtain $N_2=\mathcal O((\frac{Mm}\epsilon)^{2+\frac{2c}p}\log(m)\log(\frac1\delta))$. In particular, compared to the single-mode case, the overhead for certifying $m$-mode states is only $\mathcal O(m^{2+\frac{2c}p}\log m)$, which is a huge improvement compared to, e.g., the exponential scaling of tomography~\cite{d2003quantum}. Since the parameter $p\in\mathbb N^*$ may vary freely, the scaling can be brought arbitrarily close to $\mathcal O((\frac{Mm}\epsilon)^2\log(m)\log(\frac1\delta))$. The constant prefactor increases with $p$, and an optimisation can yield the optimal choice for $p$ which minimises $N_2$ when $M$, $m$, $\epsilon$ and $\delta$ are fixed. Moreover, the choice of the other free parameters $0<\eta_i<1$ also contributes to minimising the constant prefactor.  Like for Protocol~\ref{prot:fe}, the efficiency of Protocol~\ref{prot:we} can also be refined by taking into account the expression of the single-mode target core states in the Fock basis.

Without the i.i.d.\ assumption, we obtain an equivalent protocol by using the version of our single-mode fidelity estimation protocol which does not assume i.i.d.\ state preparation for estimating the single-mode fidelities. Like for Protocol~\ref{prot:fe}, the final protocol is nearly identical, up to slight differences in the classical post-processing. Removing the i.i.d.\ assumption then comes at the cost of an increased number of measurements necessary for a polynomial witness precision and confidence interval, corresponding to a polynomial overhead. We give a detailed analysis in Appendix~\ref{app:wenotiid}.

The class of states for which fidelity witnesses can be efficiently estimated using Protocol~\ref{prot:we} includes classically intractable quantum states, such as the output states of Boson Sampling experiments, as we detail in the next section.


\section{Verification of Boson Sampling}
\label{sec:BS}

\noindent The setup of Boson Sampling with input single-photons consists in $n$ single-photon states fed into an interferometer over $m$ modes, together with the vacuum state over $m-n$ modes, and measured either with single-photon threshold detection~\cite{Aaronson2013} or with unbalanced heterodyne detection~\cite{chabaud2017continuous,chakhmakhchyan2017boson} (Fig.~\ref{fig:CVSverif}). The output probabilities are related to permanents of complex matrices, which are hard to compute~\cite{valiant1979complexity} and even to approximate~\cite{Aaronson2013}. A striking consequence is that classical computers cannot sample exactly from the corresponding probability distributions unless the polynomial hierarchy collapses, contradicting a widely believed conjecture from complexity theory. Aaronson and Arkhipov importantly showed that even an approximate version of their model is still hard to sample for classical computers, provided two additional conjectures on the permanent of random Gaussian matrices hold true~\cite{Aaronson2013}. More precisely, they showed under these conjectures that sampling from a probability distribution that has small constant total variation distance with an ideal Boson Sampling distribution is classically hard in the so-called antibunching regime $n=\mathcal O(\sqrt m)$ (in fact they used $m=\Omega(n^6)$ and conjectured that $m=\Omega(n^2)$ was sufficient, as it was later shown~\cite{jiang2019distances}).

The output state of a Boson Sampling experiment with $n$ photons fed into an interferometer over $m$ modes, described by an $m$-mode passive linear transformation $\hat U$ reads $\hat U\ket{1\dots10\dots0}$. This state is of the form of Eq.~(\ref{class}), with $\hat G_i=\mathbb 1$ for all $i\in\{1,\dots,m\}$, $\ket{C_i}=\ket1$ for $i\in\{1,\dots,n\}$ and $\ket{C_j}=\ket0$ for $j\in\{n+1,\dots,m\}$.
Our fidelity witness estimation protocol from the previous section can thus be applied to certify efficiently the fidelity of the output state of a Boson Sampling experiment with the ideal target output state, using only balanced heterodyne detection. Then, our protocol yields a certificate of the total variation distance with the target probability distribution for any observable by the Fuchs--Van de Graaf inequality~\cite{fuchs1999cryptographic} and standard properties of the trace distance~\cite{NielsenChuang}. Indeed, for any observable $\hat O$ and any states $\bm\rho$ and $\bm\sigma$,
\be
\|P^{\hat O}_{\bm\rho}-P^{\hat O}_{\bm\sigma}\|\le D(\bm\rho,\bm\sigma)\le\sqrt{1-F(\bm\rho,\bm\sigma)},
\label{tvd}
\ee
where the left hand side is the total variation distance between the probability distributions corresponding to a measurement of the observable $\hat O$ for the states $\bm\rho$ and $\bm\sigma$, $D$ is the trace distance and $F$ the fidelity.

We give the version of our Boson Sampling verification protocol under i.i.d.\ assumption and discuss afterwards its version without i.i.d.\ assumption.

\begin{prot}[Boson Sampling verification]\label{prot:bs}
Let $n=\mathcal O(\sqrt m)$. Let $\hat U$ be an $m$-mode passive linear transformation with $m\times m$ unitary matrix $U$. Let $\ket{\bm\psi}=\hat U\ket{1\dots10\dots0}$ be the $m$-mode Boson Sampling target state, with $n$ input photons over $m$ modes. Let $N,M\in\mathbb N^*$, and let $p\in\mathbb N^*$ even, $0<\eta<1$ and $\lambda>\epsilon>0$ be free parameters. Let $\bm\rho^{\otimes N+M}$ be $N+M$ copies of an unknown $m$-mode (mixed) quantum state $\bm\rho$.
\begin{enumerate}
\item
Measure all $m$ subsystems of $N$ copies of $\bm\rho$ with balanced heterodyne detection, obtaining the vectors of samples $\bm\gamma^{(1)},\dots,\bm\gamma^{(N)}\in\mathbb C^m$.
\item
For all $k\in\{1,\dots,N\}$, compute the vectors $\bm\alpha^{(k)}=U^\dag\bm\gamma^{(k)}$. We write $\bm\alpha^{(k)}=(\alpha^{(k)}_1,\dots\alpha^{(k)}_m)\in\mathbb C^m$.
\item
For all $i\in\{1,\dots,n\}$, compute the mean $F_{i,\ket 1}(\bm\rho)$ of the function $z\mapsto g_{\ket 1}^{(p)}(z,\eta)$ (defined in Eq.~(\ref{gCmain})) over the samples $\alpha_i^{(1)},\dots,\alpha_i^{(N)}\in\mathbb C$, and for all $j\in\{n+1,\dots,m\}$, compute the mean $F_{j,\ket 0}(\bm\rho)$ of the function $z\mapsto g_{\ket 0}^{(p)}(z,\eta)$ over the samples $\alpha_j^{(1)},\dots,\alpha_j^{(N)}\in\mathbb C$.
\item
Compute the fidelity witness estimate $W_{\bm\psi}=1-\sum_{i=1}^n{(1-F_{i,\ket 1}(\bm\rho)^M)}-\sum_{j=n+1}^m{(1-F_{j,\ket 0}(\bm\rho)^M)}$.
\item
Abort if $W_{\bm\psi}<1-\lambda+\epsilon$.
Otherwise, accept and measure all $m$ subsystems of the remaining $M$ copies of $\bm\rho$ with unbalanced heterodyne detection (resp.\ single-photon threshold detection), obtaining the $M$ vectors of samples $(\bm s^{(1)},\dots,\bm s^{(M)})\in\mathbb C^m$ (resp.\ $\in\mathbb N^m$).
\end{enumerate}
\end{prot}

\noindent The efficiency and completeness of the protocol are summarised by the following result:

\begin{theo}\label{th:bs}
Let $\delta>0$. We use the notations of Protocol~\ref{prot:bs}. With probability $1-\delta$, either the protocol aborts or the samples $\bm s^{(1)},\dots,\bm s^{(M)}$ are drawn from a probability distribution which has less than $\sqrt\lambda$ total variation distance with the ideal Boson Sampling probability distribution, whenever $N\ge N_3$, with
\be
N_3=\mathcal O\left(\left(\frac{Mm}\epsilon\right)^2\log(m)\log\left(\frac1\delta\right)\right),
\label{Nbs}
\ee
where the constant prefactor depends on the choice of the free parameter $\eta$ in Protocol~\ref{prot:bs}.
Moreover, if the tested state is perfect, i.e., $\rho^{\otimes N+M}=\ket\psi\!\bra\psi^{\otimes N+M}$, then Protocol~\ref{prot:bs} accepts with probability $1-\delta$.
\end{theo}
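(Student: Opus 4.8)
The plan is to combine the completeness half of Theorem~\ref{th:we} (specialised to the Boson Sampling target state) with the chain of inequalities in Eq.~(\ref{tvd}), treating the abort condition in step~5 of Protocol~\ref{prot:bs} as the decision rule that converts a fidelity-witness estimate into a total-variation-distance guarantee. First I would observe that the target state $\ket{\bm\psi}=\hat U\ket{1\dots10\dots0}$ is exactly of the form covered by Protocol~\ref{prot:we}, with $\hat V=\hat U$ (so $\bm\beta=\bm\xi=\bm 0$ and balanced heterodyne suffices), each $\ket{C_i}=\ket1$ for $i\le n$ and $\ket{C_j}=\ket0$ for $j>n$. Here every core state has bounded support with $c_i\le 2$, so $c=\max_i c_i=2$ is a constant independent of $m$; this is the structural fact that collapses the general exponent $2+\tfrac{2c}{p}$ down to something that, for $p$ chosen large enough (and even, as required), is arbitrarily close to $2$, giving the clean $N_3=\mathcal O((\tfrac{Mm}\epsilon)^2\log(m)\log(\tfrac1\delta))$ scaling of Eq.~(\ref{Nbs}).

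Next I would invoke Theorem~\ref{th:we} to assert that, for $N\ge N_2$, with probability $1-\delta$ the estimate $W_{\bm\psi}$ satisfies the two-sided bound~(\ref{tightW}); specialising $c=2$ and $n_i\le m$ shows $N_2$ reduces to the stated $N_3$. I would then split into the two cases dictated by step~5. If the protocol does not abort, then $W_{\bm\psi}\ge 1-\lambda+\epsilon$, and the upper bound $W_{\bm\psi}\le F(\ket{\bm\psi}^{\otimes M},\bm\rho^{\otimes M})+\epsilon$ from~(\ref{tightW}) yields $F(\ket{\bm\psi}^{\otimes M},\bm\rho^{\otimes M})\ge 1-\lambda$. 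Feeding this into Eq.~(\ref{tvd}) gives, for any observable $\hat O$,
\be
\|P^{\hat O}_{\bm\rho^{\otimes M}}-P^{\hat O}_{\ket{\bm\psi}^{\otimes M}}\|\le\sqrt{1-F}\le\sqrt\lambda,
\ee
which is precisely the claimed total-variation guarantee on the samples $\bm s^{(1)},\dots,\bm s^{(M)}$; this covers both the unbalanced-heterodyne and the threshold-detector readout, since Eq.~(\ref{tvd}) holds for \emph{any} observable. For the completeness claim, if $\rho^{\otimes N+M}=\ket\psi\!\bra\psi^{\otimes N+M}$ then the fidelity is $1$, so the lower bound $1-m(1-F)-\epsilon$ in~(\ref{tightW}) becomes $W_{\bm\psi}\ge 1-\epsilon>1-\lambda+\epsilon$ (using $\lambda>2\epsilon$, or more carefully $\lambda>\epsilon$ with the appropriate rescaling built into the parameter choice), so the protocol accepts with probability $1-\delta$.

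The main obstacle, and the step I would spend the most care on, is verifying that the parameter bookkeeping actually delivers the stated exponent $2$ rather than $2+\tfrac{2c}p$. This requires justifying that the free parameter $p$ can be chosen as a constant (even, per the protocol hypothesis) large enough that the residual $\tfrac{2c}{p}$ contribution is absorbed into the $\mathcal O(\cdot)$ without introducing hidden $m$-dependence through the constant prefactor; I would need to confirm that the optimisation over $\eta$ and the $\log(n_i)\le\log m$ factor from Theorem~\ref{th:we} combine to leave only the $\log m$ already displayed in~(\ref{Nbs}). A secondary subtlety is the interplay between the precision $\epsilon$ used for the witness estimate and the threshold gap $\lambda-\epsilon$ in the abort condition: I would need to check that the two-sided error in~(\ref{tightW}) is consistent with both the soundness direction (no abort $\Rightarrow$ genuine closeness) and the completeness direction (perfect state $\Rightarrow$ no abort) under the single constraint $\lambda>\epsilon>0$, possibly tightening to a factor-of-two margin. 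The antibunching condition $n=\mathcal O(\sqrt m)$ plays no role in the verification bound itself — it is inherited purely from the hardness result of Aaronson--Arkhipov — so I would flag that the theorem's guarantee is unconditional and that $n=\mathcal O(\sqrt m)$ only fixes the regime in which the verified samples are classically intractable.
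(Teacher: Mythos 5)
Your soundness and completeness bookkeeping is correct and matches the paper's: no abort plus the upper bound of Eq.~(\ref{tightW}) gives $F(\ket{\bm\psi}^{\otimes M},\bm\rho^{\otimes M})\ge1-\lambda$, Eq.~(\ref{tvd}) converts this into a $\sqrt\lambda$ total-variation bound valid for any observable (hence for both readout choices), and the perfect state passes the threshold modulo the $\lambda\ge2\epsilon$ margin you rightly flag. The genuine gap is exactly the step you single out as the main obstacle, and it cannot be repaired along your route: invoking Theorem~\ref{th:we} for the Boson Sampling state (with $c=\max_ic_i=2$) gives a sample complexity with exponent $2+\tfrac{2c_i}{p_i}>2$, and for \emph{any} fixed choice of the free parameters the residual factor $\left(\tfrac{Mm}\epsilon\right)^{2c/p}$ diverges with $m$, $M$ and $\tfrac1\epsilon$, so it cannot be absorbed into a constant prefactor; letting $p$ grow with $m$ does not save the argument either, because the prefactor $K_{C,p}$ (through $B_C^{(p)}$) grows with $p$ and leaves extra polylogarithmic factors. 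The paper says this explicitly at the start of Appendix~\ref{app:optiBS}: direct use of Protocol~\ref{prot:we} only yields $\mathcal O\bigl(m^{2+\frac2p}\log m\log\tfrac1\delta\bigr)$, i.e., the claimed scaling only asymptotically in $p$, and Theorem~\ref{th:bs} is proved by a different argument.

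The missing idea is precisely the reason Protocol~\ref{prot:bs} demands $p$ \emph{even}, a hypothesis your proof never uses. For the diagonal matrix elements relevant to Fock-state targets, Lemma~\ref{lem:inductiong} gives
\be
\underset{\alpha\leftarrow Q_\rho}{\mathbb E}[g_{k,k}^{(p)}(\alpha,\eta)]=\rho_{kk}+(-1)^{p+1}\sum_{q=p}^{+\infty}{\rho_{k+q,k+q}\,\eta^q\binom{q-1}{p-1}\binom{k+q}k},
\ee
and for $p$ even the correction term is nonpositive, so the expectation of the estimator is itself a \emph{one-sided} lower bound on $F(\ket k\!\bra k,\rho)=\rho_{kk}$, with zero bias when $\rho=\ket k\!\bra k$. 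This removes the need to make the analytical bias small compared to $\epsilon$: the error-budget split $\lambda+\eta^pA_C^{(p)}=\epsilon$ in the proof of Theorem~\ref{th:fe}, which forces $\eta\sim\epsilon^{1/p}$ and is the sole origin of the exponent $2+\tfrac{2c}p$, is avoided entirely. One keeps $\eta$ constant, uses the bounds $|g_{0,0}^{(p)}|\le p/\eta$ and $|g_{1,1}^{(p)}|\le p(p+1)/(2\eta^2)$, and Hoeffding plus a union bound over the $m$ modes yields the exponent exactly $2$ of Eq.~(\ref{Nbs}); soundness survives because the bias can only decrease the estimated witness, and completeness survives because the bias vanishes identically on the perfect state. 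In short, the paper replaces the single-mode fidelity \emph{estimates} of Protocol~\ref{prot:fe} by single-mode fidelity \emph{witnesses} as the subroutine, and without this substitution your argument proves only the weaker bound $N=\mathcal O\bigl(\left(\tfrac{Mm}\epsilon\right)^{2+\frac4p}\log m\log\tfrac1\delta\bigr)$, not Theorem~\ref{th:bs}.
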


\noindent We give a proof of the theorem in Appendix~\ref{app:optiBS}, which follows from optimising the efficiency of Protocol~\ref{prot:we} for Boson Sampling output states, together with Eq.~(\ref{tvd}).
The choice of the free parameters $p\in\mathbb N^*$ even and $0<\eta<1$ contributes to minimising the constant prefactor. Additionnally, the choice of $\epsilon$ between $0$ and $\lambda$ can be made to ensure robustness of the protocol with Eq.~(\ref{tightW}): if $\epsilon$ is closer to $0$ then the protocol will reject a good state with less probability. On the other hand, this increases the number of samples needed with Eq.~(\ref{Nbs}). 

Protocol~\ref{prot:bs} allows one to verify a Boson Sampling experiment efficiently, by trusting only single-mode Gaussian measurements. Compared to other experimental platforms, the assumption of trusted measurements is especially relevant in the case of photonic quantum computing, where the measurement apparatus can be easily brought apart from the rest of the experiment, allowing to treat the latter as an untrusted black-box.
Moreover, we obtain a version of Protocol~\ref{prot:bs} without the i.i.d.\ assumption by using the version of Protocol~\ref{prot:we} which does not assume an i.i.d.\ behaviour for the prover from the previous section. This slightly modifies the classical post-processing, requiring the verifier to discard at random a small fraction of the samples and perform an energy test using another fraction of the samples. This comes at the cost of a polynomial overhead for the number of measurements needed for a polynomial confidence, and we refer to Appendix~\ref{app:wenotiid} for the detailed derivation.

\begin{figure}
	\begin{center}
		\includegraphics[width=\columnwidth]{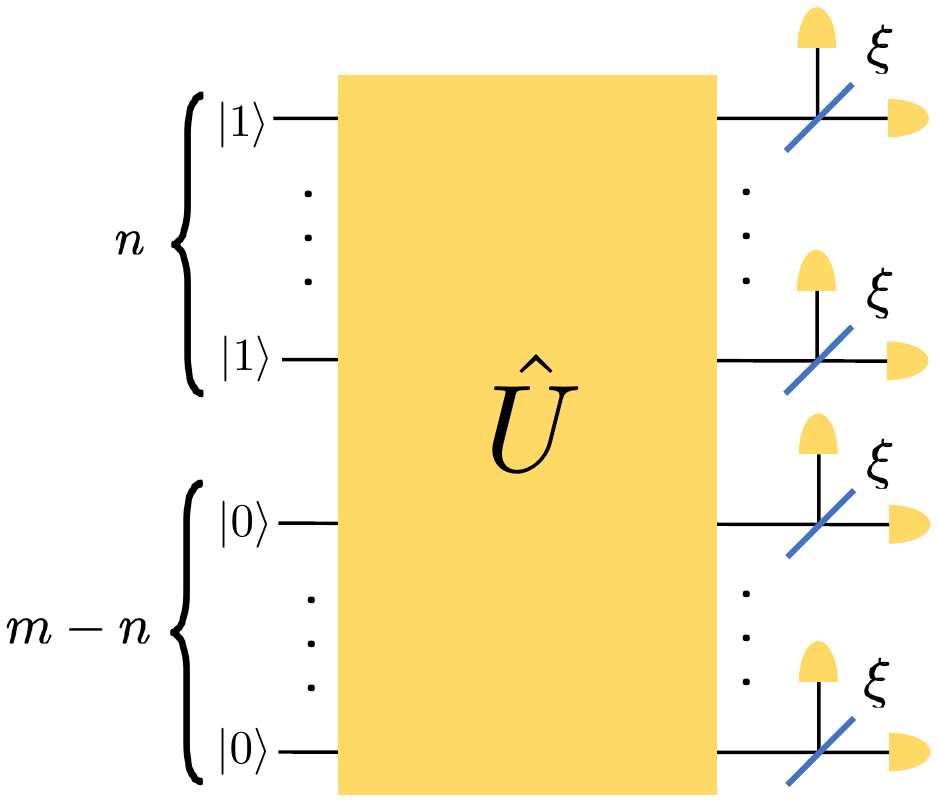}
		\caption{Boson Sampling with interferometer $U$ with $n$ photons over $m$ modes and heterodyne detection with reconfigurable unbalancing parameter $\xi$. The detectors represented are homodyne detectors. Setting $\xi=0$ (balanced detection) allows to certify efficiently the multimode output state with a fidelity witness. Setting $\xi\neq0$, with $|\xi|=\Omega(2^{-\poly m})$~\cite{chabaud2021continuous} allows to perform efficiently a sampling task which is hard for classical computers, unless the polynomial hierarchy collapses\cite{chabaud2017continuous,chakhmakhchyan2017boson}.}
		\label{fig:CVSverif}
	\end{center}
\end{figure}

By simply changing the unbalancing of heterodyne detection of the output modes of a Boson Sampling interferometer, one can switch between verification of Boson Sampling output states and demonstration of quantum speedup with continuous-variable measurements within the same experimental setup. Indeed, Boson Sampling interferometers with unbalanced heterodyne detection are hard to sample classically when the unbalancing of the heterodyne detection is not too small~\cite{chabaud2017continuous,chakhmakhchyan2017boson}, but their output can be efficiently checked simply by switching to balanced heterodyne detection with Protocol~\ref{prot:bs}. This can be done within the same experimental setup using a tunable beam splitter (Fig.~\ref{fig:CVSverif}). However, showing the hardness of approximate sampling for Boson Sampling with Gaussian measurements is an important step before an experimental demonstration of quantum speedup with continuous-variable Gaussian measurements. 
Alternatively, by switching between balanced heterodyne detection and single-photon threshold detectors, for example with a movable mirror~\cite{abrahao2018continuous}, one can switch between verification of Boson Sampling output states and demonstration of quantum speedup with discrete variable measurements, for which approximate sampling hardness is demonstrated, assuming two conjectures on the permanent of random Gaussian matrices and the conjecture that the polynomial hierarchy does not collapse~\cite{Aaronson2013}.


\section{Discussion}
\label{sec:conclusion}

\noindent In this work, we have introduced various methods for efficiently retrieving information about single-mode and multimode continuous-variable quantum states. 

We have generalised heterodyne estimates both from~\cite{paris1996quantum} and~\cite{chabaud2019building}, obtaining the best of both worlds: a reliable single-mode fidelity estimation protocol with heterodyne detection which does not make an assumption of bounded support for the measured state, while remaining efficient (Protocol~\ref{prot:fe}).

With this fidelity estimation protocol for single subsystems, we have derived a fidelity witness estimation protocol with heterodyne detection for a large class of quantum states over multiple subsystems (Protocol~\ref{prot:we}). This result stems from two observations: on the one hand, if all the subsystems of a quantum state are close to pure states, then this quantum state is close to the tensor product of the pure states; on the other hand, a large class of quantum operations before heterodyne detection can be reversed via classical post-processing. 

Applying our fidelity witness estimation protocol to the case of Boson Sampling, we have shown how to verify efficiently Boson Sampling experiments using heterodyne detection (Protocol~\ref{prot:bs}). Our verification applies to the original Boson Sampling proposal~\cite{Aaronson2013}, where the sampling giving rise to quantum speedup is performed using single-photon threshold detectors, as well as to related proposals~\cite{chabaud2017continuous,chakhmakhchyan2017boson}, where the sampling is performed using unbalanced heterodyne detection instead. In the latter case, the verification and the computational runs only differ by the unbalancing of the verifier's heterodyne detection. Using a tunable beam splitter, this means that Boson Sampling and its verification can be implemented within the same setup. Note however that there is no rigorous approximate sampling hardness result for Boson Sampling with continuous-variable measurements, which is a crucial point before any experimental demonstration. It will likely require extending anti-concentration and average-case hardness conjectures for the permanent or the hafnian to the loop hafnian~\cite{kruse2019detailed,quesada2019franck,chabaud2020classical}. On the other hand, one can already perform a demonstration of quantum speedup with Boson Sampling and discrete variable measurements, by using balanced heterodyne detection for the verification and single-photon threshold detection for the sampling.

For all our protocols, we have derived a version with i.i.d.\ assumption and another without this assumption. The two versions share the same structure, and given that the latter only comes at the cost of a polynomial overhead in the number of measurements and additional classical post-processing, using the i.i.d.\ version of our protocol for verifying a large-scale Boson Sampling experiment would already provide a convincing evidence of quantum speedup with photonic computing.

To that end, it would be interesting to optimise further our protocols for an experimental implementation, by finding the optimal choice for the free parameters and by analysing the case where the verifier has access to imperfect rather than ideal heterodyne detection~\cite{paris1996quantum}. As previously mentioned, it would also be interesting to show the hardness of approximate sampling for Boson Sampling with Gaussian measurements, as it would allow a verifier restricted to single-mode Gaussian measurements to perform both the verification and the hard sampling task.

Demonstration of quantum speedup is likely to be achieved not via a single experiment but rather through numerous experimental realisations competing with always-improving classical simulation algorithms~\cite{neville2017classical,garcia2019simulating,dalzell2020many,clifford2020faster,barak2020spoofing,huang2020classical}. In that context, but also long after the quantum speedup milestone is achieved, efficient methods for reliable certification of quantum devices will be of utmost importance~\cite{eisert2019quantum}. Beyond the quantum speedup milestone, our methods will find various applications for the reliable certification of multimode continuous-variable quantum states in existing and upcoming experiments.


\subsection*{Acknowledgments}

\noindent We thank Tom Douce, Anthony Leverrier, Saleh Rahimi-Keshari, Gana\"el Roeland, Mattia Walschaers, Nicolas Treps and Valentina Parigi for interesting discussions. We acknowledge support of the European Union's Horizon 2020 Research and Innovation Program under Grant Agreement No. 820445 (QIA), and the ANR through the ANR-17-CE24-0035 VanQuTe.


\bibliographystyle{linksen}
\bibliography{bibliography}


\onecolumn

\newpage

\appendix


\begin{center}
\LARGE \textbf{Appendix}
\end{center}

In this appendix we provide the proofs for the results stated in the main text.

\section{Unbalanced heterodyne detection}
\label{app:hetPOVM}

We relate the POVM for unbalanced heterodyne detection to the experimental detection setup of Fig.~\ref{fig:2homodyne}. Due to the tensor product structure, we only consider the single-mode setting.  We use the definitions $\hat D(\alpha)=\exp[\alpha\hat a^\dag-\alpha^*\hat a]$, for all $\alpha\in\mathbb C$,  $\hat S(\xi)=\exp[\frac12(\xi^*\hat a^2-\xi\hat a^{\dag2})]$, for all $\xi\in\mathbb C$, and $\hat R(\theta)=\exp[i\theta\hat a^\dag\hat a]$ for all $\theta\in[0,2\pi]$, for the displacement operator, the squeezing operator, and the phase operator, respectively.

When the squeezing parameter is real, the derivation can be found in Ref.~\cite{chabaud2017continuous} (note the difference in the definition of the squeezing operator with ours): the POVM elements for heterodyne detection with unbalancing parameter $r\in\mathbb R$ are given by
\begin{equation}
\Pi_{\alpha}^r=\frac1\pi\ket{\alpha,r}\!\bra{\alpha,r},
\end{equation}
for all $\alpha\in\mathbb C$, where $\ket{\alpha,r}=\hat S(r)\hat D(\alpha)\ket0$. In this case, the squeezing parameter is related to the experimental setup by $r=\log\frac TR$. 

The general case $\xi=re^{i\theta}$ with $\theta\neq0$ is obtained by adding a phase operator $\hat R\left(-\frac\theta2\right)$ in the input and changing the local oscillator amplitude from $\alpha$ to $\alpha e^{-\frac{i\theta}2}$ (note that this may be done at the level of the classical samples, i.e., by multiplying the outcomes of the heterodyne detection by a phase $e^{-\frac{i\theta}2}$). Hence, the POVM element corresponding to the detection setup in Fig.~\ref{fig:2homodyne} is given by
\begin{equation}
\begin{aligned}
\Pi_{\alpha}^\xi&=\hat R\left(\frac\theta2\right)\Pi_{e^{-\frac{i\theta}2}\alpha}^r\hat R\left(\frac\theta2\right)^\dag\\
&=\frac1\pi\hat R\left(\frac\theta2\right)\hat S(r)\hat D(e^{-\frac{i\theta}2}\alpha)\ket0\!\bra0\hat D(e^{-\frac{i\theta}2}\alpha)^\dag\hat S(r)^\dag\hat R\left(\frac\theta2\right)^\dag.
\end{aligned}
\end{equation}
Now $\hat R\left(\frac\theta2\right)\hat a^\dag\hat R\left(-\frac\theta2\right)=e^{\frac{i\theta}2}\hat a^\dag$ and $\hat R\left(\frac\theta2\right)\hat a\hat R\left(-\frac\theta2\right)=e^{-\frac{i\theta}2}\hat a$ so
\begin{equation}
\begin{aligned}
\hat R\left(\frac\theta2\right)\hat S(r)\hat D(e^{-\frac{i\theta}2}\alpha)\ket0&=\hat R\left(\frac\theta2\right)\hat S(r)\hat R\left(-\frac\theta2\right)\hat D(\alpha)\hat R\left(\frac\theta2\right)\ket0\\
&=\hat S(re^{i\theta})\hat D(\alpha)\ket0,
\end{aligned}
\end{equation}
where we used $\hat R\left(\frac\theta2\right)\ket0=\ket0$. Hence $\Pi_{\alpha}^\xi=\frac1\pi\ket{\alpha,\xi}\!\bra{\alpha,\xi}$, which completes the proof.


\section{Proof of Lemma~\ref{lem:main}}
\label{app:lemmain}

\noindent We recall notations from the main text. Let us introduce for $k,l\ge0$ the polynomials
\be
\ba
\mathcal{L}_{k,l}(z)&=e^{zz^*}\frac{(-1)^{k+l}}{\sqrt{k!}\sqrt{l!}}\frac{\partial^{k+l}}{\partial z^k\partial z^{*l}}e^{-zz^*}\\
&=\sum_{p=0}^{\min{(k,l)}}{\frac{\sqrt{k!}\sqrt{l!}(-1)^p}{p!(k-p)!(l-p)!}z^{l-p}z^{*k-p}},
\ea
\label{app2DL}
\ee
for $z\in\mathbb C$, which are, up to a normalisation, the Laguerre $2$D polynomials. For all $k,l\in\mathbb N$, we give with these polynomials the functions~\cite{chabaud2019building,chabaud2019buildingarXiv}
\be
\ba
f_{k,l}(z,\eta)=\frac1{\eta^{1+\frac{k+l}2}} e^{\left(1-\frac{1}{\eta}\right)zz^*}\mathcal{L}_{l,k}\left(\frac z{\sqrt{\eta}}\right),
\label{appfkl}
\ea
\ee
for all $z\in\mathbb C$ and all $0<\eta<1$. The function $z\mapsto f_{k,l}(z,\eta)$, being a polynomial multiplied by a converging Gaussian function, is bounded over $\mathbb C$.
Now let us define, for all $p\in\mathbb N^*$, all $z\in\mathbb C$ and all $0<\eta<1$, the more general bounded functions
\be
g_{k,l}^{(p)}(z,\eta):=\sum_{j=0}^{p-1}{(-1)^j\eta^jf_{k+j,l+j}(z,\eta)\sqrt{\binom{k+j}k\binom{l+j}l}}.
\label{appg}
\ee
We start by showing the following result, which generalises Lemma~3 of \cite{chabaud2019buildingarXiv}:

\begin{lem}\label{lem:inductiong}
Let $p\in\mathbb N^*$, let $k,l\in\mathbb N$ and let $0<\eta<1$. Let $\rho=\sum_{i,j=0}^{+\infty}{\rho_{ij}\ket i\!\bra j}$ be a density operator.
Then,
\be
\underset{\alpha\leftarrow Q_\rho}{\mathbb E}[g_{k,l}^{(p)}(\alpha,\eta)]=\rho_{kl}+(-1)^{p+1}\sum_{q=p}^{+\infty}{\rho_{k+q,l+q}\eta^q\binom{q-1}{p-1}\sqrt{\binom{k+q}k\binom{l+q}l}},
\ee
where the function $g_{k,l}$ is defined in Eq.~(\ref{appg}).
\end{lem}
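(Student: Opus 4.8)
The plan is to reduce everything to the single-term formula for $f_{k,l}$ and then carry out a combinatorial resummation. The starting point is the $p=1$ case, namely the identity
\be
\underset{\alpha\leftarrow Q_\rho}{\mathbb E}[f_{k,l}(\alpha,\eta)]=\sum_{r=0}^{+\infty}{\rho_{k+r,l+r}\,\eta^r\sqrt{\binom{k+r}k\binom{l+r}l}},
\ee
which is Lemma~3 of~\cite{chabaud2019buildingarXiv} and is recovered from the statement we want to prove by setting $p=1$, since then $g_{k,l}^{(1)}=f_{k,l}$ and $\binom{q-1}0=1$. Applying linearity of expectation to the finite sum defining $g_{k,l}^{(p)}$ in Eq.~(\ref{appg}), and inserting this identity with $(k,l)$ replaced by $(k+j,l+j)$, gives a double sum over $0\le j\le p-1$ and $r\ge0$. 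Each inner series converges absolutely --- using $|\rho_{ab}|\le1$ together with $0<\eta<1$, so that the general term decays geometrically despite the polynomial growth of the binomial factors --- which legitimises the rearrangements below.

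Next I would collect the terms by their total shift $q=j+r$, so that the coefficient of $\rho_{k+q,l+q}\eta^q$ becomes
\be
\sum_{j=0}^{\min(p-1,q)}{(-1)^j\sqrt{\binom{k+j}k\binom{l+j}l}\sqrt{\binom{k+q}{k+j}\binom{l+q}{l+j}}}.
\ee
The key algebraic simplification is the factorisation $\binom{k+j}k\binom{k+q}{k+j}=\binom{k+q}k\binom qj$ (and likewise for $l$), which lets me pull $\sqrt{\binom{k+q}k\binom{l+q}l}$ out of the sum and reduces what remains to the partial alternating binomial sum $\sum_{j=0}^{\min(p-1,q)}(-1)^j\binom qj$. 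Invoking the standard identity $\sum_{j=0}^m(-1)^j\binom qj=(-1)^m\binom{q-1}m$, this sum equals $1$ for $q=0$ (producing the leading term $\rho_{kl}$), vanishes for $1\le q\le p-1$, and equals $(-1)^{p-1}\binom{q-1}{p-1}$ for $q\ge p$. Assembling these three cases gives precisely the claimed expression, since $(-1)^{p-1}=(-1)^{p+1}$.

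The main obstacle is essentially bookkeeping: getting the two binomial identities right and verifying that the $1\le q\le p-1$ terms genuinely cancel, so that the sum starts at $q=p$. A slicker route avoids the partial alternating sum by inducting on $p$: writing $g_{k,l}^{(p+1)}=g_{k,l}^{(p)}+(-1)^p\eta^p\,f_{k+p,l+p}\sqrt{\binom{k+p}k\binom{l+p}l}$, one applies the base case to the new term and the induction hypothesis to $g_{k,l}^{(p)}$, then combines the two resulting series over $q\ge p$ using Pascal's rule in the form $\binom{q-1}{p-1}-\binom qp=-\binom{q-1}p$; the $q=p$ term then drops out because $\binom{p-1}p=0$, reproducing the formula with $p$ replaced by $p+1$. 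Either route is elementary once the base-case formula is in hand.
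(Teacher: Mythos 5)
Your proposal is correct, and your primary route is genuinely different from the paper's. The paper proves the lemma by induction on $p$: it writes $g_{k,l}^{(p+1)}=g_{k,l}^{(p)}+(-1)^p\eta^p\sqrt{\binom{k+p}k\binom{l+p}l}\,f_{k+p,l+p}$, applies the base identity (Lemma~3 of~\cite{chabaud2019buildingarXiv}) to the new term and the induction hypothesis to $g_{k,l}^{(p)}$, simplifies with the factorisation $\binom{k+p}k\binom{l+p}l\binom{k+q}{k+p}\binom{l+q}{l+p}=\binom qp^2\binom{k+q}k\binom{l+q}l$, and closes with Pascal's rule $\binom qp-\binom{q-1}{p-1}=\binom{q-1}p$ --- this is exactly the ``slicker'' alternative you sketch in your last paragraph, down to the vanishing of the $q=p$ term. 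Your main argument is instead a one-pass resummation: expand all $p$ terms of $g_{k,l}^{(p)}$ via the base identity, regroup the double sum by the total shift $q=j+r$, use the same factorisation $\binom{k+j}k\binom{k+q}{k+j}=\binom{k+q}k\binom qj$ to pull out $\sqrt{\binom{k+q}k\binom{l+q}l}$, and evaluate the remaining coefficient with the partial alternating-sum identity $\sum_{j=0}^m(-1)^j\binom qj=(-1)^m\binom{q-1}m$. What this buys is transparency: it makes manifest in a single step why the coefficients for $1\le q\le p-1$ vanish identically (they are complete alternating binomial sums), whereas in the induction this cancellation is distributed across the inductive steps and never visible all at once. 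The price is that you must justify rearranging the double series, which you handle correctly --- there are only finitely many values of $j$, and each inner series converges absolutely since $|\rho_{ab}|\le1$ for a density operator while the binomial factors grow only polynomially against the geometric factor $\eta^q$. Both routes rest on the same external ingredient, the $p=1$ identity from~\cite{chabaud2019buildingarXiv}, so neither is more self-contained; your direct route and the paper's induction are best seen as the same algebra organised in two different orders.
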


\begin{proof}

We prove the lemma by induction on $p\in\mathbb N^*$.
By Lemma~3 of \cite{chabaud2019buildingarXiv}, with $E=+\infty$, we have, for all single-mode (mixed) state $\rho=\sum_{i,j\ge0}{\rho_{ij}\ket i\!\bra j}$, for all $k,l\in\mathbb N$
\be
\underset{\alpha\leftarrow Q_\rho}{\mathbb E}[f_{k,l}(\alpha,\eta)]=\rho_{kl}+\sum_{q=1}^{+\infty}{\rho_{k+q,l+q}\eta^q\sqrt{\binom{k+q}k\binom{l+q}l}}.
\label{Lemma3}
\ee
With Eq.~(\ref{appg}), this gives
\be
\underset{\alpha\leftarrow Q_\rho}{\mathbb E}[g_{k,l}^{(1)}(\alpha,\eta)]=\rho_{kl}+\sum_{q=1}^{+\infty}{\rho_{k+q,l+q}\eta^q\sqrt{\binom{k+q}k\binom{l+q}l}},
\ee
hence the lemma is proven for $p=1$ since $\binom{q-1}0=1$. 

Assuming the lemma holds for some $p\in\mathbb N^*$,
\be
\ba
\underset{\alpha\leftarrow Q_\rho}{\mathbb E}[g_{k,l}^{(p)}(\alpha,\eta)]&=\rho_{kl}+(-1)^{p+1}\sum_{q=p}^{+\infty}{\rho_{k+q,l+q}\eta^q\binom{q-1}{p-1}\sqrt{\binom{k+q}k\binom{l+q}l}}\\
&=\rho_{kl}+(-1)^{p+1}\rho_{k+p,l+p}\eta^p\sqrt{\binom{k+p}k\binom{l+p}l}\\
&\quad\quad\quad+(-1)^{p+1}\sum_{q=p+1}^{+\infty}{\rho_{k+q,l+q}\eta^q\binom{q-1}{p-1}\sqrt{\binom{k+q}k\binom{l+q}l}}.
\ea
\ee
Now by Eq.~(\ref{Lemma3}), with $k=k+p$ and $l=l+p$, we have
\be
\underset{\alpha\leftarrow Q_\rho}{\mathbb E}[f_{k+p,l+p}(\alpha,\eta)]=\rho_{k+p,l+p}+\sum_{q=1}^{+\infty}{\rho_{k+q+p,l+q+p}\eta^q\sqrt{\binom{k+q+p}{k+p}\binom{l+q+p}{l+q}}},
\ee
so that 
\be
\ba
\underset{\alpha\leftarrow Q_\rho}{\mathbb E}\left[g_{k,l}^{(p+1)}(\alpha,\eta)\right]&=\underset{\alpha\leftarrow Q_\rho}{\mathbb E}\left[g_{k,l}^{(p)}(\alpha,\eta)+(-1)^p\eta^p\sqrt{\binom{k+p}k\binom{l+p}l}f_{k+p,l+p}(\alpha,\eta)\right]\\
&=\rho_{kl}+(-1)^{p+1}\rho_{k+p,l+p}\eta^p\sqrt{\binom{k+p}k\binom{l+p}l}\\
&\quad\quad\quad+(-1)^{p+1}\sum_{q=p+1}^{+\infty}{\rho_{k+q,l+q}\eta^q\binom{q-1}{p-1}\sqrt{\binom{k+q}k\binom{l+q}l}}\\
&\quad\quad\quad+(-1)^p\eta^p\sqrt{\binom{k+p}k\binom{l+p}l}\rho_{k+p,l+p}\\
&\quad\quad\quad+(-1)^p\eta^p\sqrt{\binom{k+p}k\binom{l+p}l}\sum_{q=1}^{+\infty}{\rho_{k+q+p,l+q+p}\eta^q\sqrt{\binom{k+q+p}{k+p}\binom{l+q+p}{l+p}}}\\
&=\rho_{kl}+(-1)^{p+1}\sum_{q=p+1}^{+\infty}{\rho_{k+q,l+q}\eta^q\binom{q-1}{p-1}\sqrt{\binom{k+q}k\binom{l+q}l}}\\
&\quad\quad\quad+(-1)^p\eta^p\sqrt{\binom{k+p}k\binom{l+p}l}\sum_{q=1}^{+\infty}{\rho_{k+q+p,l+q+p}\eta^q\sqrt{\binom{k+q+p}{k+p}\binom{l+q+p}{l+p}}}\\
&=\rho_{kl}+(-1)^{p+1}\sum_{q=p+1}^{+\infty}{\rho_{k+q,l+q}\eta^q\binom{q-1}{p-1}\sqrt{\binom{k+q}k\binom{l+q}l}}\\
&\quad\quad\quad+(-1)^p\sqrt{\binom{k+p}k\binom{l+p}l}\sum_{q=p+1}^{+\infty}{\rho_{k+q,l+q}\eta^q\sqrt{\binom{k+q}{k+p}\binom{l+q}{l+p}}}\\
&=\rho_{kl}+(-1)^{p+2}\sum_{q=p+1}^{+\infty}\rho_{k+q,l+q}\eta^q\\
&\quad\quad\quad\times\left[\sqrt{\binom{k+p}k\binom{l+p}l\binom{k+q}{k+p}\binom{l+q}{l+p}}-\binom{q-1}{p-1}\sqrt{\binom{k+q}k\binom{l+q}l}\right].
\ea
\label{inductiong}
\ee
We have
\be
\ba
\binom{k+p}k\binom{l+p}l\binom{k+q}{k+p}\binom{l+q}{l+p}&=\frac{(k+p)!(l+p)!(k+q)!(l+q)!}{k!p!l!p!(k+p)!(q-p)!(l+p)!(q-p)!}\\
&=\frac{(k+q)!(l+q)!}{k!l!p!^2(q-p)!^2}\\
&=\binom qp^2\binom{k+q}k\binom{l+q}l,
\ea
\ee
so the last expression in Eq.~(\ref{inductiong}) simplifies as
\be
\ba
\underset{\alpha\leftarrow Q_\rho}{\mathbb E}\left[g_{k,l}^{(p+1)}(\alpha,\eta)\right]&=\rho_{kl}+(-1)^{p+2}\sum_{q=p+1}^{+\infty}\rho_{k+q,l+q}\eta^q\sqrt{\binom{k+q}k\binom{l+q}l}\left[\binom qp-\binom{q-1}{p-1}\right]\\
&=\rho_{kl}+(-1)^{p+2}\sum_{q=p+1}^{+\infty}\rho_{k+q,l+q}\eta^q\binom{q-1}p\sqrt{\binom{k+q}k\binom{l+q}l},
\ea
\ee
which completes the proof of Lemma~\ref{lem:inductiong}.

\end{proof}

\noindent An immediate application of Lemma~\ref{lem:inductiong} gives
\be
\ba
\left|\underset{\alpha\leftarrow Q_\rho}{\mathbb E}[g_{k,l}^{(p)}(\alpha,\eta)]-\rho_{kl}\right|&\le\sum_{q=p}^{+\infty}{|\rho_{k+q,l+q}|\eta^q\binom{q-1}{p-1}\sqrt{\binom{k+q}k\binom{l+q}l}}\\
&\le\max_{q\ge p}\left(\eta^q\binom{q-1}{p-1}\sqrt{\binom{k+q}k\binom{l+q}l}\right)\sum_{q=p}^{+\infty}{|\rho_{k+q,l+q}|}\\
&\le\max_{q\ge p}\left(\eta^q\binom{q-1}{p-1}\sqrt{\binom{k+q}k\binom{l+q}l}\right)\sum_{q=p}^{+\infty}{\sqrt{\rho_{k+q,k+q}\rho_{l+q,l+q}}}\\
&\le\max_{q\ge p}\left(\eta^q\binom{q-1}{p-1}\sqrt{\binom{k+q}k\binom{l+q}l}\right)\sqrt{\sum_{q=p}^{+\infty}\rho_{k+q,k+q}\sum_{q=p}^{+\infty}\rho_{l+q,l+q}}\\
&\le\max_{q\ge p}\left(\eta^q\binom{q-1}{p-1}\sqrt{\binom{k+q}k\binom{l+q}l}\right),
\ea
\label{inductiong1}
\ee
where we used the fact that $\rho$ is hermitian positive semidefinite in the third line, Cauchy-Schwarz inequality in the fourth line and $\Tr(\rho)=1$ in the last line. This last bound is tight: if $q_0\ge p$ is the integer achieving the maximum in the last equation, then the inequality is an equality for $\rho=\ket{q_0}\!\bra{q_0}$. We have
\be
\ba
\eta^{q+1}\binom q{p-1}\sqrt{\binom{k+q+1}k\binom{l+q+1}l}&\lesseqgtr\eta^q\binom{q-1}{p-1}\sqrt{\binom{k+q}k\binom{l+q}l}\quad\quad\quad\quad\quad\quad\\
\Leftrightarrow\quad\eta&\lesseqgtr\frac{q-p+1}q\sqrt{\left(\frac{q+1}{k+q+1}\right)\left(\frac{q+1}{l+q+1}\right)}\\
\Leftrightarrow\quad\eta&\lesseqgtr\left(1-\frac{p-1}q\right)\sqrt{1-\frac k{k+q+1}}\sqrt{1-\frac l{l+q+1}},
\ea
\ee
where the right hand side in the last line is an increasing function of $q$. Hence, for $\eta\le\frac{p+1}{p\sqrt{(k+p+1)(l+p+1)}}$, which corresponds to $q=p$ in the above equation, we have
\be
\eta^{q+1}\binom q{p-1}\sqrt{\binom{k+q+1}k\binom{l+q+1}l}\le\eta^q\binom{q-1}{p-1}\sqrt{\binom{k+q}k\binom{l+q}l}
\ee
for all $q\ge p$, so that
\be
\max_{q\ge p}\left(\eta^q\binom{q-1}{p-1}\sqrt{\binom{k+q}k\binom{l+q}l}\right)=\eta^p\sqrt{\binom{k+p}k\binom{l+p}l},
\ee
for $\eta\le\frac{p+1}{p\sqrt{(k+p+1)(l+p+1)}}$.
With Eq.~(\ref{inductiong1}) we finally obtain
\be
\left|\underset{\alpha\leftarrow Q_\rho}{\mathbb E}[g_{k,l}^{(p)}(\alpha,\eta)]-\rho_{kl}\right|\le\eta^p\sqrt{\binom{k+p}k\binom{l+p}l}.
\label{boundEgkl}
\ee
\qed

\medskip

\noindent Note that for $\eta<1$ but $\eta>\frac{p+1}{p\sqrt{(k+p+1)(l+p+1)}}$, we can find the first integer $q_0$ such that $\eta\le\frac{(q_0-p+1)(q_0+1)}{q_0\sqrt{(k+q_0+1)(l+q_0+1)}}$ by incrementing $q$, since $\frac{(q-p+1)(q+1)}{q\sqrt{(k+q+1)(l+q+1)}}$ is an increasing function of $q$ which goes to $1$ when $q$ goes to $+\infty$. The bound above then gives
\be
\left|\underset{\alpha\leftarrow Q_\rho}{\mathbb E}[g_{k,l}^{(p)}(\alpha,\eta)]-\rho_{kl}\right|\le\eta^{q_0}\binom{q_0-1}{p-1}\sqrt{\binom{k+q_0}k\binom{l+q_0}l}.
\label{betterboundEgkl}
\ee
While this can be useful to optimise further the efficiency of the protocol, we will assume for the sake of obtaining closed analytical expressions that $\eta\le\frac{p+1}{p\sqrt{(k+p+1)(l+p+1)}}$.


\section{Proof of Theorem~\ref{th:fe}}
\label{app:thfe}

\noindent In this section, we prove the efficiency of Protocol~\ref{prot:fe}, which we recall below:

\begin{protocol}[Single-mode fidelity estimation]\label{prot:feapp}
Let $c\in\mathbb N^*$ and let $\ket C=\sum_{n=0}^{c-1}{c_n\ket n}$ be a core state. Let also $N,M\in\mathbb N^*$, and let $p\in\mathbb N^*$ and $0<\eta<1$ be free parameters. Let $\rho^{\otimes N+M}$ be $N+M$ copies of an unknown single-mode (mixed) quantum state $\rho$.
\begin{enumerate}
\item
Measure $N$ copies of $\rho$ with heterodyne detection, obtaining the samples $\alpha_1,\dots,\alpha_N\in\mathbb C$.
\item
Compute the mean $F_C(\rho)$ of the function $z\mapsto g_C^{(p)}(z,\eta)$ (defined in Eq.~(\ref{appgC})) over the samples $\alpha_1,\dots,\alpha_N\in\mathbb C$.
\item
Compute the fidelity estimate $F_C(\rho)^M$.
\end{enumerate}
\end{protocol}

\noindent The value $F_C(\rho)^M$ obtained is an estimate of the fidelity between the $M$ remaining copies of $\rho$ and $M$ copies of the target core state $\ket C$. Defining
\be
\eta_{C,p}:=\min\left\{\left[\frac{c\epsilon}{(c+p)\left(\sum_{n=0}^{c-1}{|c_n|\sqrt{\binom{n+p}n}}\right)^2}\right]^{\frac1p},\frac{p+1}{p(p+c)}\right\},
\label{etafeapp}
\ee
the efficiency of the protocol is summarised by the following result:

\begin{theo}\label{th:feapp}
Let $\epsilon>0$. With the notations of Protocol~\ref{prot:fe} and $\eta\le\eta_{C,p}$, we have
\be
\left|F(\ket C^{\otimes M},\rho^{\otimes M})-F_C(\rho)^M\right|\le\epsilon,
\ee
with probability greater than $1-P_C^{iid}$, where
\be
\ba
P_C^{iid}=2\exp\left[-\frac{N\epsilon^{2+\frac{2c}p}}{M^{2+\frac{2c}p}K_{C,p}}\right],
\ea
\ee
where $K_{C,p}$ is a constant independent of $\rho$ given in Eq.~(\ref{KCp}).
\end{theo}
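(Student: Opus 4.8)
The plan is to bound the total error $|F(\ket C^{\otimes M},\rho^{\otimes M})-F_C(\rho)^M|$ by splitting it into a deterministic (bias) part and a statistical (fluctuation) part, then control each separately. First I would observe that the true fidelity with $M$ copies is $F(\ket C^{\otimes M},\rho^{\otimes M})=\braket{C|\rho|C}^M=\left(\sum_{k,l}c_k^*c_l\,\rho_{kl}\right)^M$, whereas the estimator is $F_C(\rho)^M$ where $F_C(\rho)$ is the empirical mean of $g_C^{(p)}$. So I would first control the single-copy discrepancy $|\braket{C|\rho|C}-F_C(\rho)|$ and then lift it to the $M$-th power. For the lifting step I would use the elementary inequality $|x^M-y^M|\le M\max(|x|,|y|)^{M-1}|x-y|$ valid for the relevant range of values (both quantities lie in a bounded interval since $\ket C$ is normalised and $F_C(\rho)$ is a mean of a bounded function), which turns an $\epsilon/M$-type bound on the single-copy error into the desired $\epsilon$-bound on the $M$-th powers.

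For the single-copy error I would triangle-split into bias and fluctuation:
\be
\left|\braket{C|\rho|C}-F_C(\rho)\right|\le\left|\braket{C|\rho|C}-\underset{\alpha\leftarrow Q_\rho}{\mathbb E}[g_C^{(p)}(\alpha,\eta)]\right|+\left|\underset{\alpha\leftarrow Q_\rho}{\mathbb E}[g_C^{(p)}(\alpha,\eta)]-F_C(\rho)\right|.
\ee
The first term is the bias. Since $g_C^{(p)}=\sum_{k,l}c_k^*c_l\,g_{k,l}^{(p)}$ and $\braket{C|\rho|C}=\sum_{k,l}c_k^*c_l\,\rho_{kl}$, I would expand, apply the triangle inequality, and invoke Lemma~\ref{lem:main} termwise to bound each $|\rho_{kl}-\mathbb E[g_{k,l}^{(p)}]|$ by $\eta^p\sqrt{\binom{k+p}k\binom{l+p}l}$. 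This requires the hypothesis of Lemma~\ref{lem:main} to hold for every pair $(k,l)$ with $0\le k,l\le c-1$, which is why the threshold $\eta\le\frac{p+1}{p(p+c)}$ appears in the definition of $\eta_{C,p}$ (it is the worst case $k=l=c-1$). Summing gives a bias bounded by $\eta^p\left(\sum_{n=0}^{c-1}|c_n|\sqrt{\binom{n+p}n}\right)^2$, and the first branch of the $\min$ in Eq.~(\ref{etafeapp}) is precisely the choice of $\eta$ making this bias at most $\frac{c}{c+p}\cdot\frac\epsilon M$ (leaving room for the statistical part).

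The second term is the fluctuation, handled by Hoeffding's inequality: $F_C(\rho)$ is an average of $N$ i.i.d.\ bounded samples of $g_C^{(p)}(\alpha,\eta)$, so I would first establish a uniform bound $\sup_{z\in\mathbb C}|g_C^{(p)}(z,\eta)|\le B_{C,p}(\eta)$ — this is finite because each $g_{k,l}^{(p)}$ is a polynomial times a converging Gaussian, and one extracts an explicit $\eta$-dependent constant. Hoeffding then gives deviation probability $2\exp[-N t^2/(2B_{C,p}^2)]$ for the fluctuation to exceed $t$. Choosing the budget so that bias plus fluctuation is at most $\epsilon/M$ at the single-copy level, and tracking how $\eta_{C,p}\sim(\epsilon/M)^{1/p}$ feeds into $B_{C,p}$, yields a deviation bound of the stated form $2\exp[-N\epsilon^{2+2c/p}/(M^{2+2c/p}K_{C,p})]$, defining $K_{C,p}$ along the way. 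The main obstacle I anticipate is the bookkeeping of how the choice $\eta\sim(\epsilon/M)^{1/p}$ propagates through the uniform bound $B_{C,p}$ and combines with the $M$-th-power lifting to produce exactly the exponent $2+\frac{2c}p$; getting this exponent right (rather than a loose version) is the delicate quantitative heart of the argument, whereas the structural triangle-inequality-plus-Hoeffding skeleton is routine.
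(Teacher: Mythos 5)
Your proposal matches the paper's proof essentially step for step: the same bias/fluctuation triangle split, the same termwise application of Lemma~\ref{lem:main} with the worst-case threshold $\eta\le\frac{p+1}{p(p+c)}$ coming from $k=l=c-1$, Hoeffding's inequality with a uniform bound on $g_C^{(p)}$, the same $\frac{c}{c+p}$ versus $\frac{p}{c+p}$ error budget fixing $\eta\sim(\epsilon/M)^{1/p}$, and the same reduction of the $M$-copy case to the single-copy case by power lifting. The only cosmetic difference is in the lifting step: the paper clips the estimate $F_C(\rho)$ to $[0,1]$ and uses $|a^M-b^M|\le M|a-b|$ for $a,b\in[0,1]$, which sidesteps the $\max(|x|,|y|)^{M-1}$ factor your inequality requires you to control (your justification that $F_C(\rho)$ lies in a bounded interval is too loose as stated, since $\sup|g_C^{(p)}|\gg1$, but on the good event the estimate is within $\epsilon/M$ of $[0,1]$ so the fix is immediate).
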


\begin{proof}

\noindent Let us prove the theorem for $M=1$. The general case is then an immediate corollary by replacing $\epsilon$ by $\frac\epsilon M$, since $F(C^{\otimes M},\rho^{\otimes M})=F(C,\rho)^M$ and by Lemma~2 of~\cite{chabaud2019buildingarXiv}:
\be
|a-b|\le\epsilon\quad\Rightarrow\quad|a^M-b^M|\le M\epsilon,
\ee
for all $M\in\mathbb N^*$, all $\epsilon>0$ and all $a,b\in[0,1]$ (note that when the estimate $F_C(\rho)$ obtained is bigger than $1$ we may replace it by $1$ instead without loss of generality).

The proof combines Lemma~\ref{lem:main} with Hoeffding inequality. The latter is expressed as follows:

\begin{lem}\textbf{(Hoeffding)}\label{lem:Hoeffding} Let $\lambda>0$, let $n\ge1$, let $z_1,\dots,z_n$ be i.i.d.\ complex random variables from a probability density $D$ over $\mathbb R$, and let $g:\mathbb C\mapsto\mathbb R$ be a bounded function and let $G\ge\max_zg(z)-\min_zg(z)$. Then
\be
\Pr\left[\left|\frac1N\sum_{i=1}^N{g(z_i)}- \underset{z\leftarrow D}{\mathbb E}[g(z)]\right|\ge\lambda\right] \leq 2\exp\left[{-\frac{2N\lambda^2}{G^2}}\right].
\ee
\end{lem}

\noindent In order to apply Lemma~\ref{lem:Hoeffding}, we first derive an analytical bound for the functions $g_{k,l}^{(p)}$:

\begin{lem}\label{lem:boundgkl}
For all $k,l\in\mathbb N$, all $z\in\mathbb C$, all $p\in\mathbb N^*$ and all $\eta<1$,
\be
|g_{k,l}^{(p)}(z,\eta)|\le\frac1{\eta^{1+\frac{k+l}2}}\binom{\max{(k,l)}+p}{p-1}\sqrt{2^{|l-k|}\binom{\max{(k,l)}}{\min{(k,l)}}}.
\label{boundgkl}
\ee
\end{lem}

\begin{proof}

By Lemma~6 of \cite{chabaud2019buildingarXiv}, we have
\be
|f_{i,j}(z,\eta)|\le\frac1{\eta^{1+\frac{i+j}2}}\sqrt{2^{|i-j|}\binom{\max{(i,j)}}{\min{(i,j)}}},
\ee
for all $i,j\in\mathbb N$, all $z\in\mathbb C$ and all $\eta<1$.
Hence, with Eq.~(\ref{appg}), for all $k\le l\in\mathbb N$,
\be
\ba
|g_{k,l}^{(p)}(z,\eta)|&\le\sum_{i=0}^{p-1}{|f_{k+i,l+i}(z,\eta)|\eta^i\sqrt{\binom{k+i}i\binom{l+i}i}}\\
&\le\frac{\sqrt{2^{l-k}}}{\eta^{1+\frac{k+l}2}}\sum_{i=0}^{p-1}{\sqrt{\binom{l+i}{k+i}\binom{k+i}i\binom{l+i}i}}\\
&=\frac{\sqrt{2^{l-k}}}{\eta^{1+\frac{k+l}2}}\sum_{i=0}^{p-1}{\sqrt{\binom{l+i}i^2\binom lk}}\\
&=\frac1{\eta^{1+\frac{k+l}2}}\sqrt{2^{l-k}\binom lk}\sum_{i=0}^{p-1}{\binom{l+i}i}\\
&=\frac1{\eta^{1+\frac{k+l}2}}\binom{l+p}{p-1}\sqrt{2^{l-k}\binom lk},
\ea
\ee
for all $z\in\mathbb C$ and all $\eta<1$. The same reasoning holds for $l\le k\in\mathbb N$ and we obtain
\be
|g_{k,l}^{(p)}(z,\eta)|\le\frac1{\eta^{1+\frac{k+l}2}}\binom{\max{(k,l)}+p}{p-1}\sqrt{2^{|l-k|}\binom{\max{(k,l)}}{\min{(k,l)}}},
\ee
for all $k,l\in\mathbb N$, all $z\in\mathbb C$, all $p\in\mathbb N^*$ and all $\eta<1$.

\end{proof}

\noindent Let $\ket C=\sum_{n=0}^{c-1}{c_n\ket n}$ be a core state. Recalling the definition
\be
\ba
g_C^{(p)}(z,\eta)&:=\sum_{0\le k,l\le c-1}{c_k^*c_l\,g_{k,l}^{(p)}(z,\eta)},
\ea
\label{appgC}
\ee
from the main text, with Lemma~\ref{lem:boundgkl} we obtain:
\be
\ba
|g_C^{(p)}(z,\eta)|&\le\frac1{\eta^c}\sum_{0\le k,l\le c-1}{|c_kc_l|\eta^{c-1-\frac{k+l}2}\binom{\max{(k,l)}+p}{p-1}\sqrt{2^{|l-k|}\binom{\max{(k,l)}}{\min{(k,l)}}}}\\
&=\frac{B^{(p)}_C(\eta)}{\eta^c},
\ea
\label{boundgC}
\ee
for all $z\in\mathbb C$, all $p\in\mathbb N^*$ and all $\eta<1$, where
\be
B_C^{(p)}(\eta):=\sum_{0\le k,l\le c-1}{|c_kc_l|\eta^{c-1-\frac{k+l}2}\binom{\max(k,l)+p}{p-1}\sqrt{2^{|l-k|}\binom{\max(k,l)}{\min(k,l)}}}.
\label{BCp}
\ee
Let $N\in\mathbb N^*$, let $\rho=\sum_{k,l\ge0}{\rho_{kl}\ket k\!\bra l}$ be a density matrix and let $\alpha_1,\dots,\alpha_N$ be i.i.d.\ samples from balanced heterodyne detection of $N$ copies of $\rho$. Let us write
\be
F_C(\rho)=\frac1N\sum_{i=1}^N{g_C^{(p)}(\alpha_i,\eta)}.
\ee
With Lemma~\ref{lem:Hoeffding}, for all $\lambda>0$,
\be
\ba
\Pr\left[\left|F_C(\rho)-\underset{\alpha\leftarrow Q_\rho}{\mathbb E}[g_C^{(p)}(\alpha,\eta)]\right|\ge\lambda\right]&\le2\exp\left[{-\frac{2N\lambda^2\eta^{2c}}{\left(R^{(p)}_C(\eta)\right)^2}}\right]\\
&\le2\exp\left[{-\frac{N\lambda^2\eta^{2c}}{2\left(B^{(p)}_C(\eta)\right)^2}}\right],
\ea
\label{step1g}
\ee
for all $z\in\mathbb C$, all $p\in\mathbb N^*$, where we defined
\be
R^{(p)}_C(\eta):=\max_z{\left[\eta^{c}g_C(z,\eta)\right]}-\min_z{\left[\eta^{c}g_C(z,\eta)\right]},
\ee
the range of the function $z\mapsto \eta^{c}g_C(z,\eta)$ and where we used  Eq.~(\ref{boundgC}) and $R^{(p)}_C(\eta)\le2B^{(p)}_C(\eta)$ for $\eta<1$ in the second line. For $\eta\le\frac{p+1}{p(p+c)}$, we have $\eta\le\frac{p+1}{p\sqrt{(k+p+1)(l+p+1)}}$ for all $0\le k,l\le c-1$. Hence, for $\eta\le\frac{p+1}{p(p+c)}$,
\be
\ba
\left|F(C,\rho)-\underset{\alpha\leftarrow Q_\rho}{\mathbb E}[g_C^{(p)}(\alpha,\eta)]\right|&=\left|\sum_{0\le k,l\le c-1}{c_k^*c_l\left(\rho_{kl}-\underset{\alpha\leftarrow Q_\rho}{\mathbb E}[g_{k,l}^{(p)}(\alpha,\eta)]\right)}\right|\\
&\le\sum_{0\le k,l\le c-1}{|c_kc_l|\left|\rho_{kl}-\underset{\alpha\leftarrow Q_\rho}{\mathbb E}[g_{k,l}^{(p)}(\alpha,\eta)]\right|}\\
&\le\eta^p\sum_{0\le k,l\le c-1}{|c_kc_l|\sqrt{\binom{k+p}k\binom{l+p}l}}\\
&=\eta^p\left(\sum_{n=0}^{c-1}{|c_n|\sqrt{\binom{n+p}n}}\right)^2\\
&=\eta^pA_C^{(p)},
\ea
\label{step2g}
\ee
where we used Eq.~(\ref{appgC}) in the first line, Lemma~\ref{lem:main} in the third line for all $0\le k,l\le c-1$, and where we defined
\be
A_C^{(p)}:=\left(\sum_{n=0}^{c-1}{|c_n|\sqrt{\binom{n+p}n}}\right)^2.
\label{ACp}
\ee
For $\eta\le\frac{p+1}{p(p+c)}$, combining Eqs.~(\ref{step1g}) and (\ref{step2g}) yields
\be
\ba
\left|F(C,\rho)-F_C(\rho)\right|&\le\left|\underset{\alpha\leftarrow Q_\rho}{\mathbb E}[g_C^{(p)}(\alpha,\eta)]-F_C(\rho)\right|+\left|F(C,\rho)-\underset{\alpha\leftarrow Q_\rho}{\mathbb E}[g_C^{(p)}(\alpha,\eta)]\right|\\
&\le\lambda+\eta^pA_C^{(p)},
\ea
\ee
with probability greater than $1-P_C^{iid}$, where
\be
P_C^{iid}=2\exp\left[{-\frac{N\lambda^2\eta^{2c}}{2\left(B^{(p)}_C(\eta)\right)^2}}\right].
\label{Piid1}
\ee
We now optimise over the choice of $\lambda$ and $\eta$ for a given precision $\epsilon>0$. Namely, fixing $\lambda+\eta^pA_C^{(p)}=\epsilon$ we maximise $\lambda^2\eta^{2c}$. Writing
\be
\eta^pA_C^{(p)}=\mu,
\ee
this amounts to maximising $\lambda^2\mu^{\frac{2c}p}$, with $\lambda+\mu=\epsilon$. The optimal choice is given by:
\be
\lambda=\frac{p\epsilon}{c+p}\quad\text{and}\quad\mu=\frac{c\epsilon}{c+p}.
\ee
This in turn gives
\be
\lambda=\frac{p\epsilon}{c+p}\quad\text{and}\quad\eta=\left[\frac{c\epsilon}{(c+p)A_C^{(p)}}\right]^{\frac1p}.
\ee
With this choice for $\eta$, for small enough $\epsilon$ we will always have $\eta\le\frac{p+1}{p(p+c)}$. To ensure that this is always the case independently of the value of $\epsilon$, we may fix instead
\be
\lambda=\frac{p\epsilon}{c+p}\quad\text{and}\quad\eta=\min\left\{\left[\frac{c\epsilon}{(c+p)A_C^{(p)}}\right]^{\frac1p},\frac{p+1}{p(p+c)}\right\}.
\ee
Then, plugging these expressions in Eq.~(\ref{Piid1}) we finally obtain
\be
\left|F(C,\rho)-F_C(\rho)\right|\le\epsilon,
\ee
with probability greater than $1-P_C^{iid}$, where
\be
\ba
P_C^{iid}&=\max\left\{2\exp\left[-\frac{N\epsilon^{2+\frac{2c}p}}{K_C^{(p)}(\eta)}\right],2\exp\left[-\frac{N\epsilon^2}{K_C^{(p)'(\eta)}}\right]\right\}\\
&=2\exp\left[-\frac{N\epsilon^{2+\frac{2c}p}}{K_{C,p}(\eta)}\right],
\ea
\label{Piid2}
\ee
where
\be
\ba
K_C^{(p)}(\eta)&=\frac{2\left(A_C^{(p)}\right)^{\frac{2c}p}\left(B_C^{(p)}(\eta)\right)^2(c+p)^{2+\frac{2c}p}}{c^{\frac{2c}p}p^2}\\
K_C^{(p)'}(\eta)&=\frac{2p^{2c-2}\left(B_C^{(p)}(\eta)\right)^2(c+p)^{2c+2}}{(p+1)^{2c}}\\
K_{C,p}(\eta)&=\max\left\{K_C^{(p)}(\eta),K_C^{(p)'}(\eta)\right\},
\ea
\label{KCp}
\ee
and where the constants $B_C^{(p)}(\eta)$ and $A_C^{(p)}$ are defined in Eqs.~(\ref{BCp}) and (\ref{ACp}), respectively.
Hence, the number of samples needed for a precision $\epsilon>0$ and a failure probability $\delta>0$ scales as
\be
N_1=\mathcal O\left(\frac1{\epsilon^{2+\frac{2c}p}}\log\left(\frac1\delta\right)\right).
\label{Nfeapp}
\ee

\end{proof}

\noindent Note that, as indicated in Eq.~(\ref{betterboundEgkl}), the choice of $\eta$ in Eq.~(\ref{etafeapp}) is not optimal in general, and an optimisation depending on the expression of the target core state leads to a better prefactor in Eq.~(\ref{Nfeapp}).


\section{Removing the i.i.d.\ assumption for Protocol~\ref{prot:fe}}
\label{app:fenotiid}

\noindent In this section, we derive a version of Protocol~\ref{prot:fe} which does not assume i.i.d.\ state preparation. The protocol is as follows:

\begin{prot}[General single-mode fidelity estimation]\label{prot:fenotiid}
Let $c\in\mathbb N^*$ and let $\ket C=\sum_{n=0}^{c-1}{c_n\ket n}$ be a core state. Let also $N,M\in\mathbb N^*$, and let $p\in\mathbb N^*$, $E,S\in\mathbb N$ and $0<\eta<1$ be free parameters. Let $\bm\rho^{N+M}$ be an unknown quantum state over $N+M$ subsystems. We write $N=N'+K+Q$, for $N',K,Q\in\mathbb N$.
\begin{enumerate}
\item
Measure $N=N'+K+Q$ subsystems of $\bm\rho^{N+M}$ chosen at random with heterodyne detection, obtaining the samples $\alpha_1,\dots,\alpha_{N'},\beta_1,\dots,\beta_K,\gamma_1,\dots\gamma_Q\in\mathbb C$. Let $\bm\rho^M$ be the remaining state over $M$ subsystems.
\item
Discard the $Q$ samples $\gamma_1,\dots\gamma_Q$.
\item
Record the number $R$ of samples $\beta_i$ such that $|\beta_i|^2+1>E$, for $i\in\{1,\dots,K\}$. The protocol aborts if $R>S$.
\item
Otherwise, compute the mean $F_C(\bm\rho)$ of the function $z\mapsto g_C^{(p)}(z,\eta)$ (defined in Eq.~(\ref{gCmain})) over the samples $\alpha_1,\dots,\alpha_{N'}\in\mathbb C$.
\item
Compute the fidelity estimate $F_C(\bm\rho)^M$.
\end{enumerate}
\end{prot}

\noindent Note that this protocol differs from Protocol~\ref{prot:fe} by two additional classical post-processing steps, steps 2 and 3. These steps are part of a de Finetti reduction for infinite-dimensional systems~\cite{renner2009finetti} detailed in~\cite{chabaud2019buildingarXiv}. In particular, step 3 is an energy test, and the energy parameters $E$ and $S$ should be chosen to guarantee completeness, i.e., if the perfect core state is sent then it passes the energy test with high probability. 

The value $F_C(\bm\rho)^M$ obtained is an estimate of the fidelity between the remaining state $\bm\rho^M$ over $M$ subsystems and $M$ copies of the target core state $\ket C$. The efficiency of the protocol is summarised by the following result:

\begin{theo}\label{th:fenotiid}
Let $\epsilon>0$. With the notations of Protocol~\ref{prot:fenotiid} and $\eta\le\eta_{C,p}$, defined in Eq.~(\ref{etafeapp}),
\be
\left|F(\ket C^{\otimes M},\bm\rho^M)-F_C(\bm\rho)^M\right|\le2\epsilon+P_C^{\text{deFinetti}},
\ee
or the protocol aborts in step 3, with probability greater than $1-(P_C^{\text{support}}+P_C^{\text{deFinetti}}+P_C^{\text{choice}}+P_C^{\text{Hoeffding}})$, where
\be
\ba
P_C^{\text{support}}&=8K^{3/2}\exp\left[-\frac K9\left(\frac Q{4(N'+M+Q)}-\frac{2S}K\right)^2\right]\\
P_C^{\text{deFinetti}}&=\left(\frac Q4\right)^{E^2/2}\exp\left[-\frac{Q(Q+4)}{8(N'+M+Q)}\right]\\
P_C^{\text{choice}}&=\frac{M(Q+M-1)}{N'+M}\\
P_C^{\text{Hoeffding}}&=2\binom{N'+M}Q\exp\left[-\frac{N'+M-Q}{2M^{2+\frac{2c}p}}\left(\frac{\epsilon^{1+\frac cp}}{G_{C,p}}-\frac{2QM^{1+\frac cp}}{N'}\right)^2\right],
\ea
\ee
where $G_{C,p}$ is a constant independent of $\rho$ given in Eq.~(\ref{GCp}). In particular, setting
\be
\ba
N'&=\mathcal O\left(\frac{M^{7+\frac{4c}p}}{\epsilon^{4+\frac{4c}p}}\right)\\
Q&=\mathcal O\left(\frac{M^{4+\frac{2c}p}}{\epsilon^{2+\frac{2c}p}}\right)\\
K&=\mathcal O\left(\frac{M^{7+\frac{4c}p}}{\epsilon^{4+\frac{4c}p}}\right)\\
E&=\mathcal O(\log M)\\
S&=o(Q),
\ea
\ee
yields
\be
\left|F(\ket C^{\otimes M},\bm\rho^M)-F_C(\bm\rho)^M\right|=\mathcal O(\epsilon),
\ee
when $\frac1\epsilon=\mathcal O(\poly M)$, or the protocol aborts in step 3, with probability greater than $1-P_C$,
where
\be
P_C=\mathcal O\left(\frac1{\poly(M,\frac1\epsilon)}\right),
\ee
for
\be
N=\mathcal O\left(\frac{M^{7+\frac{4c}p}}{\epsilon^{4+\frac{4c}p}}\right).
\ee
\end{theo}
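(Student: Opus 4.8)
The plan is to reduce the non-i.i.d.\ analysis to the i.i.d.\ result of Theorem~\ref{th:feapp} by means of the finite de Finetti reduction for infinite-dimensional systems of~\cite{renner2009finetti,chabaud2019buildingarXiv}. Because in step~1 the $N$ measured subsystems are chosen uniformly at random among the $N+M$ subsystems, the state $\bm\rho^{N+M}$ may be assumed permutation-invariant without loss of generality. The two extra classical post-processing steps of Protocol~\ref{prot:fenotiid}, discarding $Q$ samples (step~2) and the energy test (step~3), are exactly the ingredients of the reduction: discarding a random block of size $Q$ together with restricting to bounded energy $E$ lets one replace the symmetric state on the remaining $N'+M$ subsystems by a convex mixture $\int\sigma^{\otimes(N'+M)}\,d\sigma$ of i.i.d.\ states, up to a trace-distance error controlled by $P_C^{\text{deFinetti}}$.

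First I would invoke this reduction: the bound $P_C^{\text{deFinetti}}=(\tfrac Q4)^{E^2/2}\exp[-\tfrac{Q(Q+4)}{8(N'+M+Q)}]$ quantifies the trade-off between the number of discarded modes $Q$ and the energy truncation $E$ needed for convergence in infinite dimension, and it enters the final estimate \emph{both} as a failure probability and as an additive bias, since the same trace distance that bounds the probability of distinguishing $\bm\rho^M$ from its i.i.d.\ surrogate also bounds the resulting error on the fidelity. Second, I would analyse the energy test: for a fixed i.i.d.\ component $\sigma^{\otimes(N'+M)}$, the number $R$ of high-energy outcomes among the $K$ test samples concentrates around its mean, so a Chernoff/Hoeffding-type argument yields $P_C^{\text{support}}$ and guarantees that, conditioned on acceptance $R\le S$, the effective per-mode state has support bounded up to an extra $\epsilon$ — which is precisely what is required to control the range of the bounded estimator $g_C^{(p)}$ and the bias term of Lemma~\ref{lem:main}. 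Third, since the $N'$ estimation modes, the $K$ test modes and the $M$ kept modes are drawn \emph{without replacement} from the $N'+M$ de Finetti modes, I would pay the sampling-without-replacement correction $P_C^{\text{choice}}=\tfrac{M(Q+M-1)}{N'+M}$ (bounding the probability that the random split overlaps badly with the discarded block), and then apply Hoeffding's inequality (Lemma~\ref{lem:Hoeffding}) to the estimator $F_C(\bm\rho)$ for each i.i.d.\ component; the union bound over the $\binom{N'+M}Q$ possible discarded subsets produces the binomial prefactor in $P_C^{\text{Hoeffding}}$, while the $M^{2+2c/p}$ and $M^{1+c/p}$ factors inside it come from the $\epsilon\mapsto\epsilon/M$ rescaling and the $\eta$-dependent range of $g_C^{(p)}$ inherited from Theorem~\ref{th:feapp}. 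Combining the four terms by a union bound and converting the per-mode estimate into the $M$-fold fidelity estimate via $|a-b|\le\epsilon\Rightarrow|a^M-b^M|\le M\epsilon$ yields $|F(\ket C^{\otimes M},\bm\rho^M)-F_C(\bm\rho)^M|\le 2\epsilon+P_C^{\text{deFinetti}}$, the two $\epsilon$ accounting respectively for the Hoeffding statistical fluctuation and the energy-restriction bias.

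For the asymptotic scaling I would substitute the prescribed values of $N',Q,K,E,S$ and check term by term that each probability is $\mathcal O(1/\poly(M,\tfrac1\epsilon))$: $E=\mathcal O(\log M)$ keeps the truncation tight while controlling $P_C^{\text{deFinetti}}$, the choices $Q=\mathcal O(M^{4+2c/p}/\epsilon^{2+2c/p})$ and $N'=K=\mathcal O(M^{7+4c/p}/\epsilon^{4+4c/p})$ balance the de Finetti, choice and Hoeffding terms (one verifies $Q^2/N'=\mathcal O(M)$ dominates $E^2\log Q$, and that the Hoeffding exponent $\sim N'\epsilon^{2+2c/p}/M^{2+2c/p}$ beats $Q\log N'$), and $S=o(Q)$ ensures completeness of the energy test for the perfect core state. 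The main obstacle is the infinite-dimensional de Finetti step itself: unlike the finite-dimensional case the reduction bound is not dimension-independent and only becomes useful once the energy test has effectively truncated the Hilbert space, so the delicate point is to propagate the single cutoff $E$ consistently through the de Finetti reduction, the bounded-support guarantee, and the range of $g_C^{(p)}$ in Hoeffding's bound, all while keeping every parameter polynomial and verifying that conditioning on acceptance does not spoil the concentration arguments.
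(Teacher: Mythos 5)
Your proposal is correct and takes essentially the same route as the paper: the paper's own proof reduces Theorem~\ref{th:fenotiid} to Theorem~5 of~\cite{chabaud2019buildingarXiv} — i.e., precisely the de Finetti reduction, energy test, random-choice and Hoeffding/union-bound structure you reconstruct — with the estimators $f_{k,l}$ replaced by $g_{k,l}^{(p)}$ via the mappings $K_\psi\mapsto\eta^{p-1}A_C^{(p)}$, $M_\psi(\eta)\mapsto B_C^{(p)}(\eta)$ and the choice $\eta=\left[\epsilon/(M A_C^{(p)})\right]^{1/p}$ that produces the constant $G_{C,p}$. Your term-by-term interpretation of the four failure probabilities, the dual role of $P_C^{\text{deFinetti}}$ as both additive bias and failure probability, and the asymptotic scaling checks all match the paper's argument.
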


\begin{proof}

\noindent Theorem~\ref{th:fenotiid} is an optimised version of Theorem~5 of~\cite{chabaud2019buildingarXiv}, where the heterodyne estimates $f_{k,l}(z,\eta)$ defined in Eq.~(\ref{fkl}) are replaced by the more efficient generalised estimates $g_{k,l}^{(p)}(z,\eta)$, with an additional free parameter $p$. The proof thus is nearly identical, and amounts to replacing occurences of the estimates $f_{k,l}(z,\eta)$ by the estimates $g_{k,l}^{(p)}(z,\eta)$. In particular, we use our Eq.~(\ref{step2g}) instead of Theorem~1 of~\cite{chabaud2019buildingarXiv}. For a target core state $\ket\psi=\ket C$, this amounts to replace
\be
\left|F(\psi,\rho)-\underset{\alpha\leftarrow Q_\rho}{\mathbb E}[f_\psi(\alpha,\eta)]\right|\le\eta K_\psi
\ee
by
\be
\left|F(C,\rho)-\underset{\alpha\leftarrow Q_\rho}{\mathbb E}[g_C^{(p)}(\alpha,\eta)]\right|\le\eta^pA_C^{(p)},
\ee
where $A^{(p)}_C$ is defined in Eq.~(\ref{ACp}). Moreover, we use our Eq.~(\ref{boundgC}) instead of Lemma~6 of~\cite{chabaud2019buildingarXiv}. For a target core state $\ket\psi=\ket C$, this amounts to replace
\be
|f_\psi(z,\eta)|\le\frac{M_\psi(\eta)}{\eta^c}
\ee
by
\be
|g_C^{(p)}(z,\eta)|\le\frac{B^{(p)}_C}{\eta^c},
\ee
where $B^{(p)}_C$ is defined in Eq.~(\ref{BCp}). Hence, the constants used in the proof of~\cite{chabaud2019buildingarXiv} are mapped as:
\be\label{mappingKM}
\ba
K_\psi&\mapsto\eta^{p-1}A_C^{(p)}\\
M_\psi(\eta)&\mapsto B_C^{(p)}(\eta).
\ea
\ee
Note that we also adapt the notations from~\cite{chabaud2019buildingarXiv} for the global coherence of this work as:
\be
\ba
n&\mapsto N'+M+Q\\
k&\mapsto K\\
q&\mapsto Q/4\\
m&\mapsto M\\
E&\mapsto E-1\\
s&\mapsto S\\
r&\mapsto R\\
\epsilon&\mapsto\epsilon\\
\epsilon'&\mapsto\epsilon\\
\ket\psi&\mapsto\ket C,
\ea
\ee
and we write $c$ the support size of $\ket\psi=\ket C$ instead of $E+1$\footnote{Ref.~\cite{chabaud2019buildingarXiv} uses the letter $E$ both for the support size of the target state $\ket\psi$ and the energy test threshold, while we keep these notations different.}. Ref.~\cite{chabaud2019buildingarXiv} sets $\eta=\frac\epsilon{mK_\psi}$ and defines for brevity
\be\label{Cpsi}
C_\psi:=K_\psi^{E+1}M_\psi\left(\frac\epsilon{mK_\psi}\right).
\ee
With $m\mapsto M$, we thus set $\eta=\frac\epsilon{M\eta^{p-1}A_C^{(p)}}$,  or equivalently
\be\label{seteta}
\eta=\left[\frac\epsilon{MA_C^{(p)}}\right]^{\frac1p}.
\ee
By Eq.~(\ref{mappingKM}), $K_\psi$ is mapped to $\left(\frac\epsilon M\right)^{1-\frac1p}\left[A_C^{(p)}\right]^{\frac1p}$, and the expression $C_\psi$ in Eq.~(\ref{Cpsi}) is thus replaced here by
\be\label{GCp}
G_{C,p}:=\left(\frac\epsilon M\right)^{c-\frac cp}\left[A_C^{(p)}\right]^{\frac cp}B_C^{(p)}\left(\left[\frac\epsilon{MA_C^{(p)}}\right]^{\frac1p}\right).
\ee
This expression is independent of $\rho$ and depends on $M$ and $\epsilon$ without loss of generality, since is upper bounded by its value for $\epsilon/M=1$.
With these mappings, following the proof of~\cite{chabaud2019buildingarXiv} we obtain:
\be
\left|F(\ket C^{\otimes M},\bm\rho^M)-F_C(\bm\rho)^M\right|\le2\epsilon+P_C^{deFinetti},
\label{estimateFnotiid}
\ee
or $R>S$, i.e., the protocol aborts in step 3, with probability greater than $1-(P_C^{support}+P_C^{deFinetti}+P_C^{choice}+P_C^{Hoeffding})$, where
\be
\ba
P_C^{\text{support}}&=8K^{3/2}\exp\left[-\frac K9\left(\frac Q{4(N'+M+Q)}-\frac{2S}K\right)^2\right]\\
P_C^{\text{deFinetti}}&=\left(\frac Q4\right)^{E^2/2}\exp\left[-\frac{Q(Q+4)}{8(N'+M+Q)}\right]\\
P_C^{\text{choice}}&=\frac{M(Q+M-1)}{N'+M}\\
P_C^{\text{Hoeffding}}&=2\binom{N'+M}Q\exp\left[-\frac{N'+M-Q}{2M^{2+\frac{2c}p}}\left(\frac{\epsilon^{1+\frac cp}}{G_{C,p}}-\frac{2QM^{1+\frac cp}}{N'}\right)^2\right],
\ea
\ee
where $G_{C,p}$ is defined in Eq.~(\ref{GCp}). Setting for simplicity $K=\mathcal O(N')$ and $N'=qQ$, we find that all above probabilities are at least inverse polynomially small in $M$ and $\frac1\epsilon$ when
\be
\left(\frac M\epsilon\right)^{2+\frac{2c}p}<<q<<Q,
\ee
and $E=\mathcal O(\log M)$ and $S=o(Q)$. In particular, setting
\be
\ba
N'&=\mathcal O\left(\frac{M^{7+\frac{4c}p}}{\epsilon^{4+\frac{4c}p}}\right)\\
Q&=\mathcal O\left(\frac{M^{4+\frac{2c}p}}{\epsilon^{2+\frac{2c}p}}\right)\\
K&=\mathcal O\left(\frac{M^{7+\frac{4c}p}}{\epsilon^{4+\frac{4c}p}}\right)\\
\ea
\ee
with Eq.~(\ref{estimateFnotiid}) we obtain
\be\label{estimateFnotiidbis}
\ba
\left|F(\ket C^{\otimes M},\bm\rho^M)-F_C(\bm\rho)^M\right|&=2\epsilon+e^{-\mathcal O(M)+O(\log^2M\log\frac M\epsilon)}\\
&=\mathcal O(\epsilon),
\ea
\ee
when $\frac1\epsilon=\mathcal O(\poly M)$, with probability $1-P_C$, where
\be
P_C=P_C^{\text{support}}+P_C^{\text{deFinetti}}+P_C^{\text{choice}}+P_C^{\text{Hoeffding}}=\mathcal O\left(\frac1{\poly(M,\frac1\epsilon)}\right),
\ee
for
\be
N=N'+K+Q=\mathcal O\left(\frac{M^{7+\frac{4c}p}}{\epsilon^{4+\frac{4c}p}}\right).
\ee

\end{proof}

\noindent Since the parameter $p\in\mathbb N^*$ may vary freely, the asymptotic scaling of the number of samples $N$ needed for a precision $\epsilon>0$ and a polynomial confidence thus is given by $N=\mathcal O(\frac{M^7}{\epsilon^4})$. Moreover, we set a choice for $\eta$ in Eq.~(\ref{seteta}) for simplicity, but the choice of both $p$ and $\eta$ can further improve the efficiency of the protocol.


\section{Proof of Lemma~\ref{lem:product}}
\label{app:lemproduct}

\noindent In this section, we prove a generalised version of Lemma~\ref{lem:product} for tensor product of Hilbert spaces. Setting $M=1$, we retrieve the Lemma in the main text:
 
\begin{lem}\label{lem:productmulti}
Let $\bm\rho^{m\times M}$ be an $m\times M$-mode state in $\mathcal H=\mathcal H_1^{\otimes M}\otimes\dots\otimes\mathcal H_m^{\otimes M}$. For all $i\in\{1,\dots,m\}$, we denote by $\bm\rho_i^M=\Tr_{\mathcal H\setminus\mathcal H_i^{\otimes M}}(\bm\rho)$ the $M$-mode reduced state of $\bm\rho^{m\times M}$ over the Hilbert space $\mathcal H_i^{\otimes M}$. Let $\ket{\psi_1},\dots,\ket{\psi_m}$ be pure states in $\mathcal H_1,\dots,\mathcal H_m$. Then,
\be
1-m\left(1-F\left(\bm\rho^{m\times M},\bigotimes_{i=1}^m\psi_i^{\otimes M}\right)\right)\le1-\sum_{i=1}^m{\left(1-F\left(\bm\rho_i^M,\psi_i^{\otimes M}\right)\right)}\le F\left(\bm\rho^{m\times M},\bigotimes_{i=1}^m\psi_i^{\otimes M}\right).
\label{fideproductmult}
\ee
\end{lem}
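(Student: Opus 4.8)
The plan is to reduce both inequalities to statements about commuting projectors on the full Hilbert space $\mathcal H$, and then invoke two standard facts: monotonicity of the fidelity under partial trace for the lower bound, and an operator version of the union bound for the upper bound. Since each $\ket{\psi_i}$ is pure, so is each $\psi_i^{\otimes M}$, and the total target $\ket{\bm\Psi}=\bigotimes_{i=1}^m\psi_i^{\otimes M}$ is pure as well. Writing $P_i=\ket{\psi_i^{\otimes M}}\!\bra{\psi_i^{\otimes M}}$ acting on $\mathcal H_i^{\otimes M}$ and $Q_i=P_i\otimes I_{\overline{i}}$ its extension to $\mathcal H$ by the identity on the remaining blocks, the fidelity with a pure state is just the expectation of a projector: by definition of the partial trace, $F(\bm\rho_i^M,\psi_i^{\otimes M})=\Tr(\bm\rho_i^MP_i)=\Tr(\bm\rho^{m\times M}Q_i)$, while $F(\bm\rho^{m\times M},\bm\Psi)=\Tr(\bm\rho^{m\times M}\prod_iQ_i)$, using that the $Q_i$ commute (they act non-trivially on disjoint tensor factors) and that $\prod_iQ_i=\bigotimes_iP_i=\ket{\bm\Psi}\!\bra{\bm\Psi}$. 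Throughout I abbreviate $F=F(\bm\rho^{m\times M},\bm\Psi)$ and $F_i=F(\bm\rho_i^M,\psi_i^{\otimes M})$.

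For the lower bound I would show $F\le F_i$ for every $i$. This is immediate from monotonicity of the fidelity under the partial trace over $\mathcal H\setminus\mathcal H_i^{\otimes M}$, which is a CPTP map sending $\bm\Psi\mapsto\psi_i^{\otimes M}$ and $\bm\rho^{m\times M}\mapsto\bm\rho_i^M$; hence $F_i\ge F$. Summing the resulting $1-F_i\le1-F$ over $i$ gives $\sum_i(1-F_i)\le m(1-F)$, which rearranges exactly to $1-m(1-F)\le1-\sum_i(1-F_i)$.

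For the upper bound I would establish the operator inequality $I-\prod_{i=1}^m(I-E_i)\le\sum_{i=1}^mE_i$ for the commuting complementary projectors $E_i:=I-Q_i$. This follows by a short induction on $m$: the case $m=1$ is trivial, and for the step one sets $A=\prod_{i=1}^{m-1}(I-E_i)$ (a projector), writes $I-A(I-E_m)=(I-A)+AE_m$, and combines the induction hypothesis $I-A\le\sum_{i=1}^{m-1}E_i$ with $AE_m\le E_m$, the latter holding because $A$ and $E_m$ commute so that $(I-A)E_m$ is again a (positive) projector. Taking the expectation in $\bm\rho^{m\times M}$ and using $1-F=\Tr(\bm\rho^{m\times M}(I-\prod_iQ_i))$ and $1-F_i=\Tr(\bm\rho^{m\times M}E_i)$ turns this into $1-F\le\sum_i(1-F_i)$, i.e.\ $1-\sum_i(1-F_i)\le F$.

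The individual steps are routine, and I expect the only point needing genuine care to be the operator union bound: I would make sure the induction uses nothing beyond commutativity and positivity of the projectors, so that it remains valid in the present infinite-dimensional continuous-variable setting where no finite-dimensional spectral decomposition is available. The commutativity of the $Q_i$, which rests on their acting on disjoint tensor factors, is exactly what underpins both the factorisation $\prod_iQ_i=\ket{\bm\Psi}\!\bra{\bm\Psi}$ in the first step and the union bound in the third.
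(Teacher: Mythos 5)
Your proof is correct, and at its core it is the paper's argument in different packaging. Both proofs reduce everything to expectations of commuting projectors on the joint space, with $F_i=\Tr(\bm\rho^{m\times M}Q_i)$ and $F=\Tr\bigl(\bm\rho^{m\times M}\prod_{i}Q_i\bigr)$. For the upper bound, your induction is the paper's telescoping sum in disguise: unrolling $I-A(I-E_m)=(I-A)+AE_m$ over $m$ steps yields exactly the identity $I-\prod_{i}Q_i=\sum_{k}Q_1\cdots Q_{k-1}(I-Q_k)$ that the paper writes out explicitly, and your bound $AE_m\le E_m$ is precisely the paper's step of replacing the leading projectors $Q_1\cdots Q_{k-1}$ by the identity; both versions rest only on commutativity and positivity of the projectors, so both are equally safe in the infinite-dimensional setting, as you rightly insist. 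The one genuine divergence is the lower bound: the paper uses the elementary operator inequality $\prod_{j}Q_j\le Q_i$ (drop all factors but one, the difference being a product of commuting projectors and hence positive) together with linearity of the trace, whereas you invoke monotonicity of fidelity under the CPTP partial-trace map. Your route is valid --- the data-processing inequality for fidelity holds on separable Hilbert spaces, and with the paper's convention $F(\rho,\psi)=\bra\psi\rho\ket\psi$ the choice of fidelity convention does not affect the direction of the inequality --- but it imports a substantially heavier theorem for what is a one-line positivity statement; the paper's operator route keeps the whole lemma self-contained at the level of commuting projectors, which is arguably preferable, though nothing is lost in your version.
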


\begin{proof}

\noindent Since $\ket{\psi_1},\dots,\ket{\psi_m}$ are pure states,
\be
F\left(\bm\rho^{m\times M},\bigotimes_{i=1}^m\psi_i^{\otimes M}\right)=\Tr\,[\bm\rho^{m\times M}\ket{\psi_1}\!\bra{\psi_1}^{\otimes M}\otimes\dots\otimes\ket{\psi_m}\!\bra{\psi_m}^{\otimes M}],
\label{fidepure1mult}
\ee
and
\be
\ba
F(\bm\rho_i^M,\psi_i^{\otimes M})&=\Tr[\bm\rho_i^M\ket{\psi_i}\!\bra{\psi_i}^{\otimes M}]\\
&=\Tr[\bm\rho^{m\times M}(\mathbb1\otimes\ket{\psi_i}\!\bra{\psi_i}^{\otimes M}\otimes\mathbb1)],
\ea
\label{fidepure2mult}
\ee
for all $i\in\{1,\dots,m\}$, where $\mathbb1$ is the identity operator.
The right hand side of Eq.~(\ref{fideproductmult}) is obtained by writing $F(\bm\rho^{m\times M},\bigotimes_{i=1}^m\psi_i^{\otimes M})$ as a telescopic sum:
\be
\ba
\Tr\,[\bm\rho^{m\times M}\ket{\psi_1}\!\bra{\psi_1}^{\otimes M}\otimes\dots&\otimes\ket{\psi_m}\!\bra{\psi_m}^{\otimes M}]=\Tr\,[\bm\rho^{m\times M}\mathbb1]\quad\quad\quad\\
&-\Tr\,[\bm\rho^{m\times M}(\mathbb1-\ket{\psi_1}\!\bra{\psi_1}^{\otimes M})\otimes\mathbb1]\quad\quad\quad\\
&-\Tr\,[\bm\rho^{m\times M}\ket{\psi_1}\!\bra{\psi_1}^{\otimes M}\otimes(\mathbb1-\ket{\psi_2}\!\bra{\psi_2}^{\otimes M})\otimes\mathbb1]\quad\quad\quad\\
&-\Tr\,[\bm\rho^{m\times M}\ket{\psi_1}\!\bra{\psi_1}^{\otimes M}\otimes\ket{\psi_2}\!\bra{\psi_2}^{\otimes M}\otimes(\mathbb1-\ket{\psi_3}\!\bra{\psi_3}^{\otimes M})\otimes\mathbb1]\quad\quad\quad\\
&-\dots\quad\quad\quad\\
&-\Tr\,[\bm\rho^{m\times M}\ket{\psi_1}\!\bra{\psi_1}^{\otimes M}\otimes\ket{\psi_2}\!\bra{\psi_2}^{\otimes M}\otimes\dots\otimes(\mathbb1-\ket{\psi_m}\!\bra{\psi_m}^{\otimes M})]\quad\quad\quad\\
&\ge1-\sum_{i=1}^m{\Tr[\bm\rho^{m\times M}(\mathbb1\otimes(\mathbb1-\ket{\psi_i}\!\bra{\psi_i}^{\otimes M})\otimes\mathbb1)]}\\
&=1-\sum_{i=1}^m{\left(1-\Tr[\bm\rho^{m\times M}(\mathbb1\otimes\ket{\psi_i}\!\bra{\psi_i}^{\otimes M}\otimes\mathbb1)]\right)},
\ea
\ee
by linearity of the trace, where we used $\Tr\,(\bm\rho^{m\times M})=1$. This gives
\be
F\left(\bm\rho^{m\times M},\bigotimes_{i=1}^m\psi_i^{\otimes M}\right)\ge1-\sum_{i=1}^m{\left(1-F\left(\bm\rho_i^M,\psi_i^{\otimes M}\right)\right)},
\ee
with Eqs.~(\ref{fidepure1mult}) and (\ref{fidepure2mult}).

The left hand side of Eq.~(\ref{fideproductmult}) is obtained using:
\be
\ba
F\left(\bm\rho^{m\times M},\bigotimes_{i=1}^m\psi_i^{\otimes M}\right)&=\Tr\,[\bm\rho^{m\times M}\ket{\psi_1}\!\bra{\psi_1}^{\otimes M}\otimes\dots\otimes\ket{\psi_m}\!\bra{\psi_m}^{\otimes M}]\\
&\le\Tr\,[\bm\rho^{m\times M}(\mathbb1\otimes\ket{\psi_i}\!\bra{\psi_i}^{\otimes M}\otimes\mathbb1)]\\
&=F\left(\bm\rho_i^M,\psi_i^{\otimes M}\right),
\ea
\ee
for all $i\in\{1,\dots,m\}$, and summing over $i$. 

\end{proof}

\noindent Note that this result is not restricted to a particular protocol for fidelity estimation nor to the continuous-variable regime: if the fidelities of the single subsystems of a quantum state over $m$ subsystems with some target pure states are higher than $1-\frac\lambda m$, for some $\lambda>0$, then by Lemma~\ref{lem:product} the general state has fidelity at least $1-\lambda$ with the tensor product of the $m$ pure states.


\section{Proof of Lemma~\ref{lem:hetmagic}}
\label{app:lemhetmagic}

\noindent In this section, we prove Lemma~\ref{lem:hetmagic} from the main text which we recall below:

\begin{lem}\label{lemma:hetmagic} Let $\bm\beta,\bm\xi\in\mathbb C^m$ and let $\hat V=\hat S(\bm\xi)\hat D(\bm\beta)\,\hat U$, where $\hat U$ is an $m$-mode passive linear transformation with $m\times m$ unitary matrix $U$. For all $\bm\alpha\in\mathbb C^m$, let $\bm\gamma=U\bm\alpha+\bm\beta$. Then,
\be
\Pi^{\bm\xi}_{\bm\gamma}=\hat V\Pi^{\bm0}_{\bm\alpha}\hat V^\dag.
\ee
\end{lem}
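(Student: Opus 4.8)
The plan is to reduce the operator identity to a statement about the single (un-normalised) vectors on which the two POVM elements project. Since both sides are rank-one, $\Pi^{\bm0}_{\bm\alpha}=\frac1{\pi^m}\ket{\bm\alpha}\!\bra{\bm\alpha}$ and $\Pi^{\bm\xi}_{\bm\gamma}=\frac1{\pi^m}\ket{\bm\gamma,\bm\xi}\!\bra{\bm\gamma,\bm\xi}$, it suffices to show that $\hat V\ket{\bm\alpha}=e^{i\phi}\ket{\bm\gamma,\bm\xi}$ for some global phase $\phi\in\mathbb R$: such a phase cancels in the outer product $\hat V\ket{\bm\alpha}\!\bra{\bm\alpha}\hat V^\dag$, so establishing the state identity immediately yields the operator identity. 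Here $\ket{\bm\alpha}=\hat D(\bm\alpha)\ket{\bm0}$ is the multimode coherent state, $\ket{\bm\gamma,\bm\xi}=\hat S(\bm\xi)\hat D(\bm\gamma)\ket{\bm0}$ the squeezed coherent state, and $\ket{\bm0}$ the vacuum.

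First I would push the three factors of $\hat V=\hat S(\bm\xi)\hat D(\bm\beta)\,\hat U$ through $\ket{\bm\alpha}=\hat D(\bm\alpha)\ket{\bm0}$ from right to left. The passive transformation acts linearly on the mode operators via $U$ and, being photon-number preserving, leaves the vacuum invariant, $\hat U\ket{\bm0}=\ket{\bm0}$. Conjugating the displacement generator $\bm\alpha\cdot\hat{\bm a}^\dag-\bm\alpha^*\cdot\hat{\bm a}$ by $\hat U$ then rotates the mode operators by $U$, giving the intertwining relation $\hat U\hat D(\bm\alpha)\hat U^\dag=\hat D(U\bm\alpha)$. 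Combining the two facts, $\hat U\ket{\bm\alpha}=\hat U\hat D(\bm\alpha)\hat U^\dag\,\hat U\ket{\bm0}=\hat D(U\bm\alpha)\ket{\bm0}=\ket{U\bm\alpha}$.

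Next I would apply the displacement $\hat D(\bm\beta)$ and use the composition law $\hat D(\bm\beta)\hat D(U\bm\alpha)=e^{i\phi}\hat D(\bm\beta+U\bm\alpha)$, where $\phi=\im(\bm\beta\cdot(U\bm\alpha)^*)$ is real because the Baker--Campbell--Hausdorff exponent $\frac12(\bm\beta\cdot(U\bm\alpha)^*-\bm\beta^*\cdot(U\bm\alpha))$ is purely imaginary. Since $\bm\gamma=U\bm\alpha+\bm\beta$, this gives $\hat D(\bm\beta)\ket{U\bm\alpha}=e^{i\phi}\ket{\bm\gamma}$. Applying $\hat S(\bm\xi)$ then yields $\hat V\ket{\bm\alpha}=e^{i\phi}\hat S(\bm\xi)\hat D(\bm\gamma)\ket{\bm0}=e^{i\phi}\ket{\bm\gamma,\bm\xi}$, and substituting into the outer product proves $\hat V\Pi^{\bm0}_{\bm\alpha}\hat V^\dag=\Pi^{\bm\xi}_{\bm\gamma}$.

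The only genuinely delicate point is fixing the operator conventions consistently: one must verify that the matrix acting on $\bm\alpha$ in $\hat U\hat D(\bm\alpha)\hat U^\dag$ is exactly $U$, and not $U^\dag$ or $U^T$, which depends on the convention relating $\hat U$ to its matrix $U$ together with the sign conventions in $\hat D$ and $\hat S$. Everything else is routine bookkeeping of global phases, which are irrelevant since we only ever form projectors.
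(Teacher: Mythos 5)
Your proof is correct and follows essentially the same route as the paper's: both decompose $\hat V$ into its three factors and push them through one at a time, using $\hat U\ket{\bm\alpha}=\ket{U\bm\alpha}$, the displacement composition law (with the global phase cancelling in the projector), and the definition $\ket{\bm\gamma,\bm\xi}=\hat S(\bm\xi)\hat D(\bm\gamma)\ket{\bm 0}$. The only cosmetic difference is that you work at the level of vectors with an explicit BCH phase, while the paper states the three conjugation identities directly at the level of POVM elements and chains them.
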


\begin{proof}

Recall that the POVM elements for a tensor product of unbalanced heterodyne detections with unbalancing parameters $\bm\xi\in\mathbb C^m$ are given by
\be
\Pi_{\bm\alpha}^{\bm\xi}=\frac1{\pi^m}\ket{\bm\alpha,\bm\xi}\!\bra{\bm\alpha,\bm\xi},
\ee
for all $\bm\alpha,\bm\xi\in\mathbb C^m$, where $\ket{\bm\alpha,\bm\xi}=\hat S(\bm\xi)\hat D(\bm\alpha)\ket{\bm0}$ is a tensor product of squeezed coherent states. 

The POVM elements of a tensor product of single-mode balanced heterodyne detection are given by $\Pi_{\bm\alpha}^{\bm0}$, for all $\bm\alpha\in\mathbb C^m$, and we have
\be
\Pi_{\bm\alpha}^{\bm\xi}=\hat S(\bm\xi)\Pi_{\bm\alpha}^{\bm0}\hat S^\dag(\bm\xi).
\label{het1}
\ee
In particular, a single-mode squeezing followed by a single-mode balanced heterodyne detection can be simulated by performing directly an heterodyne detection unbalanced according to the squeezing parameter. One retrieves balanced heterodyne detection by setting the unbalancing parameter to $0$ and homodyne detection by letting the modulus of the unbalancing parameter go to infinity.

Passive linear transformations correspond to unitary transformations of the creation and annihilation operators of the modes. These transformations, which may be implemented by unitary optical interferometers, map coherent states to coherent states: if $\hat U$ is a passive linear transformation and $U$ is the unitary matrix describing its action on the creation and annihilation operators of the modes, an input coherent state $\ket{\bm\alpha}$ is mapped to an output coherent state $\hat U\ket{\bm\alpha}=\ket{U\bm\alpha}$, where $U\bm\alpha$ is obtained by multiplying the vector $\bm\alpha$ by the unitary matrix $U$. Hence, the POVM elements of a passive linear transformation followed by a product of single-mode balanced heterodyne detections are given by
\be
\hat U\Pi^{\bm0}_{\bm\alpha}\hat U^\dag=\Pi^{\bm0}_{U\bm\alpha},
\label{het2}
\ee
for all $\bm\alpha\in\mathbb C^m$. This implies that the passive linear transformation $\hat U^\dag$ followed by a tensor product of single-mode balanced heterodyne detections can be simulated by performing the heterodyne detections first, then multiplying the vector of samples obtained by $U$.

A similar property holds with single-mode displacements: since displacements map coherent states to coherent states, up to a global phase, by displacing their amplitude, a single-mode displacement followed by a single-mode heterodyne detection can be simulated by performing the heterodyne detection first, then translating the sample obtained according to the displacement amplitude. In particular we have
\be
\hat D(\bm\beta)\Pi_{\bm\alpha}^{\bm0}\hat D^\dag(\bm\beta)=\Pi_{\bm\alpha+\bm\beta}^{\bm0},
\label{het3}
\ee
for all $\bm\alpha,\bm\beta\in\mathbb C^m$, where $\bm\alpha+\bm\beta=(\alpha_1+\beta_1,\dots,\alpha_m+\beta_m)$.

Combining the properties of heterodyne detection in Eqs.~(\ref{het1}), (\ref{het2}) and (\ref{het3}), we have:
\be
\begin{cases}
\hat U\Pi^{\bm0}_{\bm\alpha}\hat U^\dag=\Pi^{\bm0}_{U\bm\alpha},\\
\hat D(\bm\beta)\Pi_{\bm\alpha}^{\bm0}\hat D^\dag(\bm\beta)=\Pi_{\bm\alpha+\bm\beta}^{\bm0},\\
\hat S(\bm\xi)\Pi_{\bm\alpha}^{\bm0}\hat S^\dag(\bm\xi)=\Pi_{\bm\alpha}^{\bm\xi},
\end{cases}
\ee
for all $\bm\alpha,\bm\beta,\bm\xi\in\mathbb C^m$ and all $m$-mode passive linear transformations $\hat U$ with $m\times m$ unitary matrix $U$. Writing $\hat V=\hat S(\bm\xi)\hat D(\bm\beta)\hat U$, we obtain
\be
\ba
\hat V\Pi_{\bm\alpha}^{\bm0}\hat V^\dag&=\hat S(\bm\xi)\hat D(\bm\beta)\hat U\Pi_{\bm\alpha}^{\bm0}\hat U^\dag\hat D^\dag(\bm\beta)\hat S^\dag(\bm\xi)\\
&=\hat S(\bm\xi)\hat D(\bm\beta)\Pi_{U\bm\alpha}^{\bm0}\hat D^\dag(\bm\beta)\hat S^\dag(\bm\xi)\\
&=\hat S(\bm\xi)\Pi_{U\bm\alpha+\bm\beta}^{\bm0}\hat S^\dag(\bm\xi)\\
&=\Pi_{U\bm\alpha+\bm\beta}^{\bm\xi}.
\ea
\ee

\end{proof}


\section{Proof of Theorem~\ref{th:we}}
\label{app:thwe}

\noindent In this section, we prove the efficiency of Protocol~\ref{prot:we}, which we recall below:

\begin{protocol}[Multimode fidelity witness estimation]\label{prot:weapp}
Let $c_1,\dots,c_m\in\mathbb N^*$. Let $\ket{C_i}=\sum_{n=0}^{c_i-1}{c_{i,n}\ket n}$ be a core state, for all $i\in\{1,\dots,m\}$. Let $\hat U$ be an $m$-mode passive linear transformation with $m\times m$ unitary matrix $U$, and let $\bm\beta,\bm\xi\in\mathbb C^m$. We write $\ket{\bm\psi}=\hat S(\bm\xi)\hat D(\bm\beta)\,\hat U\bigotimes_{i=1}^m\ket{C_i}$ the $m$-mode target pure state. Let $N,M\in\mathbb N^*$, and let $p_1,\dots,p_m\in\mathbb N^*$ and $0<\eta_1,\dots,\eta_m<1$ be free parameters. Let $\bm\rho^{\otimes N+M}$ be $N+M$ copies of an unknown $m$-mode (mixed) quantum state $\bm\rho$.
\begin{enumerate}
\item
Measure all $m$ subsystems of $N$ copies of $\bm\rho$ with unbalanced heterodyne detection with unbalancing parameters $\bm\xi=(\xi_1,\dots,\xi_m)$, obtaining the vectors of samples $\bm\gamma^{(1)},\dots,\bm\gamma^{(N)}\in\mathbb C^m$.
\item
For all $k\in\{1,\dots,N\}$, compute the vectors $\bm\alpha^{(k)}=U^\dag\left(\bm\gamma^{(k)}-\bm\beta\right)$. We write $\bm\alpha^{(k)}=(\alpha^{(k)}_1,\dots\alpha^{(k)}_m)\in\mathbb C^m$.
\item
For all $i\in\{1,\dots,m\}$, compute the mean $F_{C_i}(\bm\rho)$ of the function $z\mapsto g_{C_i}^{(p_i)}(z,\eta_i)$ (defined in Eq.~(\ref{gCmain})) over the samples $\alpha_i^{(1)},\dots,\alpha_i^{(N)}\in\mathbb C$.
\item
Compute the fidelity witness estimate $W_{\bm\psi}=1-\sum_{i=1}^m{(1-F_{C_i}(\bm\rho)^M)}$.
\end{enumerate}
\end{protocol}

\noindent The value $W_{\bm\psi}$ obtained is an estimate of a tight lower bound on the fidelity between the $M$ remaining copies of $\bm\rho$ and $M$ copies of the $m$-mode target state $\ket{\bm\psi}=\hat S(\bm\xi)\hat D(\bm\beta)\,\hat U\bigotimes_{i=1}^m\ket{C_i}$. The efficiency and completeness of the protocol are summarised by the following result:

\begin{theo}\label{th:weapp}
Let $\epsilon>0$. With the notations of Protocol~\ref{prot:weapp},
\be
1-m\left[1-F(\ket{\bm\psi}^{\otimes M},\bm\rho^{\otimes M})\right]-\epsilon\le W_{\bm\psi}\le F(\ket{\bm\psi}^{\otimes M},\bm\rho^{\otimes M})+\epsilon,
\ee
with probability $1-P^{iid}_W$, where
\be
P^{iid}_W=2\sum_{i=1}^m{\exp\left[-\frac{N\epsilon^{2+\frac{2c_i}{p_i}}}{M^{2+\frac{2c_i}{p_i}}K_{C_i,p_i}}\right]},
\label{PiidW}
\ee
where for all $i\in\{1,\dots,m\}$, $K_{C_i,p_i}$ is a constant independent of $\bm\rho$, defined in Eq.~(\ref{KCp}).
Moreover, if the tested state is perfect, i.e., $\rho^{\otimes N+M}=\ket\psi\!\bra\psi^{\otimes N+M}$, then $W_{\bm\psi}\ge1-\epsilon$ with probability $1-P^{iid}_W$.
\end{theo}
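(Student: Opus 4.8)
The plan is to chain together the three ingredients already established: Lemma~\ref{lem:hetmagic}, the $M$-copy form of Lemma~\ref{lem:product} (namely Lemma~\ref{lem:productmulti}), and the single-mode estimate of Theorem~\ref{th:feapp}. Write $\hat V=\hat S(\bm\xi)\hat D(\bm\beta)\hat U$, so that the target is $\ket{\bm\psi}=\hat V\bigotimes_{i=1}^m\ket{C_i}$, and set $\tilde{\bm\rho}:=\hat V^\dag\bm\rho\hat V$.

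First I would use Lemma~\ref{lem:hetmagic} to reinterpret steps 1--2 of the protocol. Measuring $\bm\rho$ with the unbalanced POVM $\{\Pi^{\bm\xi}_{\bm\gamma}\}$ and applying the affine map $\bm\gamma\mapsto\bm\alpha=U^\dag(\bm\gamma-\bm\beta)$ produces samples whose density is $\Tr(\bm\rho\,\Pi^{\bm\xi}_{\bm\gamma})=\Tr(\hat V^\dag\bm\rho\hat V\,\Pi^{\bm0}_{\bm\alpha})=\Tr(\tilde{\bm\rho}\,\Pi^{\bm0}_{\bm\alpha})$, the affine map being volume-preserving since $U$ is unitary. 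Hence the post-processed vectors $\bm\alpha^{(k)}$ are i.i.d.\ balanced-heterodyne samples of $\tilde{\bm\rho}$, and for each fixed mode $i$ the scalars $\alpha_i^{(1)},\dots,\alpha_i^{(N)}$ are i.i.d.\ samples from $Q_{\tilde\rho_i}$, where $\tilde\rho_i=\Tr_{j\neq i}(\tilde{\bm\rho})$ is the $i$-th reduced state. This is exactly the input required by Protocol~\ref{prot:fe}, so each quantity $F_{C_i}(\bm\rho)$ is the single-mode estimator of $F(\tilde\rho_i,C_i)$.

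Next I would apply Theorem~\ref{th:feapp} independently to each mode $i$ with core target $\ket{C_i}$ and parameters $p_i,\eta_i$, yielding $|F(\tilde\rho_i,C_i)^M-F_{C_i}(\bm\rho)^M|\le\epsilon_i$ with failure probability $P_{C_i}^{iid}$, the $\epsilon_i$ being allocated across modes so that $\sum_i\epsilon_i=\epsilon$. A union bound over the $m$ modes controls the total failure probability by $\sum_i P_{C_i}^{iid}=P_W^{iid}$. On the complementary good event, summing the estimate errors gives $|W-W_{\bm\psi}|\le\epsilon$, where $W:=1-\sum_i(1-F(\tilde\rho_i,C_i)^M)$ is the \emph{true} witness. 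Invariance of fidelity under the global unitary $\hat V$ identifies $F(\tilde{\bm\rho}^{\otimes M},\bigotimes_iC_i^{\otimes M})=F(\ket{\bm\psi}^{\otimes M},\bm\rho^{\otimes M})$, while the reduced state of $\tilde{\bm\rho}^{\otimes M}$ on the $M$ copies of mode $i$ is $\tilde\rho_i^{\otimes M}$. Feeding these into Lemma~\ref{lem:productmulti} gives the sandwich $1-m[1-F(\ket{\bm\psi}^{\otimes M},\bm\rho^{\otimes M})]\le W\le F(\ket{\bm\psi}^{\otimes M},\bm\rho^{\otimes M})$, and combining it with $|W-W_{\bm\psi}|\le\epsilon$ yields the claimed two-sided bound.

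For completeness, if $\bm\rho=\ket{\bm\psi}\!\bra{\bm\psi}$ then $\tilde{\bm\rho}=\bigotimes_i\ket{C_i}\!\bra{C_i}$, so every $\tilde\rho_i=\ket{C_i}\!\bra{C_i}$ and each true fidelity equals $1$; the good event then forces $F_{C_i}(\bm\rho)^M\ge1-\epsilon_i$ for every $i$, whence $W_{\bm\psi}\ge1-\sum_i\epsilon_i=1-\epsilon$. The only genuinely delicate point is the first step: one must check that the classical post-processing of Lemma~\ref{lem:hetmagic} really transfers the global measurement statistics onto the correct single-mode reduced states $\tilde\rho_i$, and preserves both independence across the $N$ runs and the per-mode product structure of balanced heterodyne, so that Theorem~\ref{th:feapp} applies verbatim to each mode. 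Everything after that is bookkeeping---the union bound and the additive propagation of the per-mode errors through the witness.
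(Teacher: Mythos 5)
Your proof is correct and follows essentially the same route as the paper's: per-mode application of Theorem~\ref{th:feapp} with error $\epsilon/m$, a union bound over the $m$ modes, Lemma~\ref{lem:productmulti} for the sandwich bound, and Lemma~\ref{lemma:hetmagic} together with unitary invariance of the fidelity to handle $\hat V=\hat S(\bm\xi)\hat D(\bm\beta)\hat U$. The only difference is organisational---you conjugate by $\hat V$ at the outset and work with $\tilde{\bm\rho}=\hat V^\dag\bm\rho\hat V$ throughout, whereas the paper proves the product-state case first and generalises at the end---which changes nothing of substance.
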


\begin{proof}

\noindent We use the notations of the protocol above and we first prove the theorem in the case where the target state is a tensor product of single-mode pure states, i.e., $\hat S(\bm\xi)\hat D(\bm\beta)\,\hat U=\mathbb1$ and $\bm\alpha^{(k)}=\bm\gamma^{(k)}$ for all $k\in\{1,\dots,N\}$. 

We thus consider for the target state a tensor product $\ket{\bm\psi}=\bigotimes_{i=1}^m\ket{C_i}$ of core states. For all $i\in\{1,\dots,m\}$ we write $\ket{C_i}=\sum_{n=0}^{c_i-1}{c_{i,n}\ket n}$, where $c_i\in\mathbb N^*$, and $\rho_i=\Tr_{\{1,\dots,m\}\setminus\{i\}}(\bm\rho)$ the reduced state of $\bm\rho$ over the $i^{th}$ mode. Let $\epsilon>0$. For all $i\in\{1,\dots,m\}$, with $\eta_i\le\eta_{C_i,p_i}$, defined in Eq.~(\ref{etafeapp}), we have by Theorem~\ref{th:feapp}, with $\frac\epsilon m$:
\be
\left|F(\ket{C_i}^{\otimes M},\rho_i^{\otimes M})-F_{C_i}(\bm\rho)^M\right|\le\frac\epsilon m,
\ee
with probability higher than $1-P_{C_i}^{iid}$, where
\be
P_{C_i}^{iid}=2\exp\left[-\frac{N\epsilon^{2+\frac{2c_i}{p_i}}}{M^{2+\frac{2c_i}{p_i}}m^{2+\frac{2c_i}{p_i}}K_{C_i,p_i}}\right],
\ee
where $K_{C_i,p_i}$ is a constant independent of $\bm\rho$ given in Eq.~(\ref{KCp}). Hence, taking the union bound of the failure probabilities $P_{C_i}^{iid}$, we obtain
\be
\ba
\left|1-\sum_{i=1}^m{\left(1-F(\ket{C_i}^{\otimes M},\rho_i^{\otimes M})\right)}-W_{\bm\psi}\right|&=\left|\sum_{i=1}^m{\left(F(\ket{C_i}^{\otimes M},\rho_i^{\otimes M})-F_{C_i}(\bm\rho)^M\right)}\right|\\
&\le\sum_{i=1}^m\left|F(\ket{C_i}^{\otimes M},\rho_i^{\otimes M})-F_{C_i}(\bm\rho)^M\right|\\
&\le\epsilon,
\ea
\label{unionboundW}
\ee
with probability higher than $1-P_W^{iid}$, where
\be
P_W^{iid}=2\sum_{i=1}^m{\exp\left[-\frac{N\epsilon^{2+\frac{2c_i}{p_i}}}{M^{2+\frac{2c_i}{p_i}}m^{2+\frac{2c_i}{p_i}}K_{C_i,p_i}}\right]}.
\ee
By Lemma~\ref{lem:productmulti}, with $\bm\rho^{m\times M}=\bm\rho^{\otimes M}$ and $\ket{\psi_i}=\ket{C_i}$,
\be
1-m\left[1-F(\ket{\bm\psi}^{\otimes M},\bm\rho^{\otimes M})\right]\le 1-\sum_{i=1}^m{\left(1-F(\ket{C_i}^{\otimes M},\rho_i^{\otimes M})\right)}\le F(\ket{\bm\psi}^{\otimes M},\bm\rho^{\otimes M}).
\ee
With Eq.~(\ref{unionboundW}) we obtain
\be
1-m\left[1-F(\ket{\bm\psi}^{\otimes M},\bm\rho^{\otimes M})\right]-\epsilon\le W_{\bm\psi}\le F(\ket{\bm\psi}^{\otimes M},\bm\rho^{\otimes M})+\epsilon,
\label{estimateWapp}
\ee
with probability $1-P^{iid}_W$, where
\be
P^{iid}_W=2\sum_{i=1}^m{\exp\left[-\frac{N\epsilon^{2+\frac{2c_i}{p_i}}}{M^{2+\frac{2c_i}{p_i}}m^{2+\frac{2c_i}{p_i}}K_{C_i,p_i}}\right]},
\ee
where for all $i\in\{1,\dots,m\}$, $K_{C_i,p_i}$ is a constant independent of $\bm\rho$, defined in Eq.~(\ref{KCp}).
Moreover, if the tested state is perfect, i.e., $\rho^{\otimes N+M}=\ket\psi\!\bra\psi^{\otimes N+M}$, then $F(\ket{\bm\psi}^{\otimes M},\bm\rho^{\otimes M})=1$. In that case, by Eq.~(\ref{estimateWapp}),
\be
W_{\bm\psi}\ge1-\epsilon,
\ee
with probability $1-P^{iid}_W$.

\medskip

We now generalise the previous proof to the case where the target state is of the form
\be
\ket{\bm\psi}=\hat S(\bm\xi)\hat D(\bm\beta)\,\hat U\bigotimes_{i=1}^m\ket{C_i},
\ee
where for all $i\in\{1,\dots,m\}$ each of the states $\ket{C_i}=\sum_{n=0}^{c_i-1}{c_{i,n}\ket n}$ is a core state, where $\hat U$ is an $m$-mode passive linear transformation with $m\times m$ unitary matrix $U$ and where $\bm\xi,\bm\beta\in\mathbb C^m$. To that end, we make use of the properties of heterodyne detection: by Lemma~\ref{lemma:hetmagic}, writing $\hat V=\hat S(\bm\xi)\hat D(\bm\beta)\,\hat U$, the POVM $\{\hat V\Pi^{\bm0}_{\bm\alpha}\hat V^\dag\}_{\bm\alpha\in\mathbb C^m}$ can be simulated with the POVM $\{\Pi^{\bm\xi}_{\bm\gamma}\}_{\bm\gamma\in\mathbb C^m}$ by computing $\bm\alpha=U^\dag(\bm\gamma-\bm\beta)$, i.e., translating the vector of samples $\bm\gamma$ by the vector of complex amplitudes $-\bm\beta$ and multiplying the vector obtained by the $m\times m$ unitary matrix $U^\dag$.
Formally, let $\bm\rho$ be an $m$-mode (mixed) state, then,
\be
\ba
\Tr\,(V^\dag\bm\rho\hat V\,\Pi_{\bm\alpha}^{\bm0})&=\Tr\,(\bm\rho\,\hat V\Pi_{\bm\alpha}^{\bm0}V^\dag)\\
&=\Tr\,(\bm\rho\,\Pi_{U\bm\alpha+\bm\beta}^{\bm\xi}).
\ea
\ee
In particular, computing the fidelity witness estimate $W_{\bm\psi}$ in Protocol~\ref{prot:weapp}, i.e., using samples $\bm\gamma$ from unbalanced heterodyne detection of copies of $\rho$ post-processed as $\bm\alpha=U^\dag(\bm\gamma-\beta)$, gives
\be
1-m\left[1-F\left(\bigotimes_{i=1}^m\ket{C_i}\!\bra{C_i}^{\otimes M},(V^\dag\bm\rho V)^{\otimes M}\right)\right]-\epsilon\le W_{\bm\psi}\le F\left(\bigotimes_{i=1}^m\ket{C_i}\!\bra{C_i}^{\otimes M},(V^\dag\bm\rho V)^{\otimes M}\right)+\epsilon,
\ee
with probability $1-P_W^{iid}$ by Eq.~(\ref{estimateWapp}). Given that
\be
F\left(\bigotimes_{i=1}^m\ket{C_i}\!\bra{C_i}^{\otimes M},(V^\dag\bm\rho V)^{\otimes M}\right)=F(\bm\psi^{\otimes M},\bm\rho^{\otimes M}),
\ee
we finally obtain
\be
1-m\left[1-F(\ket{\bm\psi}^{\otimes M},\bm\rho^{\otimes M})\right]-\epsilon\le W_{\bm\psi}\le F(\ket{\bm\psi}^{\otimes M},\bm\rho^{\otimes M})+\epsilon,
\label{estimateWapp2}
\ee
with probability $1-P^{iid}_W$, where $P^{iid}_W$ is defined in Eq.~(\ref{PiidW}).

\end{proof}

\noindent The estimate of the fidelity witness obtained is tight when the fidelity is close to $1$. 
Writing $n_i$ the number of indices $j\in\{1,\dots,m\}$ such that $\frac{c_j}{p_j}=\frac{c_i}{p_i}$, for all $i\in\{1,\dots,m\}$, the number of samples needed for a precision $\epsilon>0$ of the estimate $W_{\bm\psi}$ and a confidence $1-\delta$ scales as
\be
N_2=\mathcal O\left(\max_i\left\{\left(\frac{Mm}\epsilon\right)^{2+\frac{2c_i}{p_i}}\log(n_i)\log\left(\frac1\delta\right)\right\}\right),
\ee
with Eq.~(\ref{PiidW}).


\section{Removing the i.i.d.\ assumption for Protocol~\ref{prot:we}}
\label{app:wenotiid}

\noindent In this section, we derive a version of Protocol~\ref{prot:we} which does not assume i.i.d.\ state preparation. 

For $N,M\in\mathbb N^*$, let $\mathcal H_i$ and $\mathcal K_k$ denote single-mode infinite-dimensional Hilbert spaces, for all $i\in\{1,\dots,m\}$ and all $k\in\{1,\dots,N+M\}$. Let $\mathcal H$ be an $m\times(N+M)$-mode infinite-dimensional Hilbert space. The following spaces are isomorphic:
\be
\ba
\mathcal H&\simeq\mathcal K_1^{\otimes m}\otimes\dots\otimes\mathcal K_{N+M}^{\otimes m}\\
&\simeq\mathcal H_1^{\otimes(N+M)}\otimes\dots\otimes\mathcal H_m^{\otimes(N+M)}.
\ea
\ee
For any $\bm\rho^{m\times(N+M)}\in\mathcal H$, we define
\be
\bm\sigma_k^m=\Tr_{\mathcal H\setminus\mathcal K_k^{\otimes m}}(\bm\rho^{m\times(N+M)}),
\label{sigmareduced}
\ee
the $m$-mode reduced state over $\mathcal K_k^{\otimes m}$, for all $k\in\{1,\dots,N+M\}$. We also define
\be
\bm\rho_i^{N+M}=\Tr_{\mathcal H\setminus\mathcal H_i^{\otimes(N+M)}}(\bm\rho^{m\times(N+M)}),
\label{rhoreduced}
\ee
the $(N+M)$-mode reduced state over $\mathcal H_i^{\otimes(N+M)}$, for all $i\in\{1,\dots,m\}$ (see Fig.~\ref{fig:subsystems}).

The protocol is then as follows:

\begin{prot}[General multimode fidelity witness estimation]\label{prot:wenotiid}
Let $c_1,\dots,c_m\in\mathbb N^*$. Let $\ket{C_i}=\sum_{n=0}^{c_i-1}{c_{i,n}\ket n}$ be a core state, for all $i\in\{1,\dots,m\}$. Let $\hat U$ be an $m$-mode passive linear transformation with $m\times m$ unitary matrix $U$, and let $\bm\beta,\bm\xi\in\mathbb C^m$. We write $\ket{\bm\psi}=\hat S(\bm\xi)\hat D(\bm\beta)\,\hat U\bigotimes_{i=1}^m\ket{C_i}$ the $m$-mode target pure state. Let $N,M\in\mathbb N^*$, and let $p_1,\dots,p_m\in\mathbb N^*$, $0<\eta_1,\dots,\eta_m<1$, and $E_1,\dots,E_m,S_1,\dots,S_m\in\mathbb N$ be free parameters. We write $N=N'+K+Q$, for $N',K,Q\in\mathbb N$. Let $\bm\rho^{m\times(N+M)}\in\mathcal H$ be an unknown quantum state over $m\times(N+M)$ subsystems.
\begin{enumerate}
\item
Measure all the $m$ subsystems of $N=N'+K+Q$ chosen at random $m$-mode reduced states $\bm\sigma_k^m$ (see Eq.~(\ref{sigmareduced})) of $\bm\rho^{m\times(N+M)}$ with unbalanced heterodyne detection with unbalancing parameter $\bm\xi$, obtaining the vectors of samples $\bm\gamma^{(1)},\dots,\bm\gamma^{(N'+K+Q)}\in\mathbb C^m$. Let $\bm\rho^{m\times M}$ be the remaining state over $m\times M$ subsystems.
\item
Discard the last $Q$ vectors of samples.
\item
For all $k\in\{1,\dots,N'+K\}$, compute the vectors $\bm\alpha^{(k)}=U^\dag\left(\bm\gamma^{(k)}-\bm\beta\right)$. We write $\bm\alpha^{(k)}=(\alpha^{(k)}_1,\dots\alpha^{(k)}_m)\in\mathbb C^m$.
\item
For all $i\in\{1,\dots,m\}$, record the number $R_i$ of samples $\alpha_i^{(k)}$ such that $|\alpha^{(k)}_i|^2+1>E_i$, for $k\in\{N'+1,\dots,N'+K\}$. The protocol aborts if $R_i>S_i$ for any $i\in\{1,\dots,m\}$.
\item
For all $i\in\{1,\dots,m\}$, compute the mean $F_{C_i}(\bm\rho)$ of the function $z\mapsto g_{C_i}^{(p_i)}(z,\eta_i)$ (defined in Eq.~(\ref{gCmain})) over the samples $\alpha_i^{(1)},\dots,\alpha_i^{(N')}\in\mathbb C$.
\item
Compute the fidelity witness estimate $W_{\bm\psi}=1-\sum_{i=1}^m{(1-F_{C_i}(\bm\rho)^M)}$.
\end{enumerate}
\end{prot}

\noindent Akin to Protocol~\ref{prot:fenotiid} compared to Protocol~\ref{prot:fe}, this protocol differs from Protocol~\ref{prot:we} by two additional classical post-processing steps, steps 2 and 4. These steps are part of a de Finetti reduction for infinite-dimensional systems~\cite{renner2009finetti} detailed in~\cite{chabaud2019buildingarXiv}. In particular, step 3 is an energy test, and the energy parameters $E_i$ and $S_i$ should be chosen to guarantee completeness, i.e., if the perfect state is sent then it passes the energy test with high probability. 

\begin{figure}
	\begin{center}
		\includegraphics[width=0.6\columnwidth]{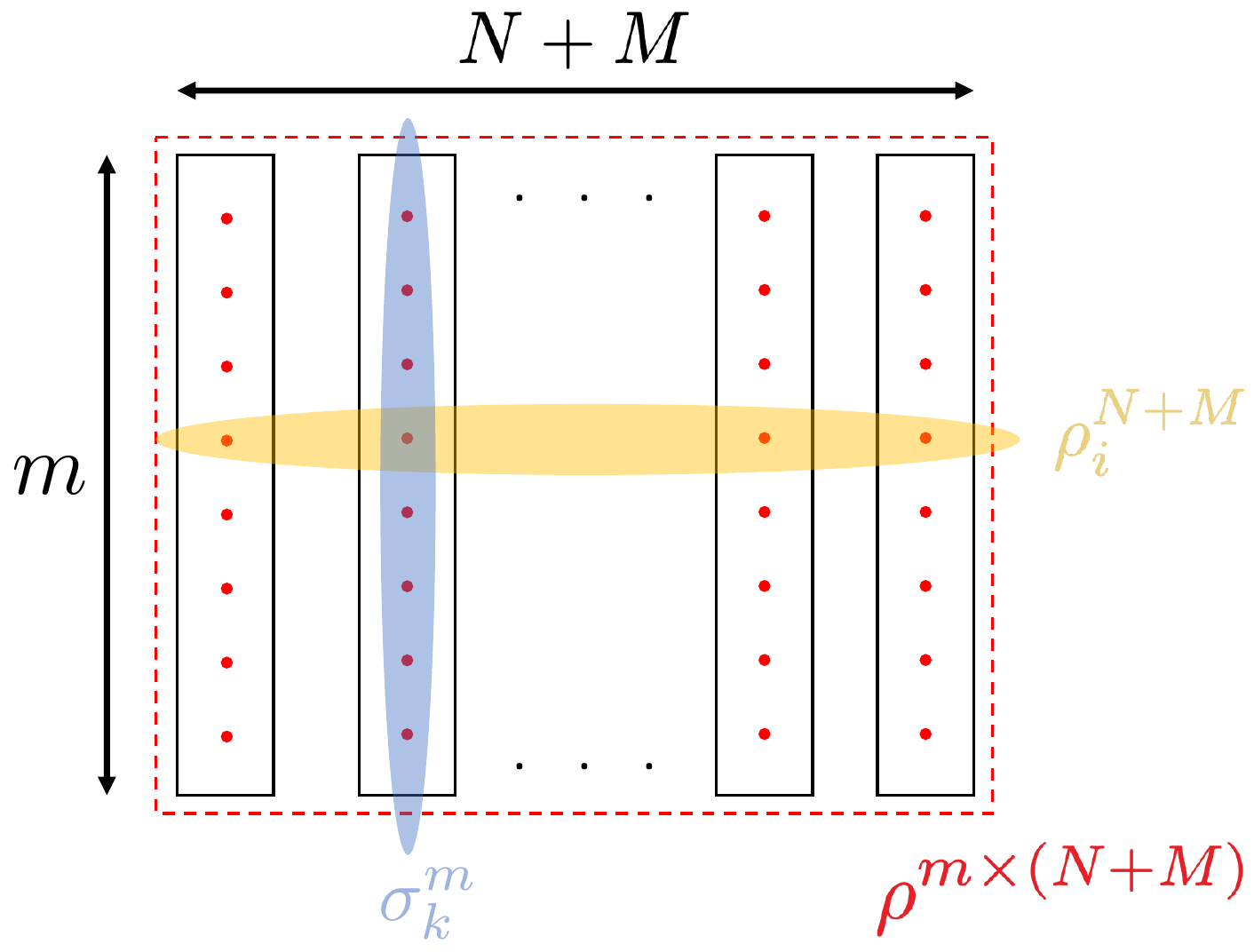}
		\caption{The notations for the subsystems of a quantum state $\bm\rho^{m\times(N+M)}$ over $m\times(N+M)$ modes, seen as $N+M$ groups of $m$ subsystems or as $m$ groups of $N+M$ subsystems. The red dots depict the single-mode subsystems of $\bm\rho^{m\times(N+M)}$. The reduced state of $\bm\rho^{m\times(N+M)}$ corresponding to the $i^{th}$ group of $N+M$ subsystems is denoted $\bm\rho_i^{N+M}$, for all $i\in\{1,\dots,m\}$, and the reduced state of $\bm\rho^{m\times(N+M)}$ corresponding to the $k^{th}$ group of $m$ subsystems is denoted $\bm\sigma_k^m$, for all $k\in\{1,\dots,N+M\}$.}
		\label{fig:subsystems}
	\end{center}
\end{figure}

The value $W_{\bm\psi}$ obtained is an estimate of a tight lower bound of the fidelity between the remaining state $\rho^{m\times M}$ over $m\times M$ subsystems and $M$ copies of the $m$-mode target state $\ket{\bm\psi}$. The efficiency of the protocol is summarised by the following result:

\begin{theo}\label{th:wenotiid}
Let $\epsilon>0$. With the notations of Protocol~\ref{prot:fenotiid} and $\eta\le\eta_{C,p}$, defined in Eq.~(\ref{etafeapp}),
\be
1-m\left[1-F(\ket{\bm\psi}^{\otimes M},\bm\rho^{m\times M})\right]-\mathcal O(\epsilon)\le W_{\bm\psi}\le F(\ket{\bm\psi}^{\otimes M},\bm\rho^{m\times M})+\mathcal O(\epsilon)
\ee
or the protocol aborts in step 3, with probability greater than $1-P_W$,
where
\be
P_W=\mathcal O\left(\frac1{\poly(m,M,\frac1\epsilon)}\right),
\ee
for
\be
N=\mathcal O\left(\frac{M^{7+\frac{2c}p}m^{4+\frac{4c}p}}{\epsilon^{4+\frac{4c}p}}\right).
\ee
\end{theo}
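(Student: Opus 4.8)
The plan is to establish Theorem~\ref{th:wenotiid} in exact parallel with the way the i.i.d.\ multimode result (Theorem~\ref{th:weapp}) is obtained from its single-mode counterpart, simply replacing the i.i.d.\ single-mode subroutine (Theorem~\ref{th:feapp}) by its non-i.i.d.\ version (Theorem~\ref{th:fenotiid}). The two structural ingredients, Lemmas~\ref{lem:productmulti} and~\ref{lem:hetmagic}, are used verbatim; the only genuinely new feature is that the de Finetti reduction and the energy test are now carried out at the multimode level, which is precisely what the additional post-processing steps of Protocol~\ref{prot:wenotiid} implement.

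First I would use Lemma~\ref{lem:hetmagic} to dispose of the Gaussian transformation $\hat V=\hat S(\bm\xi)\hat D(\bm\beta)\,\hat U$. As in the i.i.d.\ proof, post-processing each sample vector as $\bm\alpha^{(k)}=U^\dag(\bm\gamma^{(k)}-\bm\beta)$ converts the unbalanced heterodyne statistics on $\bm\rho^{m\times(N+M)}$ into balanced heterodyne statistics on the unitarily rotated state, with the target replaced by the product of core states $\bigotimes_{i=1}^m\ket{C_i}$. Since $\hat V$ is unitary the fidelity $F(\ket{\bm\psi}^{\otimes M},\bm\rho^{m\times M})$ is invariant, so it suffices to prove the bounds for a product target measured with balanced heterodyne.

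The crux is then the observation that, for each mode $i$, the scalar samples $\alpha_i^{(1)},\dots,\alpha_i^{(N)}$ are exactly the outcomes of balanced heterodyne detection of $N$ randomly chosen subsystems of the reduced column state $\bm\rho_i^{N+M}$ defined in Eq.~\eqref{rhoreduced}. This is precisely the input required by the single-mode non-i.i.d.\ protocol (Protocol~\ref{prot:fenotiid}): the one shared random choice of rows induces, for every column, a random choice of that column's subsystems, and the per-mode energy test is nothing but the single-mode energy test applied to column $i$. I would therefore invoke Theorem~\ref{th:fenotiid} for each $i$ with target $\ket{C_i}$ and target precision $\epsilon/m$, concluding that either the protocol aborts or, up to a de Finetti residual absorbed into the error term in the stated polynomial regime, $|F(\ket{C_i}^{\otimes M},\bm\rho_i^M)-F_{C_i}(\bm\rho)^M|=\mathcal O(\epsilon/m)$. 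A union bound over the $m$ columns and over the four failure events of each single-mode instance keeps the total failure probability $P_W$ inverse-polynomial; this is subadditive irrespective of the correlations between columns induced by the shared selection, so no independence across modes is needed.

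Assembling the witness is then routine: summing the $m$ single-column estimates, each accurate to $\mathcal O(\epsilon/m)$, shows by the triangle inequality that $W_{\bm\psi}=1-\sum_i(1-F_{C_i}(\bm\rho)^M)$ is $\mathcal O(\epsilon)$-close to $1-\sum_i(1-F(\ket{C_i}^{\otimes M},\bm\rho_i^M))$, which Lemma~\ref{lem:productmulti} sandwiches between $1-m[1-F]$ and $F$ with $F=F(\ket{\bm\psi}^{\otimes M},\bm\rho^{m\times M})$; completeness follows by setting $F=1$. Substituting $\epsilon\mapsto\epsilon/m$ into the single-mode parameter choices of Theorem~\ref{th:fenotiid} and worst-casing over columns yields a sample count $N=N'+K+Q$ polynomial in $m$, $M$ and $1/\epsilon$ as in the statement. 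The main obstacle I anticipate lies not in this bookkeeping but in rigorously justifying that the single-mode de Finetti reduction may be applied column-by-column to the marginals $\bm\rho_i^{N+M}$ under one common random permutation of the rows, and in checking that the $m$ separate energy thresholds $(E_i,S_i)$ can be chosen to guarantee completeness simultaneously, i.e.\ that the ideal state passes every energy test with overwhelming probability once the per-column failure probabilities are summed.
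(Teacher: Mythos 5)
Your proposal is correct and follows essentially the same route as the paper's proof: apply the non-i.i.d.\ single-mode result (Theorem~\ref{th:fenotiid}) mode-by-mode with precision $\epsilon/m$, take a union bound over the $m$ modes, sandwich the sum of single-mode fidelities via Lemma~\ref{lem:productmulti}, and dispose of the Gaussian transformation $\hat S(\bm\xi)\hat D(\bm\beta)\,\hat U$ with Lemma~\ref{lem:hetmagic} exactly as in the i.i.d.\ case. The subtlety you flag at the end --- justifying the column-by-column de Finetti reduction under one shared random choice of rows, and the simultaneous completeness of the $m$ energy tests --- is a genuine delicate point, which the paper itself passes over by simply declaring its proof ``similar'' to the i.i.d.\ one.
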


\begin{proof}

For the proof we only consider the case where the target state is a tensor product of core states $\ket{\bm\psi}=\bigotimes_{i=1}^m{\ket{C_i}}$, i.e., $\hat S(\bm\xi)\hat D(\bm\beta)\,\hat U=\mathbb1$, since the generalisation to the multimode target states of the form $\hat S(\bm\xi)\hat D(\bm\beta)\,\hat U\bigotimes_{i=1}^m\ket{C_i}$ using Lemma~\ref{lem:hetmagic} is identical to the i.i.d.\ case, detailed in section~\ref{app:thwe}. Then, the proof is similar to the one of Theorem~\ref{th:weapp}, using Theorem~\ref{th:fenotiid} instead of Theorem~\ref{th:feapp}.

Writing $\bm\rho^{m\times M}\in\mathcal H_1^{\otimes M}\otimes\dots\otimes\mathcal H_m^{\otimes M}$, for all $i\in\{1,\dots,m\}$ (see Fig.~\ref{fig:subsystems} with $N=0$), let
\be
\bm\rho_i^M=\Tr_{\bigotimes_{j\neq i}\mathcal H_j^{\otimes M}}(\bm\rho^{m\times M}),
\ee
be the reduced state over $\mathcal H_i^{\otimes M}$ of the state $\bm\rho^{m\times M}$.
By Theorem~\ref{th:fenotiid} and Eq.~(\ref{estimateFnotiidbis}) in particular,
\be
\left|F(\ket{C_i}^{\otimes M},\bm\rho_i^M)-F_{C_i}(\bm\rho)^M\right|\le3\epsilon
\ee
for $M$ large enough, or the protocol aborts in step 3, with probability greater than $1-P_{C_i}$,
where
\be
P_{C_i}=\mathcal O\left(\frac1{\poly(M,\frac1\epsilon)}\right),
\ee
for
\be
N=\mathcal O\left(\frac{M^{7+\frac{4c}p}}{\epsilon^{4+\frac{4c}p}}\right).
\ee
Replacing $\epsilon$ by $\frac\epsilon m$ and taking the union bound of the failure probabilities we obtain
\be
\ba
\left|1-\sum_{i=1}^m{\left(1-F(\ket{C_i}^{\otimes M},\bm\rho_i^M)\right)}-W_{\bm\psi}\right|&=\left|\sum_{i=1}^m{\left(F(\ket{C_i}^{\otimes M},\bm\rho_i^M)-F_{C_i}(\bm\rho)^M\right)}\right|\\
&\le\sum_{i=1}^m{\left|F(\ket{C_i}^{\otimes M},\bm\rho_i^M)-F_{C_i}(\bm\rho)^M\right|}\\
&=\mathcal O(\epsilon),
\label{finalO}
\ea
\ee
or the protocol aborts in step 3, with probability greater than $1-P_W$,
where
\be
P_W=\mathcal O\left(\frac1{\poly(m,M,\frac1\epsilon)}\right),
\ee
for
\be
N=\mathcal O\left(\frac{M^{7+\frac{2c}p}m^{4+\frac{2c}p}}{\epsilon^{4+\frac{2c}p}}\right).
\ee
By Lemma~\ref{lem:productmulti} we have
\be
1-m\left[1-F(\ket{\bm\psi}^{\otimes M},\bm\rho^{m\times M})\right]\le1-\sum_{i=1}^m{\left(1-F(\ket{C_i}^{\otimes M},\bm\rho_i^M)\right)}\le F(\ket{\bm\psi}^{\otimes M},\bm\rho^{m\times M}),
\ee
so with Eq.~(\ref{finalO}) we finally obtain
\be
1-m\left[1-F(\ket{\bm\psi}^{\otimes M},\bm\rho^{m\times M})\right]-\mathcal O(\epsilon)\le W_{\bm\psi}\le F(\ket{\bm\psi}^{\otimes M},\bm\rho^{m\times M})+\mathcal O(\epsilon)
\ee
or the protocol aborts in step 3, with probability greater than $1-P_W$,
where
\be
P_W=\mathcal O\left(\frac1{\poly(m,M,\frac1\epsilon)}\right),
\ee
for
\be
N=\mathcal O\left(\frac{M^{7+\frac{4c}p}m^{4+\frac{4c}p}}{\epsilon^{4+\frac{4c}p}}\right).
\ee

\end{proof}

\noindent Since the parameter $p\in\mathbb N^*$ may vary freely, the asymptotic scaling of the number of samples $N$ needed for a precision $\epsilon>0$ and a polynomial confidence thus is given by $N=\mathcal O(\frac{M^7m^4}{\epsilon^4})$.


\section{Optimised verification of Boson Sampling}
\label{app:optiBS}

\noindent We consider the case $M=1$, the general case being retrieved by setting $\epsilon=\frac\epsilon M$. 

Using Protocol~\ref{prot:we} directly to certify the output states of a Boson Sampling experiment yields an efficiency $N_2=\mathcal O(m^{2+\frac2p}\log m\log\frac1\delta)$ for a constant error in the estimation, with a failure probability $\delta$, where the constant prefactor scales with the free parameter $p$, giving an asymptotic scaling of $\mathcal O(m^2\log m\log\frac1\delta)$. We show in this section that a refined version of Protocol~\ref{prot:we} using a single-mode fidelity witness as a subroutine rather than a single-mode fidelity estimate (from Protocol~\ref{prot:feapp}) actually yields an efficiency $N_3=\mathcal O(m^2\log m\log\frac1\delta)$ already in the finite regime, thus proving Theorem~\ref{th:bs}.

We use the fact that the target core states appearing in the expressions of Boson Sampling output states are Fock states, in order to control the analytical error from Lemma~\ref{lem:inductiong} derived in section~\ref{app:lemmain}, which we recall below.

Let $p\in\mathbb N^*$, let $k,l\in\mathbb N$ and let $0<\eta<1$. Let $\rho=\sum_{i,j=0}^{+\infty}{\rho_{ij}\ket i\!\bra j}$ be a density operator.
Then,
\be
\underset{\alpha\leftarrow Q_\rho}{\mathbb E}[g_{k,l}^{(p)}(\alpha,\eta)]=\rho_{kl}+(-1)^{p+1}\sum_{q=p}^{+\infty}{\rho_{k+q,l+q}\eta^q\binom{q-1}{p-1}\sqrt{\binom{k+q}k\binom{l+q}l}},
\ee
where the function $g_{k,l}$ is defined in Eq.~(\ref{appg}). In particular, when $k=l$, for a target Fock state $\ket k\!\bra k$, and for $p$ even we obtain
\be
\underset{\alpha\leftarrow Q_\rho}{\mathbb E}[g_{k,k}^{(p)}(\alpha,\eta)]=\rho_{kk}-\sum_{q=p}^{+\infty}{\rho_{k+q,k+q}\eta^q\binom{q-1}{p-1}\binom{k+q}k}.
\ee
The sum on the right hand side is positive, so $\underset{\alpha\leftarrow Q_\rho}{\mathbb E}[g_{k,k}^{(p)}(\alpha,\eta)]$ is always a lower bound on the fidelity between the state $\rho$ and the Fock state $\ket k\!\bra k$ (with analytical error bounded independently of $\rho$ by Eq.~(\ref{betterboundEgkl})). Moreover, this lower bound is tight when $\rho$ is close to the Fock state $\ket k\!\bra k$. 

With Eqs.~(\ref{appgC}), (\ref{boundgC}) and (\ref{BCp}) we obtain the bounds
\begin{equation}
|g_{0,0}^{(p)}|\le\frac p\eta,
\end{equation}
for the Fock state $\ket0$,  and
\begin{equation}
|g_{1,1}^{(p)}|\le\frac{p(p+1)}{2\eta^2},
\end{equation}
for the Fock state $\ket1$.  Let $\epsilon>0$,  and let $\alpha_1,\dots,\alpha_N$ be $N$ samples from heterodyne detection of a state $\rho$ we have by Lemma~\ref{lem:Hoeffding} (Hoeffding inequality):
\begin{equation}
\Pr\left[\left|\frac1N\sum_{i=1}^N{g_{0,0}^{(p)}(\alpha_i,\eta)}-\underset{\alpha\leftarrow Q_\rho}{\mathbb E}[g_{0,0}^{(p)}(\alpha,\eta)]\right|\ge\frac\epsilon m\right] \le2\exp\left[{-\frac{N\epsilon^2\eta^2}{2p^2m^2}}\right],
\end{equation}
and
\begin{equation}
\Pr\left[\left|\frac1N\sum_{i=1}^N{g_{1,1}^{(p)}(\alpha_i,\eta)}-\underset{\alpha\leftarrow Q_\rho}{\mathbb E}[g_{1,1}^{(p)}(\alpha,\eta)]\right|\ge\frac\epsilon m\right]\le2\exp\left[{-\frac{2N\epsilon^2\eta^4}{p^2(p+1)^2m^2}}\right].
\end{equation}
In the case of Boson Sampling, the tensor product input is given by $n$ Fock states $\ket1$ and $m-n$ Fock states $\ket0$. With the union bound of the failure probabilies above,  using these single-mode fidelity witnesses in Protocol~\ref{prot:we} instead of the fidelity estimate from Protocol~\ref{prot:feapp}, we obtain an estimate with precision $\epsilon$ of a tight multimode fidelity witness with failure probability
\begin{equation}
P_{BS}^{iid}=2(m-n)\exp\left[{-\frac{N\epsilon^2\eta^2}{2p^2m^2}}\right]+2n\exp\left[{-\frac{2N\epsilon^2\eta^4}{p^2(p+1)^2m^2}}\right],
\end{equation}
when using $N$ samples from tensor product single-mode heterodyne detection. In particular, for a constant precision and a failure probability $\delta$,  it is sufficient to use $N_3=\mathcal O(m^2\log m\log\frac1\delta)$ samples from product single-mode heterodyne detection. The constant prefactor may still be optimised by playing with the choice of $p$ even and $0<\eta<1$.

Note that this optimised protocol can also be employed in the more general case of the certification of multimode quantum states of the form
\be
\left(\bigotimes_{i=1}^m\hat G_i\right)\hat U\left(\bigotimes_{i=1}^m\ket{n_i}\right),
\ee
where each state $\ket{n_i}$ is a single-mode Fock state, where $\hat U$ is an $m$-mode passive linear transformation and where each $\hat G_i$ is a single-mode Gaussian unitary operation, for all $i\in\{1,\dots,m\}$.


\end{document}